  \newtheorem{assumption}{Assumption}
  \newtheorem{remrk}{Remark}
  \newtheorem{definitn}{Definition}
  \newtheorem{lemma}{Lemma}
  \newtheorem{thm}{Theorem}
  \newtheorem{cor}{Corollary}
\author{Xiongbin~Rao,~\IEEEmembership{Student~Member,~IEEE} and~Vincent~K.~N.~Lau,~\IEEEmembership{Fellow,~IEEE}%\\
%Dept. of Electronic and Computer Engineering \\The Hong Kong University of Science and Technology\\Clear Water Bay, Kowloon, Hong Kong\\ Email: \{xrao,stevenr,eeknlau\}@ust.hk
% \thanks{Copyright (c) 2014 IEEE. Personal use of this material is permitted. However, permission to use this material for any other purposes must be obtained from the IEEE by sending a request to pubs-permissions@ieee.org.}% 
\thanks{The authors are with the Department of Electronic and Computer Engineering (ECE), the Hong Kong University of Science and Technology (HKUST), Hong Kong (e-mail: \{xrao,eeknlau\}@ust.hk).}%
 }
\begin{document}

\title{Distributed Fronthaul Compression and Joint Signal Recovery in Cloud-RAN}
\maketitle
\begin{abstract}
The cloud radio access network (C-RAN) is a promising network architecture
for future mobile communications, and one practical hurdle for its
large scale implementation is the stringent requirement of high capacity
and low latency \emph{fronthaul} connecting the \emph{distributed}
remote radio heads (RRH) to the centralized baseband pools (BBUs)
in the C-RAN. To improve the scalability of C-RAN networks, it is
very important to take the \emph{fronthaul loading }into consideration
in the signal detection, and it is very desirable to \emph{reduce
the fronthaul loading} in C-RAN systems. In this paper, we consider
uplink C-RAN systems and we propose a \emph{distributed fronthaul
compression }scheme at the distributed RRHs\emph{ }and\emph{ }a \emph{joint
recovery} algorithm at the BBUs by deploying the techniques of \emph{distributed
compressive sensing} (CS). Different from conventional distributed
CS, the CS problem in C-RAN system needs to incorporate the underlying
effect of multi-access fading for the end-to-end recovery of the transmitted
signals from the users. We analyze the performance of the proposed
end-to-end signal recovery algorithm and we show that the aggregate
measurement matrix in C-RAN systems, which contains both the distributed
fronthaul compression and multiaccess fading, can still satisfy the
\emph{restricted isometry property} with high probability. Based on
these results, we derive tradeoff results between the uplink capacity
and the fronthaul loading in C-RAN systems. \end{abstract}
\begin{keywords}
Cloud radio access network (C-RAN), distributed fronthaul compression,
joint signal recovery, active user detection, restricted isometry
property (RIP). \vspace{-0.2cm}

\end{keywords}

\section{Introduction}

Mobile data has been growing enormously in recent years, and to meet
the increasing demand, researchers have proposed various advanced
technologies to enhance the spectrum efficiency of wireless systems.
In traditional cellular networks, data detection is done locally at
each base station (BS), and a significant portion of the transmit
power is used to overcome the path loss as well as interference from
other user equipment (UEs). As a result, the cellular network is interference
limited. Recently, a lot of interference mitigation techniques which
exploit multi-cell cooperation (Cooperative Multi-Point (CoMP) processing)
have been presented \cite{tolli2008cooperative,wang2010cooperative}.
However, these techniques have stringent synchronization and backhaul
capacity and latency demands to exchange the channel fading states
and payload among different BSs. To meet these stringent requirements,
a new network architecture, namely the \emph{cloud radio access network}
(C-RAN), has been proposed \cite{mobile2011c} and it has received
lots of research interest recently \cite{chanclou2013optical,bernardos2013challenges}.
Figure \ref{fig:Illustration-of-Cloud} illustrates a C-RAN system
which consists of a number of remote radio heads (RRHs), a pool of
baseband processing units (BBUs) in the cloud as well as a high bandwidth,
low latency optical transport network (fronthaul \cite{chanclou2013optical})
connecting the RRHs to the BBU cloud. The RRHs consist of simple low
power antennas and RF components, and the baseband processing is centralized
at the BBU. Such centralization offers effective CoMP interference
mitigation and resource pooling gain in the cloud as well as BS virtualization
\cite{mobile2011c}. 

Despite various attractive features of the C-RAN, one practical hurdle
for its large scale implementation is the stringent requirement of
high capacity and low latency fronthaul connecting the RRHs to the
BBUs in the cloud. Due to the huge numbers of RRHs involved, the fronthaul
is one of the most dominant cost components in C-RAN and it is very
important to \emph{reduce the fronthaul loading} in order to improve
the scalability of C-RAN networks \cite{bernardos2013challenges}.
There are various approaches in the literature to reduce the backhaul
loading in traditional cellular networks. In \cite{ng2010linear,hong2012joint},
partial CoMP schemes via \emph{sparse precoding} are proposed to reduce
the backhaul consumption in cellular networks. In \cite{zhou2014optimized,park2013robust,park2013robuster,del2009distributed},
distributed source coding strategies are used to compress the fronthaul
signals in the uplink of C-RAN. By maximizing the weighted sum rate
with respect to the fronthaul codebooks subject to fronthaul capacity
constraint \cite{zhou2014optimized,park2013robust,park2013robuster,del2009distributed},
the fronthaul loading in C-RAN can be effectively reduced. However,
in these works \cite{ng2010linear,hong2012joint,park2013robust,park2013robuster,del2009distributed,zhou2014optimized},
the uplink signal sparsity structure is ignored and the potential
benefits brought by exploiting the uplink signal sparsity is not considered.
In practice, the UEs' uplink packets might be sparse due to bursty
transmissions of the UEs in delay-sensitive services \cite{yun2008analysis}
or the random access of the UEs (e.g., future machine-type communications
\cite{3gppM2M} involve massive machine-type UEs with low-latency
and bursty data \cite{3gppM2M,hasan2013random,Laya2014RACH}). As
such, there is huge potential to exploit the uplink user sparsity
to further reduce the fronthaul loading. In \cite{cdmamulde2009hao,schepker2011sparse,zhu2011exploiting},
the authors consider sparse uplink users in CDMA systems and compressive
sensing (CS) recovery is deployed to improve the multi-user detection
performance. However, these techniques \cite{cdmamulde2009hao,schepker2011sparse,zhu2011exploiting}
cannot be extended to the C-RAN scenarios.

In this paper, we are interested in fronthaul distributed compression
using \emph{distributed compressive sensing} and\emph{ recovery} techniques
by exploiting the signal sparsity in uplink C-RAN systems. Specifically,
each RRH compresses the uplink signal locally and sends it to the
BBU pools in the cloud. The BBU pools then jointly recover the transmitted
signals from the UEs based on the compressed uplink signals from various
RRHs without knowledge of the statistics of these uplink signals.
In the literature, distributed compressive sensing and recovery, in
which a set of jointly sparse signals are compressively sampled and
then jointly recovered, have been studied in \cite{baron2005distributed}.
However, these existing results cannot be directly applied to the
C-RAN fronthaul compression problem and there are several first order
technical challenges involved. 
\begin{itemize}
\item \textbf{Distributed Fronthaul Compression and Joint Data Recovery
with Multi-access Fading.} Classical distributed CS \cite{baron2005distributed}
concerns the joint recovery of signals that are compressed distributively
at each sensor. However, these existing techniques cannot be directly
applied to the C-RAN scenarios. While in C-RAN systems, the RRH locally
compresses the received signals, these locally compressed signals
are aggregations of the multi-user signals transmitted by different
UEs over \emph{multi-access fading,} as illustrated in Figure \ref{fig:Joint-signal-recovery}.
The target of the joint recovery in the C-RAN scenarios are the \emph{transmitted
signals} from the UEs rather than the locally received signals in
the RRH (as in conventional distributed CS \cite{baron2005distributed}),
as illustrated in Figure \ref{fig:Joint-signal-recovery}. As a result,
a new distributed CS problem formulation and recovery that incorporate
the effects of multi-access fading in C-RAN, is needed. 
\item \textbf{Robust CS Recovery Conditions with Multi-access Fading.} In
the literature, the restricted isometry property (RIP) \cite{candes2005decoding}
is commonly adopted to provide a sufficient condition for robust CS
recovery \cite{candes2008introduction}, and it is highly non-trivial
to establish a sufficient condition for the RIP of the associated
measurement matrix. Note that the RIP characterizations are known
for sub-sampled Fourier transformation matrices and random matrices
with i.i.d. sub-Gaussian entries \cite{candes2008introduction}. However,
due to the complicated multi-access fading channels between the UE
and the RRHs in C-RAN systems, the associated \emph{aggregate} CS
measurement matrix does not belong to any of the known measurement
matrix structures, and hence, conventional results about the RIP condition
cannot be applied and a new characterization of the sufficient conditions
for robust CS recovery (embracing multi-access fading) will be needed
for C-RAN. 
\item \textbf{Tradeoff Analysis between C-RAN Performance and the Fronthaul
Loading.} Besides applying CS in the signal recovery to achieve fronthaul
compression, it is also very important to quantify the closed-form
tradeoff between the capacity of the C-RAN and the fronthaul loading
at each RRH. This tradeoff result will be important to reveal design
insights and guidelines on the dimensioning of the fronthaul (e.g.,
how large a fronthaul is needed to achieve a certain capacity target
in the C-RAN). However, the closed form performance analysis will
be very challenging. 
\end{itemize}

\begin{figure}
\begin{centering}
\includegraphics[scale=0.5]{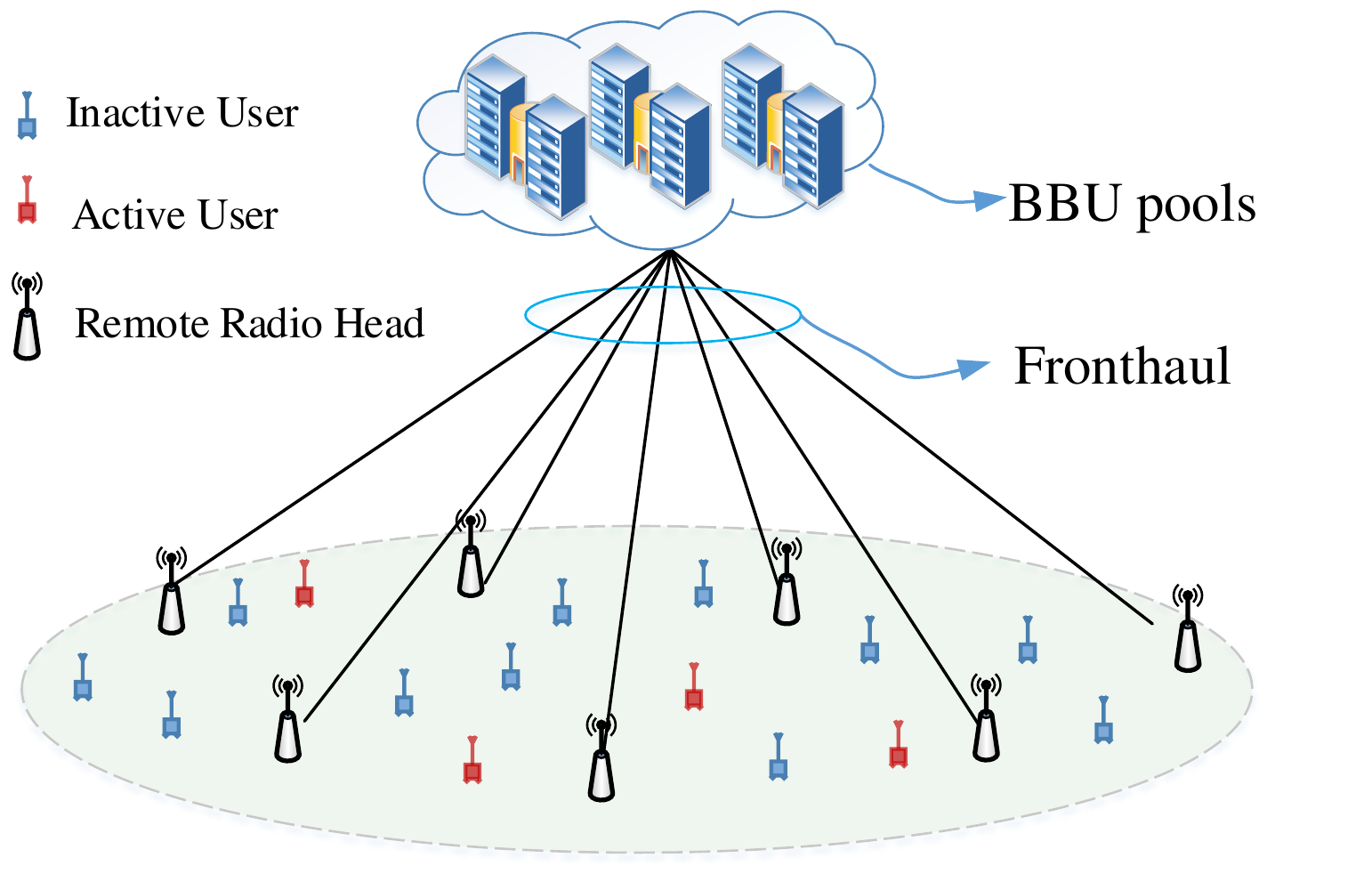}
\par\end{centering}

\protect\caption{\label{fig:Illustration-of-Cloud}Illustration of a C-RAN system.
The active user and inactive user, refer to the UEs that are transmitting
signal and keeping silent, respectively, in the considered time slot. }
\end{figure}

In this paper, we shall address the above challenges. We first introduce
the uplink C-RAN model and propose a distributed fronthaul compression
scheme at the RRHs of the C-RAN. We propose a joint signal recovery
scheme at the BBU pools, which exploits the underlying signal sparsity
of the UEs and the multi-access fading effects between the UEs and
the RRHs. Based on that, we analyze the associated CS measurement
matrix and show that the RIP condition \cite{candes2005decoding}
can be satisfied with high probability under some mild conditions.
Furthermore, we characterize the achievable uplink capacity in terms
of the compression rate on the fronthaul. From the results, we draw
simple conclusions on the tradeoff relationship between the uplink
C-RAN performance and the fronthaul loading in the C-RAN. Finally,
we verify the effectiveness of the proposed distributive fronthaul
compression scheme via simulations. 

\textit{Notation}s: Uppercase and lowercase boldface letters denote
matrices and vectors respectively. The operators $(\cdot)^{T}$, $(\cdot)^{*}$,
$(\cdot)^{H}$, $(\cdot)^{\dagger}$, $\#|\cdot|$, $\textrm{tr}\left(\cdot\right)$,
$O(\cdot)$ and $\textrm{Pr}(\cdot)$ are the transpose, conjugate,
conjugate transpose, Moore-Penrose pseudoinverse, cardinality, trace,
big-O notation, and probability operator respectively; $\mathbf{A}(q,p)$
and $\mathbf{a}(l)$ denote the $(q,p)$-th entry of $\mathbf{A}$
and the $l$-th entry of $\mathbf{a}$ respectively; $\mathbf{A}_{\Omega}$
and and $\mathbf{a}_{\Omega}$ denote the sub-matrices formed by collecting
the columns of $\mathbf{A}$ and sub-vector formed by collecting the
entries of $\mathbf{a}$, respectively, whose indexes are in set $\Omega$;
$||\mathbf{a}||_{q}$ is the $l_{q}$-norm of the vector $\mathbf{a}$
and is defined as $||\mathbf{a}||_{q}=\sqrt[q]{\sum_{i}|\mathbf{a}(i)|^{q}}$;
and $||\mathbf{A}||_{F}$, $||\mathbf{A}||$ and $||\mathbf{a}||$
denote the Frobenius norm, spectrum norm of $\mathbf{A}$ and Euclidean
norm of vector $\mathbf{a}$ respectively.

\section{System Model}

\subsection{C-RAN Topology}

Consider an \emph{uplink} C-RAN system with $M$ distributed single-antenna
RRHs where the RRHs are connected to the BBU pools via the fronthaul,
as illustrated in Figure \ref{fig:Illustration-of-Cloud}. There are
a total of $KN_{c}$ single-antenna UEs being served on $N_{c}$ subcarriers
and each subcarrier is allocated to $K$ UEs. Denote the $k$-th UE
on the $c$-th subcarrier as the $(c,k)$-th UE, and the whole set
of UEs as $\mathcal{U}=\{(c,k):c\in\{1,2,...,N_{c}\},k\in\{1,2,...,K\}\}$.
Denote $H_{ik}^{[c]}\in\mathbb{C}$ as the channel%
\footnote{In $H_{ik}^{[c]}$, $i\in\{1,\cdots M\}$, $c\in\{1,\cdots N_{c}\}$
and $(c,k)\in\mathcal{U}$ denote the RRH index, subcarrier index,
and the UE index respectively. Note that we shall frequently use this
notation rule in the entire paper.%
} from the $(c,k)$-th UE to the $i$-th RRH. The received symbol at
the $i$-th RRH on the $c$-th subcarrier $y_{i}^{[c]}$ can be expressed
as {\small{}
\begin{equation}
y_{i}^{[c]}=\sum_{k=1}^{K}H_{ik}^{[c]}x_{k}^{[c]}+n_{i}^{[c]},\; i\in\{1,\cdots M\},c\in\{1,2,...,N_{c}\},\label{eq:signal_model}
\end{equation}
}where $x_{k}^{[c]}$ is the transmitted signal by the $(c,k)$-th
user and $n_{i}^{[c]}$ is the standard complex Gaussian noise at
the $i$-th RRH on the $c$-th subcarrier.

\begin{figure}
\begin{centering}
\includegraphics[scale=0.6]{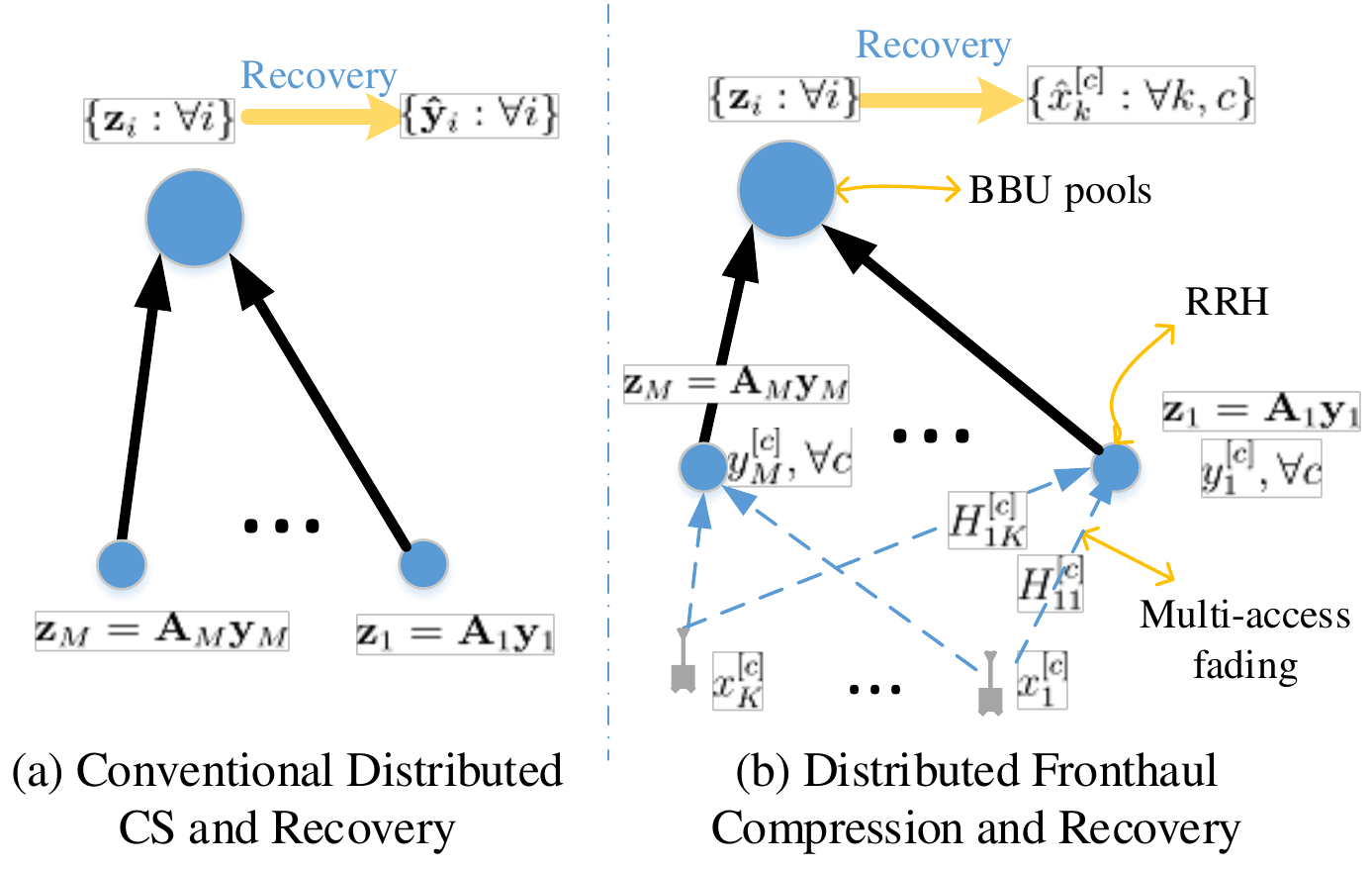}
\par\end{centering}

\protect\caption{\label{fig:Joint-signal-recovery}Illustration of conventional distributed
CS and recovery \cite{baron2005distributed,wang2009distributed} \emph{versus}
the distributed fronthaul compression and signal recovery of C-RAN
systems. Conventional distributed CS aims to recover the vectors $\{\mathbf{y}_{i}\}$
that are compressively and distributively sampled. On the other hand,
in C-RAN systems, the BBU pools aim to jointly recover the \emph{transmitted
signal} $\{x_{k}^{[c]}:\forall k,c\}$ \emph{from the UEs} based on
the distributively compressed data $\mathbf{z}_{i}=\mathbf{A}_{i}\mathbf{y}_{i}$,
$\forall i$ on the fronthaul. }
\end{figure}

Concatenate the received symbols $\{y_{i}^{[c]}:\forall c\}$ and
the noise terms $\{n_{i}^{[c]}:\forall c\}$ at the $i$-th RRH as
$\mathbf{y}_{i}=[\begin{array}{ccc}
y_{i}^{[1]} & \cdots & y_{i}^{[N_{c}]}\end{array}]^{T}\in\mathbb{C}^{N_{c}\times1}$ and $\mathbf{n}_{i}=[\begin{array}{ccc}
n_{i}^{[1]} & \cdots & n_{i}^{[N_{c}]}\end{array}]^{T}\in\mathbb{C}^{N_{c}\times1}$ respectively, and the transmitted symbols $\left\{ x_{k}^{[c]}:\forall(c,k)\in\mathcal{U}\right\} $
from all the UEs as

{\small{}
\begin{equation}
\mathbf{x}=\left[\begin{array}{ccccccc}
x_{1}^{[1]} & \cdots & x_{K}^{[1]}, & \cdots\cdots & ,x_{1}^{[N_{c}]} & \cdots & x_{K}^{[N_{c}]}\end{array}\right]^{T}\in\mathbb{C}^{KN_{c}\times1}.\label{eq:sparse_x}
\end{equation}
}The signal model (\ref{eq:signal_model}) can be re-written as 
\begin{equation}
\mathbf{y}_{i}=\mathbf{H}_{i}\mathbf{x}+\mathbf{n}_{i},\quad i\in\{1,2,\cdots,M\},\label{eq:data_symbol}
\end{equation}
where $\mathbf{H}_{i}$ is the aggregate channel matrix from the the
set of UEs $\mathcal{U}$ to the $i$-th RRH on $N_{c}$ subcarriers
and is given by {\small{}
\begin{equation}
\mathbf{H}_{i}=\left[\begin{array}{ccccccc}
H_{i1}^{[1]} & \cdots & H_{iK}^{[1]} & \mathbf{0} &  & \mathbf{0}\\
 & \mathbf{0} &  & \ddots &  & \mathbf{0}\\
 & \mathbf{0} &  & \mathbf{0} & H_{i1}^{[N_{c}]} & \cdots & H_{iK}^{[N_{c}]}
\end{array}\right]\in\mathbb{C}^{N_{c}\times KN_{c}}.\label{eq:channel_matrix}
\end{equation}
}{\small \par}

We first have the following assumption on the channel model, which
includes both the small scale fading and the large scale path gain
process.
\begin{assumption}
[Channel Model]\label{Channel-ModelThe-channel}The channel $H_{ik}^{[c]}$
is given by $H_{ik}^{[c]}=g_{ik}^{[c]}h_{ik}^{[c]}$, where $g_{ik}^{[c]}\in\mathbb{R}^{+}$
and $h_{ik}^{[c]}\in\mathbb{C}$ denote the large scale fading and
small scale fading parameters, respectively, from the $(c,k)$-th
user to the $i$-th RRH. \end{assumption}
\begin{itemize}
\item \textbf{Small Scale Fading}: The small fading parameters $\{h_{ik}^{[c]}:\forall i,k,c\}$
are i.i.d. standard complex Gaussian variables with zero mean and
unit variance and are independent of $\{g_{ik}^{[c]}:\forall i,k,c\}$. 
\item \textbf{Large Scale Fading}: Let $\mathbf{g}_{k}^{[c]}=\left[\begin{array}{ccc}
\cdots & g_{ik}^{[c]} & \cdots\end{array}\right]_{i\in\{1,...,M\}}$ be the large scale fading parameters from the $(c,k)$-th user to
the set of RRHs $\{1,...,M\}$ and $\mathbf{g}_{k}^{[c]}$ is supposed
to be normalized so that $\left\Vert \mathbf{g}_{k}^{[c]}\right\Vert _{F}^{2}=M$.
Then, $\mathbf{g}_{k}^{[c]}$ and $\mathbf{g}_{k^{'}}^{c'}$ are independent
$\forall(c,k)\neq(c^{'},k^{'})$. Furthermore, $g_{ik}^{[c]}\in[0,\overline{g}]$
is randomly distributed with unit second order moment, i.e., $\mathbb{E}(|g_{ik}^{[c]}|^{2})=1$,
$\forall c,k$. \hfill \QED
\end{itemize}

Note that the effect of normalization of $\mathbf{g}_{k}^{[c]}$ is
incorporated in the different transmit SNR $P_{i}$ for the active
UEs as in (\ref{eq:sparsity_model}). On the other hand, after the
normalization, i.e., $\sum_{i=1}^{M}(g_{ik}^{[c]})^{2}=M$, we still
have the freedom to assume that the random large scale fading parameters
satisfy $\mathbb{E}(|g_{1k}^{[c]}|^{2})=\mathbb{E}(|g_{2k}^{[c]}|^{2})=\cdots=\mathbb{E}(|g_{Mk}^{[c]}|^{2})=1$
from the \emph{uniform} geometric distributions of the users and deployments
of the RRHs. The BBU is assumed to have perfect knowledge of the channel
state information $\{H_{ik}^{[c]}:\forall i,k,c\}$ and this can be
achieved using uplink reference signals%
\footnote{For instance, the UEs send uplink reference signals regularly in TDD-LTE
systems \cite{3GPPphysical}. %
} from the UEs \cite{3GPPphysical}.

From (\ref{eq:data_symbol}), the transmitted symbol vector $\mathbf{x}$
from all the UEs has a full dimension of $KN_{c}$. However, in practice,
the wireless systems might not be fully loaded and $\mathbf{x}$ might
be sparse (e.g., due to the random access mode or the massive bursty
traffic of the UEs \cite{yun2008analysis,wu2012modeling,3gppM2M}).
We illustrate two example scenarios in which the uplink UEs signals
are sparse below. Some other application scenarios that involve sparse
user signals can also be found in \cite{huang2013applications}. 
\begin{itemize}
\item \textbf{Sparse Uplink Signals with Machine Type UEs} \cite{3gppM2M,hasan2013random,Laya2014RACH}:
Suppose the set of $KN_{c}$ machine-type UEs (MTC) \cite{3gppM2M}
access the network independently using random access channel (RACH)
\cite{Laya2014RACH} and the accessing probability of each MTC user
is $p$ ($p\ll1$ for stability \cite{3gppM2M}). Then the average
number of active UEs on the uplink per time slot is $pKN_{c}\ll KN_{c}$,
and hence the uplink signals tend to be sparse. 
\item \textbf{Sparse Uplink Signals with Bursty Applications} \cite{wu2012modeling,yun2008analysis}:
Suppose the set of $KN_{c}$ UEs is running delay-sensitive applications
\cite{wu2012modeling,yun2008analysis} and the inter-arrival time
between two data packets from each user follows from an exponential
distribution with mean of $\lambda$ ($\lambda>1$) time slots independently.
Then the average number of active UEs in the considered time slot
is $\frac{1}{\lambda}KN_{c}<KN_{c}$, and hence the uplink signals
tend to be sparse (e.g., Figure 2 in \cite{yun2008analysis}). 
\end{itemize}

Note that bursty uplink traffic cannot be scheduled effectively in
cellular systems due to the latency%
\footnote{To schedule uplink transmissions \cite{3gppM2M}, the UE first sends
a BW request through random access and the Node B then schedules the
uplink resource. As such, uplink scheduling has significant protocol
overhead and latency and it is not suitable to support massive UEs
with small burst, low latency and long duty cycle requirement. %
} concerns. For instance, in LTE-A systems \cite{3gppM2M}, the random
access is commonly advocated \cite{3gppM2M,hasan2013random,Laya2014RACH}
as an efficient protocol to accommodate a large number of machine-type
communications with massive burst of low latency data. Denote the
set of non-zero elements of $\mathbf{x}$ as $\mathcal{T}$, i.e.,
\begin{equation}
\begin{cases}
\mathbf{x}(i)\sim\mathcal{CN}(0,P_{i}), & i\in\mathcal{T}\\
\mathbf{x}(i)=0, & i\notin\mathcal{T}
\end{cases},\label{eq:sparsity_model}
\end{equation}
where $P_{i}$ is the transmit SNR of the user corresponding to $\mathbf{x}(i)$,
$i\in\mathcal{T}$. Let $s\triangleq\#\left|\mathcal{T}\right|\ll KN_{c}$
and there are $s$ \emph{active}%
\footnote{Here we call the $(c,k)$-th UE \emph{active} if it is transmitting
a signal, i.e., $x_{k}^{[c]}$ in (\ref{eq:sparse_x}) is non-zero. %
} UEs among $\mathcal{U}$ in the considered time slot. Our target
is to exploit the underlying sparsity in the uplink transmitted signals
to reduce the fronthaul loading in C-RAN systems. The proposed scheme
consists of \emph{distributed compression} at the RRH as well as efficient
\emph{joint recovery }at the BBU pools. For robust implementation,
the BBU pool in the C-RAN does not know the signal sparsity support
$\mathcal{T}$ (the set of active UEs) and the sparsity level $s\triangleq\#\left|\mathcal{T}\right|$
(the number of active UEs). We shall elaborate the distributed fronthaul
compression scheme in the next section. \vspace{-0.4cm}

\subsection{Distributed Fronthaul Compression in C-RAN Systems}

In this section, we shall propose a distributed fronthaul compression
scheme in C-RAN systems. As we can see in (\ref{eq:data_symbol}),
the RRH can compress the received signals from the multi-access fading
channel $\{\mathbf{y}_{i}:\forall i\}$ before sending them to the
BBU pools for the joint recovery of $\mathbf{x}$. Suppose the $i$-th
RRH uses a \emph{local} compression matrix $\mathbf{A}_{i}\in\mathbb{C}^{R\times N_{c}}$
to compress the $(N_{c}\times1)$ symbol vector $\mathbf{y}_{i}$
into a lower dimensional $(R\times1)$ signal vector $\mathbf{z}_{i}$
($R\leq N_{c}$) by 
\begin{equation}
\mathbf{z}_{i}=\mathbf{A}_{i}\mathbf{y}_{i}.\label{eq:compression}
\end{equation}
Then the total number of measurements in $\{\mathbf{z}_{i}:\forall i\}$
to be transmitted to the BBU pools on the fronthaul is $MR$ instead
of $MN_{c}$. This represents a compression rate of $\alpha=\dfrac{R}{N_{c}}\leq1$.
Note that when $\alpha=1$ (i.e., $R=N_{c}$), (\ref{eq:compression})
is reduced to the scenario with no compression. On the other hand,
the compression in (\ref{eq:compression}) is a simple linear operation
and hence it can be easily incorporated in the RRH of C-RAN systems. 
\begin{remrk}
[Consideration of Quantization in Fronthaul]In this paper, the fronthaul
compression is achieved by reducing the dimensions of the signal $\mathbf{z}_{i}$
to be sent from the RRH to the BBU cloud. In practice, the absolute
fronthaul loading (b/s) depends on both the \emph{dimensions of} $\mathbf{z}_{i}$
(i.e., $R$) and\emph{ the quantization}. The \emph{dimension reduction
of }$\mathbf{z}_{i}$\emph{ at the RRH} plays a first order role in
fronthaul compression because a smaller signal dimension means a smaller
number of complex numbers need to be quantized from the RRH to the
BBU. For instance, from classical quantization theory \cite{zador1982asymptotic,dai2011information},
to keep a constant distortion $\Delta\triangleq\frac{\left\Vert \mathbf{z}_{i}-\mathbf{\hat{z}}_{i}\right\Vert ^{2}}{\left\Vert \mathbf{z}_{i}\right\Vert ^{2}}$
(where $\mathbf{\hat{z}}_{i}$ denotes the quantized vector of $\mathbf{z}_{i}$),
the required number of bits for quantization should scale linearly
with the signal dimension $R$ and is given by $B=\mathcal{O}(R\log\frac{1}{\Delta})$.
As such, there is a genuine compression in (\ref{eq:compression})
even when quantization effect is included. 
\end{remrk}

Before we describe the distributed fronthaul compression scheme, we
first give the generation method for the local compression matrix
$\{\mathbf{A}_{i}\}$.
\begin{definitn}
[Local Compression Matrices $\mathbf{A}_{i}$]\label{Local-Compression-Matrices}The
$(r,c)$-th entry of the $R\times N_{c}$ matrix $\mathbf{A}_{i}$
is given by $\mathbf{A}_{i}(r,c)=\sqrt{\frac{1}{MR}}\exp(j\theta_{irc})$,
where $\theta_{irc}$ is i.i.d. drawn from the uniform distribution
over $[0,2\pi)$. 
\end{definitn}

\begin{remrk}
[Interpretation of Definition \ref{Local-Compression-Matrices}]Note
that $\mathbf{A}_{i}$ serves as the local \emph{measurement matrix}
of $\mathbf{y}_{i}$, as in (\ref{eq:compression}). In the CS literature,
it is shown that efficient and robust CS recovery can be achieved
when the measurement matrix satisfies a proper RIP condition \cite{candes2008introduction},
and measurement matrices randomly generated from sub-Gaussian distribution
\cite{candes2008introduction} can satisfy the RIP with overwhelming%
\footnote{For an $R\times N$ matrix $\Phi$ i.i.d. generated from random sub-Gaussian
distributions, it is shown that when $R=c_{1}s\log N$, the probability
that $\Phi$ fails to satisfy the $s$-th RIP with $\delta$ decays
exponentially w.r.t. $R$ as $O(\textrm{exp}(-c_{2}R)$, where $c_{1}$
and $c_{2}$ are positive constants depending on $\delta$ \cite{candes2008introduction}.%
} probability \cite{candes2008introduction}. As such, this randomized
generation method has been widely used in the literature and we adopt
this conventional approach to generate the local compression matrix
$\{\mathbf{A}_{i}\}$ (as in Definition \ref{Local-Compression-Matrices})
with a good RIP. The factor $\sqrt{\frac{1}{MR}}$ in each entry in
$\mathbf{A}_{i}$ is for normalization. On the other hand, the compression
matrices $\{\mathbf{A}_{i}\}$ are generated offline so that it is
available at the BBU%
\footnote{For instance, the compression matrices $\{\mathbf{A}_{i}\}$ can be
initialized from the BBU and the distributed to the RRH during the
setup of the C-RAN so that the BBU has the knowledge of $\{\mathbf{A}_{i}\}$%
}. 
\end{remrk}

The overall distributed fronthaul compression scheme in the C-RAN
system can be described as follows (as illustrated in Figure \ref{fig:Joint-signal-recovery}):

\emph{Algorithm 1 (Distributed Fronthaul Compression)}
\begin{itemize}
\item \textbf{Step 1} \emph{(Reception at RRHs)}: The $i$-th RRH receives
the channel outputs $\mathbf{y}_{i}=\mathbf{H}_{i}\mathbf{x}+\mathbf{n}_{i}$
as in (\ref{eq:data_symbol}).
\item \textbf{Step 2} \emph{(Distributed Compression on the Fronthaul)}:
The $i$-th RRH compresses the received $N_{c}$-dimensional data
vector $\mathbf{y}_{i}$ into an $R$-dimensional vector $\mathbf{z}_{i}$
with a local compression matrix $\mathbf{A}_{i}\in\mathbb{C}^{R\times N_{c}}$
by $\mathbf{z}_{i}=\mathbf{A}_{i}\mathbf{y}_{i}$, as in (\ref{eq:compression}).
The $i$-th RRH sends the compressed vector $\mathbf{z}_{i}$ to the
BBU pools on the fronthaul. 
\item \textbf{Step 3} \emph{(Centralized Signal Recovery at BBU Pools)}:
The BBU pools collect the compressed data symbols from all RRHs on
the fronthaul, i.e., $\{\mathbf{z}_{i}:i=1,2,...,M\}$ and then jointly
recover the transmitted signal $\{x_{i}^{[c]}:\forall i,c\}$ from
the UEs. \hfill \QED
\end{itemize}

Note that in Algorithm 1, the remaining question is how to conduct
the signal recovery at the BBUs (Step 3) from the measurements $\{\mathbf{z}_{i}\}$
with reduced dimension (i.e., $MR\leq MN_{c}$). Different from classical
distributed CS \cite{wang2009distributed,baron2005distributed}, in
which the objective is to recover the signal vectors that are compressed
distributively (as illustrated in Figure \ref{fig:Joint-signal-recovery}(a)),
our goal is to recover the transmitted signal $\mathbf{x}$ from the
UEs (as in Figure \ref{fig:Joint-signal-recovery}(b)). Existing results
on the joint recovery of distributed CS \cite{baron2005distributed,wang2009distributed}
cannot be applied, and we need to extend these results to incorporate
the underlying multiaccess fading channels $\{H_{ik}^{[c]}:\forall i,k,c\}$
in C-RAN systems. \vspace{-0.5cm}

\begin{center}
\fbox{\begin{minipage}[t]{1\columnwidth}%
Challenge 1: End-to-end signal recovery of $\mathbf{x}$ at the BBU
from distributed CS measurements at the RRHs.%
\end{minipage}}
\par\end{center}

\section{CS-enabled Signal Recovery in C-RAN Systems}

In this section, we first elaborate the proposed signal recovery (i.e.,
step 3 in Algorithm 1), which exploits the signal sparsity and the
structure of the multi-access fading channel to recover the transmitted
signals $\{x_{i}^{[c]}:\forall i,c\}$ by the $KN_{c}$ UEs. Based
on that, we establish sufficient conditions on the required number
of measurements at RRHs to achieve correct active user detection \emph{with
high probability} in the proposed algorithm. Note that based on the
performance result of the correct active user detection, we shall
further quantify the C-RAN capacity in Section IV.\vspace{-0.5cm}

\subsection{End-to-End Sparse Signal Recovery Algorithm}

The proposed signal recovery algorithm at the BBU pools consists of
three major components: i) rough signal estimation of $\hat{\mathbf{x}}$
using CS techniques; ii) active user detection $\hat{\mathcal{T}}$
based on $\hat{\mathbf{x}}$; and iii) zero-forcing (ZF) receiver
based on $\hat{\mathcal{T}}$. Figure \ref{fig:Illustration-of-Algorithm2}
illustrates the block diagram of the signal recovery algorithm at
the BBU pools. Note that the proposed recovery algorithm is different
from the standard CS recovery in \cite{candes2005decoding,tropp2007signal,needell2009cosamp}
because the proposed recovery algorithm has embraced the multi-access
fading channel $\{H_{ik}^{[c]}:\forall i,k,c\}$ in the recovery process.
The three components are elaborated in detail below. 
\begin{itemize}
\item \textbf{Rough signal estimation of }$\hat{\mathbf{x}}$\textbf{ using
CS techniques}: We first formulate the rough signal estimation of
$\hat{\mathbf{x}}$ into a standard CS problem. First, concatenate
the compressed symbol vectors $\{\mathbf{z}_{i}\in\mathbb{C}^{R\times1}:i=1,2,...,M\}$
into a long vector to be $\mathbf{z}=\left[\begin{array}{ccc}
\mathbf{z}_{1}^{T} & \cdots & \mathbf{z}_{M}^{T}\end{array}\right]^{T}\in\mathbb{C}^{MR\times1}$. Then, (\ref{eq:data_symbol}) can be equivalently written as 
\begin{equation}
\mathbf{z}=\Theta\mathbf{x}+\mathbf{n},\label{eq:standard_model}
\end{equation}
 where $\mathbf{n}$ is the aggregate measurement noise 
\begin{equation}
\mathbf{n}=\left[\begin{array}{ccc}
(\mathbf{A}_{1}\mathbf{n}_{1})^{T} & \cdots & (\mathbf{A}_{M}\mathbf{n}_{M})^{T}\end{array}\right]^{T}\in\mathbb{C}^{MR\times1},\label{eq:aggregate_noise}
\end{equation}
and $\Theta$ is given by (a detailed expression of $\Theta$ is also
given in (\ref{eq:expansion}))
\begin{equation}
\Theta=\left[\begin{array}{ccc}
(\mathbf{A}_{1}\mathbf{H}_{1})^{T} & \cdots & (\mathbf{A}_{M}\mathbf{H}_{M})^{T}\end{array}\right]^{T}\in\mathbb{C}^{MR\times KN_{c}}.\label{eq:theta}
\end{equation}
Therefore, (\ref{eq:standard_model}) matches the standard CS model
in which $\mathbf{z}$ is the measurements, $\Theta$ is the aggregate
\emph{measurement matrix}, $\mathbf{x}$ is the sparse signal vector
to be recovered and $\mathbf{n}$ is the CS measurement noise (with
$\mathbb{E}\left(||\mathbf{n}||^{2}\right)=N_{c}$). Note that knowledge
of $\Theta$ can be available at the BBU from (\ref{eq:theta}) and
the fact that both $\{\mathbf{H}_{i}\}$ and $\{\mathbf{A}_{i}\}$
are available at the BBU. Since the above formulation (\ref{eq:standard_model})
has incorporated the channel fading of $\{H_{ik}^{[c]}\}$ into the
CS measurement model, it enables us to conduct the end-to-end recovery
of $\mathbf{x}$. Using the classical basis%
\footnote{Among various CS recovery algorithms \cite{candes2005decoding,tropp2007signal,needell2009cosamp,dai2009subspace}
(including BP \cite{candes2005decoding}, OMP \cite{tropp2007signal},
CoSaMP \cite{needell2009cosamp} and SP \cite{dai2009subspace}),
the BP is shown to has competitive theoretical as well as empirical
recovery performance \cite{candes2005decoding}, and it does not require
the knowledge of the sparsity level (i.e., $\#|\mathcal{T}|$). Therefore,
in this paper, we adopt the BP approach to conduct the rough estimation
of $\hat{\mathbf{x}}$ in Step 1. %
} pursuit (BP) \cite{candes2005decoding} CS recovery technique, the
rough signal recovery of $\hat{\mathbf{x}}$ can be formulated as
\begin{eqnarray}
\mathcal{P}_{1}:\qquad\min_{\hat{\mathbf{x}}} &  & \left\Vert \hat{\mathbf{x}}\right\Vert _{1}\nonumber \\
\mbox{s.t.} &  & \left\Vert \Theta\hat{\mathbf{x}}-\mathbf{z}\right\Vert \leq\lambda.\label{eq:orginal_form-1}
\end{eqnarray}
where $\lambda$ is a proper threshold parameter. 
\item \textbf{Active user detection $\hat{\mathcal{T}}$ based on} $\hat{\mathbf{x}}$:
The following criterion is proposed to detect the set of active UEs
$\hat{\mathcal{T}}$: {\small{}
\begin{eqnarray}
\hat{\mathcal{T}} & = & \arg\min_{\hat{\mathcal{T}}}\#|\hat{\mathcal{T}}|,\label{eq:active_user_detection}\\
 &  & \mbox{s.t.}\begin{cases}
|\mathbf{\hat{x}}(i)|\geq|\mathbf{\hat{x}}(j)|,\;\forall i\in\hat{\mathcal{T}},\quad j\notin\mathcal{T},\\
\left\Vert \left(\mathbf{I}-\Theta_{\hat{\mathcal{T}}}\Theta_{\hat{\mathcal{T}}}^{\dagger}\right)\mathbf{z}\right\Vert \leq\lambda\;\mbox{OR}\;\#|\hat{\mathcal{T}}|=MR.
\end{cases}\nonumber 
\end{eqnarray}
}where $\Theta_{\hat{\mathcal{T}}}^{\dagger}\triangleq(\Theta_{\hat{\mathcal{T}}}^{H}\Theta_{\hat{\mathcal{T}}})^{-1}\Theta_{\hat{\mathcal{T}}}^{H}$
denotes the Moore-Penrose pseudoinverse of $\Theta_{\mathcal{\hat{T}}}$.
\item \textbf{Zero-forcing (ZF) receiver based on} $\hat{\mathcal{T}}$:
The BBU pools apply the zero-forcing receiver $\mathbf{U}=(\Theta_{\mathcal{\hat{T}}}^{\dagger})^{H}$
\cite{jiang2011performance} to recover the transmitted signals from
the UEs in $\hat{\mathcal{T}}$, i.e., $\hat{\mathbf{x}}_{\hat{\mathcal{T}}}=\Theta_{\mathcal{T}}^{\dagger}\mathbf{z}$,
$\hat{\mathbf{x}}_{\{1,..,KN_{c}\}\backslash\hat{\mathcal{T}}}=\mathbf{0}$. 
\end{itemize}

The overall signal recovery at the BBU pools is summarized in Algorithm
2 (as well as in Figure \ref{fig:Illustration-of-Algorithm2}).

\begin{figure}
\begin{centering}
\includegraphics{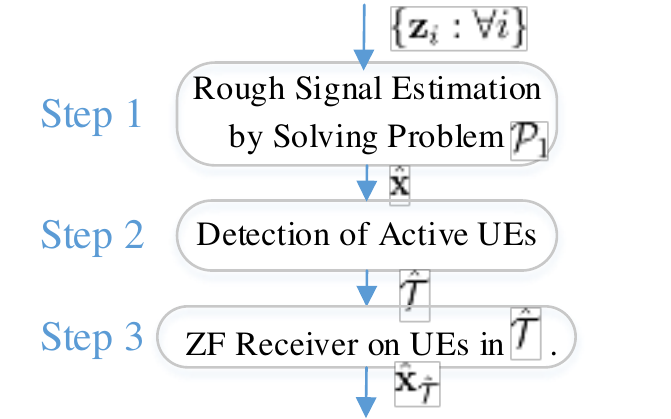}
\par\end{centering}

\protect\caption{\label{fig:Illustration-of-Algorithm2}Illustration of the three major
steps in signal recovery at the BBU pools (Algorithm 2).\vspace{-0.5cm}
}

\end{figure}

\emph{Algorithm 2 (Joint Signal Recovery in Uplink C-RAN Systems with
CS Techniques)}
\begin{itemize}
\item \textbf{Step 1 }\emph{(Rough Signal Estimation Using BP)}: Solve Problem
$\mathcal{P}_{1}$ to obtain $\hat{\mathbf{x}}$.
\item \textbf{Step 2 }\emph{(Active User Detection)}: Find the solution
$\hat{\mathcal{T}}$ to (\ref{eq:active_user_detection}) via the
following steps: First sort $\{|\mathbf{\hat{x}}(i)|:i=1,..,KN_{c}\}$
in descending order to obtain the sorted indices $\{i_{1},i_{2},..,i_{KN_{c}}\}$,
initialize $k=1$ and $\hat{\mathcal{T}}=\{i_{1}\}$ and then execute
the following to find $\hat{\mathcal{T}}$. 

\begin{itemize}
\item \emph{(Greedy Detection)}:If $\left\Vert \left(\mathbf{I}-\Theta_{\hat{\mathcal{T}}}\Theta_{\hat{\mathcal{T}}}^{\dagger}\right)\mathbf{z}\right\Vert \leq\lambda$
or%
\footnote{Note that we have constrained the maximum size of $\hat{\mathcal{T}}$
to be $MR$ as in (\ref{eq:active_user_detection}). This is because
we have a total of $MR$ observations at the BBU so that we can recover
$MR$ unknowns at most. %
} $k=MR$, then stop and output $\hat{\mathcal{T}}$. Else, update
$k=k+1$, $\hat{\mathcal{T}}=\hat{\mathcal{T}}\bigcup\{i_{k}\}$ and
repeat. 
\end{itemize}
\item \textbf{Step 3} \emph{(Zero-forcing Receiver)}: Based on the estimated
set of UEs $\hat{\mathcal{T}}$, apply the linear zero-forcing receiver
$\mathbf{U}=(\Theta_{\mathcal{\hat{T}}}^{\dagger})^{H}$ \cite{jiang2011performance}
to recover the transmitted signals from the UEs in $\hat{\mathcal{T}}$,
i.e., $\hat{\mathbf{x}}_{\hat{\mathcal{T}}}=(\Theta_{\mathcal{\hat{T}}}^{\dagger})\mathbf{z}$,
$\hat{\mathbf{x}}_{\{1,..,KN_{c}\}\backslash\hat{\mathcal{T}}}=\mathbf{0}$. 
\end{itemize}

\begin{remrk}
[Implementation Consideration of Algorithm 2]Note that the computation
of Algorithm 2 is dominated by solving the $l_{1}$-norm minimization
(Problem $\mathcal{P}_{1}$), which can be accomplished with a complexity
of $O\left(\left(KN_{c}\right)^{3}\right)$ \cite{candes2008introduction}.
On the other hand, Algorithm 2 does not require knowledge of the sparsity
level $\#|\mathcal{T}|$ (i.e., the number of active UEs) because
this can be automatically detected in Step 2 of Algorithm 2. Finally,
in Step 3 of Algorithm 2, the classical ZF receiver is adopted because
of its simplicity and asymptotical optimality in high SNR regions
\cite{jiang2011performance}. \hfill \QED
\end{remrk}

\begin{figure*}
\begin{centering}
\vspace*{-0.5cm}

{\begin{equation}{} \Theta=\left[\begin{array}{ccc} \left[\begin{array}{ccc} H_{11}^{[1]}\mathbf{A}_{1}(1,1) & \cdots & H_{1K}^{[1]}\mathbf{A}_{1}(1,1)\\ \vdots & \ddots & \vdots\\ H_{11}^{[1]}\mathbf{A}_{1}(R,1) & \cdots & H_{1K}^{[1]}\mathbf{A}_{1}(R,1) \end{array}\right] & \cdots & \left[\begin{array}{ccc} H_{11}^{[N_{c}]}\mathbf{A}_{1}(1,N_{c}) & \cdots & H_{1K}^{[N_{c}]}\mathbf{A}_{1}(1,N_{c})\\ \vdots & \ddots & \vdots\\ H_{11}^{[N_{c}]}\mathbf{A}_{1}(R,N_{c}) & \cdots & H_{1K}^{[N_{c}]}\mathbf{A}_{1}(R,N_{c}) \end{array}\right]\\ \vdots & \ddots & \vdots\\ \left[\begin{array}{ccc} H_{M1}^{[1]}\mathbf{A}_{M}(1,1) & \cdots & H_{MK}^{[1]}\mathbf{A}_{M}(1,1)\\ \vdots & \ddots & \vdots\\ H_{M1}^{[1]}\mathbf{A}_{M}(R,1) & \cdots & H_{MK}^{[1]}\mathbf{A}_{M}(R,1) \end{array}\right] & \cdots & \left[\begin{array}{ccc} H_{M1}^{[N_{c}]}\mathbf{A}_{M}(1,N_{c}) & \cdots & H_{MK}^{[N_{c}]}\mathbf{A}_{M}(1,N_{c})\\ \vdots & \ddots & \vdots\\ H_{M1}^{[N_{c}]}\mathbf{A}_{M}(R,N_{c}) & \cdots & H_{MK}^{[N_{c}]}\mathbf{A}_{M}(R,N_{c}) \end{array}\right] \end{array}\right]\label{eq:expansion} \end{equation} }
\hrulefill
\vspace*{-0.5cm}
\end{centering}
\end{figure*}

\subsection{Analysis of Correct Active User Detection in Algorithm 2}

Note that when the detected set of active UEs in Algorithm 2 is correct,
i.e., $\hat{\mathcal{T}}=\mathcal{T}$, Step 3 in Algorithm 2\emph{
}will be\emph{ reduced to classical ZF} on the set of the active uplink
UEs $\mathcal{T}$ directly. This is the desired performance and it
is also asymptotically optimal because of the asymptotical optimality
of ZF in high SNR \cite{jiang2011performance}. Under correct active
user detection, further analysis \cite{jiang2011performance} on ZF
can also be conducted to give further characterization of the C-RAN
communication performance (e.g., sum capacity, which will be discussed
in Section IV). Therefore, it is critical to first understand the
performance of active user detection in Algorithm 2.\vspace{-0.5cm}

\begin{center}
\fbox{\begin{minipage}[t]{1\columnwidth}%
Challenge 2: Analyze the probability of correct active user detection,
i.e., $\hat{\mathcal{T}}=\mathcal{T}$ in Algorithm 2. %
\end{minipage}}
\par\end{center}

In Algorithm 2, the rough estimation of $\hat{\mathbf{x}}$ in \textbf{Step
1} plays an important role in the detection of active user $\hat{\mathcal{T}}$
in \textbf{Step 2}. As such, we shall first characterize Step 1 in
Algorithm 2 (the BP in Problem $\mathcal{P}_{1}$). In the literature,
RIP \cite{candes2005decoding} is commonly adopted to facilitate the
performance analysis of the BP. We first review the notion of the
RIP in Definition \ref{RIPA-matrix} and the associated performance
result of BP under RIP in Lemma \ref{Robust-Recovery-under}.
\begin{definitn}
[Restricted Isometry Property \cite{candes2005decoding}]\label{RIPA-matrix}Matrix
$\Theta\in\mathbb{C}^{MR\times KN_{c}}$ satisfies a $k$-th order
RIP with a prescribed\emph{ restricted isometry constant} (RIC) $\delta$
if $0\leq\delta<1$ and 
\[
(1-\delta)\left\Vert \mathbf{x}\right\Vert ^{2}\leq\left\Vert \Theta\mathbf{x}\right\Vert ^{2}\leq(1+\delta)\left\Vert \mathbf{x}\right\Vert ^{2}
\]
holds for all $\mathbf{x}\in\mathbb{C}^{KN_{c}\times1}$ where $\left\Vert \mathbf{x}\right\Vert _{0}\leq k$.
\hfill \QED
\end{definitn}

Denote event $\mathcal{\mathcal{E}}_{k,\delta}$ as follows:
\begin{equation}
\mathcal{\mathcal{E}}_{k,\delta}:\qquad\Theta\mbox{ satisfies a \ensuremath{k}-th order RIP with RIC \ensuremath{\delta}}.\label{eq:event_rip}
\end{equation}
From \cite{candes2005decoding}, we have the following lemma on the
recovery performance regarding problem $\mathcal{P}_{1}$. 
\begin{lemma}
[Robust Recovery of BP under RIP \cite{candes2008introduction}]\label{Robust-Recovery-under}Suppose
the threshold parameter $\lambda$ in $\mathcal{P}_{1}$ satisfies
$\lambda\geq\left\Vert \mathbf{n}\right\Vert _{F}$. If event $\mathcal{\mathcal{E}}_{2s,\delta}$
happens, where $s\triangleq\#\left|\mathcal{T}\right|\ll KN_{c}$,
and the RIC $\delta$ satisfies $\delta<\sqrt{2}-1$, then the solution
$\hat{\mathbf{x}}$ to $\mathcal{P}_{1}$ satisfies 
\begin{equation}
\left\Vert \mathbf{x}-\hat{\mathbf{x}}\right\Vert \leq c_{2}\lambda,\label{eq:error_bound}
\end{equation}
where $c_{1}=\frac{\lambda^{2}-N_{c}}{4N_{c}}$ and $c_{2}=4\frac{\sqrt{1+\delta}}{1-(1+\sqrt{2})\delta}$.
\hfill \QED
\end{lemma}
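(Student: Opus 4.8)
The plan is to recognize that this statement is the classical robust recovery guarantee of basis pursuit denoising (Theorem 1.3 of \cite{candes2008introduction}) specialized to an exactly $s$-sparse signal, so the task is to reproduce the standard argument while carefully tracking constants to land on precisely $c_{2}=4\frac{\sqrt{1+\delta}}{1-(1+\sqrt{2})\delta}$. I set the error vector $\mathbf{h}=\hat{\mathbf{x}}-\mathbf{x}$ and aim to bound $\left\Vert \mathbf{h}\right\Vert$. Note that $c_{1}$ does not enter the deterministic bound \eqref{eq:error_bound}; it is an auxiliary quantity relating $\lambda$ to the noise power $\mathbb{E}(\left\Vert \mathbf{n}\right\Vert^{2})=N_{c}$ for later use, so only $c_{2}$ needs to be derived here.

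First I would establish two constraints on $\mathbf{h}$. For the \emph{tube} constraint, since $\lambda\geq\left\Vert \mathbf{n}\right\Vert_{F}=\left\Vert \Theta\mathbf{x}-\mathbf{z}\right\Vert$, the true signal $\mathbf{x}$ is feasible for $\mathcal{P}_{1}$; combining this with feasibility of $\hat{\mathbf{x}}$ and the triangle inequality gives $\left\Vert \Theta\mathbf{h}\right\Vert\leq\left\Vert \Theta\hat{\mathbf{x}}-\mathbf{z}\right\Vert+\left\Vert \mathbf{z}-\Theta\mathbf{x}\right\Vert\leq2\lambda$. For the \emph{cone} constraint, $\ell_{1}$-optimality gives $\left\Vert \hat{\mathbf{x}}\right\Vert_{1}\leq\left\Vert \mathbf{x}\right\Vert_{1}$; decomposing over the support $\mathcal{T}$ and its complement and using $\mathbf{x}_{\mathcal{T}^{c}}=\mathbf{0}$ yields $\left\Vert \mathbf{h}_{\mathcal{T}^{c}}\right\Vert_{1}\leq\left\Vert \mathbf{h}_{\mathcal{T}}\right\Vert_{1}$. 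Next comes the shelling argument: I order the entries of $\mathbf{h}_{\mathcal{T}^{c}}$ by decreasing magnitude and partition $\mathcal{T}^{c}$ into consecutive blocks $\mathcal{T}_{1},\mathcal{T}_{2},\dots$ of size $s$, writing $\mathcal{T}_{0}:=\mathcal{T}$ and $\mathcal{T}_{01}:=\mathcal{T}_{0}\cup\mathcal{T}_{1}$ (so $\#|\mathcal{T}_{01}|\leq2s$). The entrywise bound $\left\Vert \mathbf{h}_{\mathcal{T}_{j+1}}\right\Vert\leq s^{-1/2}\left\Vert \mathbf{h}_{\mathcal{T}_{j}}\right\Vert_{1}$ telescopes, and with the cone constraint gives $\sum_{j\geq2}\left\Vert \mathbf{h}_{\mathcal{T}_{j}}\right\Vert\leq s^{-1/2}\left\Vert \mathbf{h}_{\mathcal{T}^{c}}\right\Vert_{1}\leq s^{-1/2}\left\Vert \mathbf{h}_{\mathcal{T}}\right\Vert_{1}\leq\left\Vert \mathbf{h}_{\mathcal{T}}\right\Vert\leq\left\Vert \mathbf{h}_{\mathcal{T}_{01}}\right\Vert$.

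Then I would invoke event $\mathcal{E}_{2s,\delta}$. The lower RIP gives $(1-\delta)\left\Vert \mathbf{h}_{\mathcal{T}_{01}}\right\Vert^{2}\leq\left\Vert \Theta\mathbf{h}_{\mathcal{T}_{01}}\right\Vert^{2}=\langle\Theta\mathbf{h}_{\mathcal{T}_{01}},\Theta\mathbf{h}\rangle-\sum_{j\geq2}\langle\Theta\mathbf{h}_{\mathcal{T}_{01}},\Theta\mathbf{h}_{\mathcal{T}_{j}}\rangle$. The first inner product is at most $\left\Vert \Theta\mathbf{h}_{\mathcal{T}_{01}}\right\Vert\left\Vert \Theta\mathbf{h}\right\Vert\leq\sqrt{1+\delta}\,\left\Vert \mathbf{h}_{\mathcal{T}_{01}}\right\Vert\cdot2\lambda$ by the upper RIP and the tube constraint. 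For the cross terms I use the near-orthogonality consequence of the RIP, $|\langle\Theta\mathbf{u},\Theta\mathbf{v}\rangle|\leq\delta\left\Vert \mathbf{u}\right\Vert\left\Vert \mathbf{v}\right\Vert$ for $\mathbf{u},\mathbf{v}$ on disjoint supports, applied to $\mathbf{h}_{\mathcal{T}_{0}}$ and $\mathbf{h}_{\mathcal{T}_{1}}$ against each $\mathbf{h}_{\mathcal{T}_{j}}$; combined with $\left\Vert \mathbf{h}_{\mathcal{T}_{0}}\right\Vert+\left\Vert \mathbf{h}_{\mathcal{T}_{1}}\right\Vert\leq\sqrt{2}\left\Vert \mathbf{h}_{\mathcal{T}_{01}}\right\Vert$ and the shelling bound, these aggregate to at most $\sqrt{2}\,\delta\left\Vert \mathbf{h}_{\mathcal{T}_{01}}\right\Vert^{2}$. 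Collecting terms and dividing by $\left\Vert \mathbf{h}_{\mathcal{T}_{01}}\right\Vert$ yields $(1-\delta)\left\Vert \mathbf{h}_{\mathcal{T}_{01}}\right\Vert\leq2\sqrt{1+\delta}\,\lambda+\sqrt{2}\,\delta\left\Vert \mathbf{h}_{\mathcal{T}_{01}}\right\Vert$, hence $\left\Vert \mathbf{h}_{\mathcal{T}_{01}}\right\Vert\leq\frac{2\sqrt{1+\delta}}{1-(1+\sqrt{2})\delta}\lambda$, where $\delta<\sqrt{2}-1$ is exactly the condition making the denominator positive. Finally $\left\Vert \mathbf{h}\right\Vert\leq\left\Vert \mathbf{h}_{\mathcal{T}_{01}}\right\Vert+\sum_{j\geq2}\left\Vert \mathbf{h}_{\mathcal{T}_{j}}\right\Vert\leq2\left\Vert \mathbf{h}_{\mathcal{T}_{01}}\right\Vert=c_{2}\lambda$.

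The main obstacle is the constant bookkeeping in the near-orthogonality step: proving and correctly applying the disjoint-support inner-product bound $|\langle\Theta\mathbf{u},\Theta\mathbf{v}\rangle|\leq\delta\left\Vert \mathbf{u}\right\Vert\left\Vert \mathbf{v}\right\Vert$ (a standard but nontrivial RIP consequence), and verifying that the cross terms aggregate to exactly $\sqrt{2}\,\delta\left\Vert \mathbf{h}_{\mathcal{T}_{01}}\right\Vert^{2}$ so that the final factor of $2$ from $\left\Vert \mathbf{h}\right\Vert\leq2\left\Vert \mathbf{h}_{\mathcal{T}_{01}}\right\Vert$ produces precisely $c_{2}=4\frac{\sqrt{1+\delta}}{1-(1+\sqrt{2})\delta}$. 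Everything else is routine triangle-inequality and $\ell_{1}/\ell_{2}$ norm manipulation; since the result is a verbatim specialization of \cite{candes2008introduction}, one could alternatively cite it directly, but the sketch above recovers the stated constant explicitly.
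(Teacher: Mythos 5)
The paper does not prove this lemma at all — it is stated as a direct citation of the classical basis-pursuit-denoising guarantee from \cite{candes2008introduction} — and your reconstruction is precisely the standard Cand\`es argument (tube and cone constraints, shelling into size-$s$ blocks, the disjoint-support near-orthogonality bound under $\mathcal{E}_{2s,\delta}$), with the constants tracked correctly to yield exactly $c_{2}=4\frac{\sqrt{1+\delta}}{1-(1+\sqrt{2})\delta}$ and the condition $\delta<\sqrt{2}-1$ emerging as positivity of the denominator. Your proof is correct and takes essentially the same route as the cited source the paper relies on.
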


Based on Lemma \ref{Robust-Recovery-under}, we obtain the following\emph{
conditions} for correct active user detection and the associated performance
result in Algorithm 2. 
\begin{thm}
[Correct Active User Detection Conditions]\label{auxiliary_lemma}Suppose
$\lambda\geq||\mathbf{n}||_{F}$ and event $\mathcal{\mathcal{E}}_{2s,\delta}$
happens where $\delta<\sqrt{2}-1$. If event $\mathcal{\mathcal{E}}_{1}$
is true where {\small{}
\[
\mathcal{\mathcal{E}}_{1}:\;|\mathbf{x}(i)|>\sqrt{2}c_{2}\lambda,\;\forall i\in\mathcal{T},
\]
} $c_{2}$ is in Lemma \ref{Robust-Recovery-under}, then 

1) the detected $\hat{\mathcal{T}}$ in Step 2 of Algorithm 2 is correct,
i.e., $\hat{\mathcal{T}}=\mathcal{T}$.

2) the recovered signal $\hat{\mathbf{x}}$ from Step 3 of Algorithm
2 satisfies $\left\Vert \mathbf{x}-\hat{\mathbf{x}}\right\Vert \leq\frac{1}{\sqrt{1-\delta}}\left\Vert \mathbf{n}\right\Vert $.\end{thm}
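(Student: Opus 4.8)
The plan is to chain the BP error guarantee of Lemma~\ref{Robust-Recovery-under} through the greedy detection and ZF steps. First I would invoke Lemma~\ref{Robust-Recovery-under} under the hypotheses $\lambda\geq\|\mathbf{n}\|_{F}$ and $\mathcal{E}_{2s,\delta}$ with $\delta<\sqrt{2}-1$ to secure the $\ell_{2}$ bound $\|\mathbf{x}-\hat{\mathbf{x}}\|\leq c_{2}\lambda$ on the rough BP estimate from Step~1. The crucial observation is that for any fixed pair $i\in\mathcal{T}$, $j\notin\mathcal{T}$, the coordinate errors $a=|\mathbf{x}(i)-\hat{\mathbf{x}}(i)|$ and $b=|\hat{\mathbf{x}}(j)|$ (using $\mathbf{x}(j)=0$) are two summands of $\|\mathbf{x}-\hat{\mathbf{x}}\|^{2}$, so $a^{2}+b^{2}\leq(c_{2}\lambda)^{2}$ and hence $a+b\leq\sqrt{2}\,c_{2}\lambda$ by Cauchy--Schwarz. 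Combined with the strict hypothesis $\mathcal{E}_{1}$, i.e. $|\mathbf{x}(i)|>\sqrt{2}\,c_{2}\lambda$, the reverse triangle inequality gives $|\hat{\mathbf{x}}(i)|\geq|\mathbf{x}(i)|-a>\sqrt{2}\,c_{2}\lambda-a\geq b=|\hat{\mathbf{x}}(j)|$. This is precisely why the threshold in $\mathcal{E}_{1}$ carries the factor $\sqrt{2}$: it forces every active coordinate to dominate every inactive one in magnitude, so the descending sort in Step~2 ranks all of $\mathcal{T}$ ahead of every index outside $\mathcal{T}$.

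It then remains to show the greedy stopping rule halts exactly at $\hat{\mathcal{T}}=\mathcal{T}$. For the ``no overshoot'' direction I would note that when $\hat{\mathcal{T}}=\mathcal{T}$ the residual is $(\mathbf{I}-\Theta_{\mathcal{T}}\Theta_{\mathcal{T}}^{\dagger})\mathbf{z}$; since $\mathbf{z}=\Theta_{\mathcal{T}}\mathbf{x}_{\mathcal{T}}+\mathbf{n}$ and $\Theta_{\mathcal{T}}\mathbf{x}_{\mathcal{T}}$ lies in the range of $\Theta_{\mathcal{T}}$, the projector annihilates the signal term, leaving $(\mathbf{I}-\Theta_{\mathcal{T}}\Theta_{\mathcal{T}}^{\dagger})\mathbf{n}$, whose norm is at most $\|\mathbf{n}\|\leq\lambda$; hence the test is met. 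The main obstacle is the ``no premature stop'' direction: for any proper subset $\hat{\mathcal{T}}\subsetneq\mathcal{T}$ I must bound the residual from below. Writing $M'=\mathcal{T}\setminus\hat{\mathcal{T}}\neq\emptyset$ and $\mathbf{P}=\Theta_{\hat{\mathcal{T}}}\Theta_{\hat{\mathcal{T}}}^{\dagger}$, the signal term collapses to $(\mathbf{I}-\mathbf{P})\Theta_{M'}\mathbf{x}_{M'}$, and the distance-to-subspace characterization of the projection together with the order-$2s$ RIP on $\mathcal{T}=\hat{\mathcal{T}}\cup M'$ yields $\|(\mathbf{I}-\mathbf{P})\Theta_{M'}\mathbf{x}_{M'}\|=\min_{\mathbf{w}}\|\Theta_{M'}\mathbf{x}_{M'}-\Theta_{\hat{\mathcal{T}}}\mathbf{w}\|\geq\sqrt{1-\delta}\,\|\mathbf{x}_{M'}\|$. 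Since $M'$ contains an active index, $\mathcal{E}_{1}$ gives $\|\mathbf{x}_{M'}\|>\sqrt{2}\,c_{2}\lambda$, and a reverse triangle inequality against the noise term bounds the residual below by $(\sqrt{2(1-\delta)}\,c_{2}-1)\lambda$.

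The remaining work is the routine check that this lower bound exceeds $\lambda$, i.e. $\sqrt{2(1-\delta)}\,c_{2}>2$; substituting $c_{2}=4\sqrt{1+\delta}/(1-(1+\sqrt{2})\delta)$ reduces this to $2\sqrt{2}\sqrt{1-\delta^{2}}>1-(1+\sqrt{2})\delta$, which holds throughout $[0,\sqrt{2}-1)$ because the left side stays above $2$ while the right side is at most $1$. Thus the algorithm neither stops early nor overshoots, so $\hat{\mathcal{T}}=\mathcal{T}$, establishing claim~1.

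For claim~2, once $\hat{\mathcal{T}}=\mathcal{T}$ the order-$2s$ RIP ($\delta<1$) guarantees $\Theta_{\mathcal{T}}$ has full column rank, so the ZF step gives $\hat{\mathbf{x}}_{\mathcal{T}}=\Theta_{\mathcal{T}}^{\dagger}\mathbf{z}=\mathbf{x}_{\mathcal{T}}+\Theta_{\mathcal{T}}^{\dagger}\mathbf{n}$ while $\hat{\mathbf{x}}$ vanishes off $\mathcal{T}$. Hence $\|\mathbf{x}-\hat{\mathbf{x}}\|=\|\Theta_{\mathcal{T}}^{\dagger}\mathbf{n}\|\leq\|\Theta_{\mathcal{T}}^{\dagger}\|\,\|\mathbf{n}\|$, and the RIP lower bound $\sigma_{\min}(\Theta_{\mathcal{T}})\geq\sqrt{1-\delta}$ forces $\|\Theta_{\mathcal{T}}^{\dagger}\|\leq 1/\sqrt{1-\delta}$, delivering the stated bound $\|\mathbf{x}-\hat{\mathbf{x}}\|\leq\|\mathbf{n}\|/\sqrt{1-\delta}$.
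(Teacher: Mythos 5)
Your proposal is correct and follows essentially the same route as the paper's proof: invoke Lemma \ref{Robust-Recovery-under} to get $\|\mathbf{x}-\hat{\mathbf{x}}\|\leq c_{2}\lambda$, use the two-coordinate $\ell_{2}$ argument to show the sort places $\mathcal{T}$ first, bound the residual below via the distance-to-subspace form of the projector and the order-$2s$ RIP to rule out premature stopping, verify the stopping test at $\hat{\mathcal{T}}=\mathcal{T}$, and finish with the singular-value bound $\|\Theta_{\mathcal{T}}^{\dagger}\|\leq 1/\sqrt{1-\delta}$ for the ZF step. The only difference is cosmetic: you argue the sorting step directly rather than by contradiction, and you explicitly check the numerical inequality $\sqrt{2(1-\delta)}\,c_{2}>2$ that the paper leaves implicit.
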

\begin{proof}
See Appendix \ref{sub:Proof-of-auxililary-lemma}.
\end{proof}

\begin{remrk}
[Fronthaul Quantization]The results of Theorem \ref{auxiliary_lemma}
also cover the case with quantization in the fronthaul as well. Suppose
that the signal $\mathbf{z}$ is further \emph{quantized} with $b$
quantization bits per dimension. The quantized $\hat{\mathbf{z}}$
is given by $\hat{\mathbf{z}}=\mathbf{z}+\hat{\mathbf{n}}$, where
$\hat{\mathbf{n}}$ is the quantization noise ($||\hat{\mathbf{n}}||$
scales in the order of $\mathcal{O}(2^{-\frac{b}{2}})||\mathbf{z}||\sim\mathcal{O}\left(2^{-\frac{b}{2}}\sqrt{\sum_{i\in\mathcal{T}}P_{i}}\right)$
\cite{zador1982asymptotic,dai2011information}). Then the aggregate
signal model becomes $\hat{\mathbf{z}}=\mathbf{z}+\hat{\mathbf{n}}=\Theta\mathbf{x}+\mathbf{n}+\hat{\mathbf{n}}$,
where $(\mathbf{n}+\hat{\mathbf{n}})$ replaces the role of $\mathbf{n}$
in (\ref{eq:standard_model}) with $\left\Vert \hat{\mathbf{n}}+\mathbf{n}\right\Vert \leq\left\Vert \hat{\mathbf{n}}\right\Vert +\left\Vert \mathbf{n}\right\Vert \sim\mathcal{O}\left(2^{-\frac{b}{2}}\sqrt{\sum_{i\in\mathcal{T}}P_{i}}\right)+\left\Vert \mathbf{n}\right\Vert $.
From Theorem \ref{auxiliary_lemma}, $\hat{\mathcal{T}}=\mathcal{T}$
can still be achieved in Step 2 and the final recovered signal $\hat{\mathbf{x}}$
in Step 3 satisfies $\left\Vert \mathbf{x}-\hat{\mathbf{x}}\right\Vert \leq\frac{1}{\sqrt{1-\delta}}\left\Vert \hat{\mathbf{n}}+\mathbf{n}\right\Vert $
in Algorithm 2. 
\end{remrk}

Note that classical BP CS recovery (e.g., Problem $\mathcal{P}_{1}$,)
requires that the measurement noise is bounded (e.g., $||\mathbf{n}||_{F}\leq\lambda$
as in Lemma 1) \cite{candes2008introduction}. However, due to the
Gaussian factors in the noise (i.e., $\{\mathbf{n}_{i}:\forall i\}$
in (\ref{eq:aggregate_noise})), the aggregate noise $||\mathbf{n}||_{F}$
may not always be bounded. By using the tools from concentration inequalities
\cite{boucheron2004concentration}, we first show below that $||\mathbf{n}||_{F}$
can be bounded \emph{with high probability,} despite the complicated
form of the colored noise $\mathbf{n}$ in (\ref{eq:aggregate_noise}).
\begin{lemma}
[Bounded Noise with High Probability]\label{Probability-of-Bounded-noise}Suppose
$\lambda\geq\sqrt{2N_{c}}$. The probability that $||\mathbf{n}||_{F}\leq\lambda$
happens is at least $1-\exp\left(-c_{1}M\right)$, where $c_{1}=\frac{\lambda^{2}-N_{c}}{4N_{c}}$.\end{lemma}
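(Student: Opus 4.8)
The plan is to recognize $\|\mathbf{n}\|^2$ as a sum of \emph{independent} quadratic forms in complex Gaussian vectors and to control its upper tail by a Chernoff bound. Writing $\mathbf{B}_i\triangleq\mathbf{A}_i^H\mathbf{A}_i$, from (\ref{eq:aggregate_noise}) we have
\[
\|\mathbf{n}\|^2=\sum_{i=1}^M\|\mathbf{A}_i\mathbf{n}_i\|^2=\sum_{i=1}^M Q_i,\qquad Q_i\triangleq\mathbf{n}_i^H\mathbf{B}_i\mathbf{n}_i,
\]
and, since the $\{\mathbf{n}_i\}$ are independent across the $M$ RRHs, the $Q_i$ are mutually independent. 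Because each entry of $\mathbf{A}_i$ has modulus $\sqrt{1/(MR)}$ by Definition \ref{Local-Compression-Matrices}, I would record two deterministic facts: $\textrm{tr}(\mathbf{B}_i)=\|\mathbf{A}_i\|_F^2=N_c/M$, and $\|\mathbf{B}_i\|\le\textrm{tr}(\mathbf{B}_i)=N_c/M$ (the spectral norm of a PSD matrix is at most its trace). In particular $\mathbb{E}\|\mathbf{n}\|^2=\sum_i\textrm{tr}(\mathbf{B}_i)=N_c$, matching the mean quoted after (\ref{eq:theta}), and both quantities are controlled purely by the unit-modulus construction, so the final bound will hold for \emph{every} fixed realization of $\{\mathbf{A}_i\}$.

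Next I would bound the moment generating function of each $Q_i$. Diagonalizing $\mathbf{B}_i$ with eigenvalues $\mu_{i,1},\dots,\mu_{i,N_c}\ge0$ and using the unitary invariance of $\mathbf{n}_i\sim\mathcal{CN}(0,\mathbf{I})$, the standard complex-Gaussian quadratic-form identity gives, for $0<t<1/\|\mathbf{B}_i\|$,
\[
\mathbb{E}\!\left[e^{tQ_i}\right]=\det(\mathbf{I}-t\mathbf{B}_i)^{-1}=\prod_{l=1}^{N_c}\frac{1}{1-t\mu_{i,l}}.
\]
Taking logarithms and invoking $-\ln(1-x)\le x+x^2$ (valid for $0\le x\le\tfrac12$), whenever $t\le M/(2N_c)$ so that $t\mu_{i,l}\le t\|\mathbf{B}_i\|\le\tfrac12$, I obtain
\[
\ln\mathbb{E}\!\left[e^{tQ_i}\right]\le t\sum_l\mu_{i,l}+t^2\sum_l\mu_{i,l}^2\le t\frac{N_c}{M}+t^2\Big(\frac{N_c}{M}\Big)^2,
\]
using $\sum_l\mu_{i,l}=N_c/M$ and $\sum_l\mu_{i,l}^2\le\|\mathbf{B}_i\|\sum_l\mu_{i,l}\le(N_c/M)^2$. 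Summing over the $M$ independent blocks yields $\ln\mathbb{E}[e^{t\|\mathbf{n}\|^2}]\le tN_c+t^2N_c^2/M$.

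Finally, Markov's inequality applied to $e^{t\|\mathbf{n}\|^2}$ gives, for any admissible $t$,
\[
\textrm{Pr}\!\left[\|\mathbf{n}\|^2>\lambda^2\right]\le\exp\!\Big(-t\lambda^2+tN_c+t^2\tfrac{N_c^2}{M}\Big).
\]
The main subtlety, and the step I would flag, is the choice of $t$: the unconstrained minimizer of the exponent is $t^\star=(\lambda^2-N_c)M/(2N_c^2)$, which satisfies $t^\star\le M/(2N_c)$ only when $\lambda^2\le2N_c$. Since the hypothesis is precisely $\lambda\ge\sqrt{2N_c}$, i.e. $\lambda^2\ge2N_c$, the optimizer lies on or beyond the boundary of the admissible range $t\le M/(2N_c)$, so I would evaluate the exponent at the \emph{boundary} value $t=M/(2N_c)$. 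This substitution yields exponent $-\tfrac{M(\lambda^2-N_c)}{2N_c}+\tfrac{M}{4}$, and the inequality $\lambda^2\ge2N_c$ is exactly what absorbs the $+M/4$ term and forces this to be at most $-\tfrac{\lambda^2-N_c}{4N_c}M=-c_1M$. Hence $\textrm{Pr}[\|\mathbf{n}\|_F\le\lambda]\ge1-\exp(-c_1M)$, as claimed.
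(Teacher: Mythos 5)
Your proof is correct and follows essentially the same route as the paper: both reduce $\|\mathbf{n}\|^{2}$ to an eigenvalue-weighted sum of independent exponential (chi-square with 2 degrees of freedom) variables via the spectrum of $\mathbf{A}_{i}^{H}\mathbf{A}_{i}$, control it with a Bernstein-type moment-generating-function bound using $\operatorname{tr}(\mathbf{B}_{i})=N_{c}/M$ and $\|\mathbf{B}_{i}\|\leq N_{c}/M$, and obtain the exponent $-c_{1}M$ by evaluating the Chernoff parameter at the boundary $t=M/(2N_{c})$, which is exactly where the hypothesis $\lambda^{2}\geq 2N_{c}$ enters. The only difference is presentational: the paper packages the same computation through its sub-exponential-variable lemmas (Definition 4 and Lemmas 4--5), whereas you carry out the determinant/MGF calculation explicitly.
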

\begin{proof}
See Appendix \ref{sub:Proof-of-Lemma-Bounded}.
\end{proof}

From Theorem \ref{auxiliary_lemma} and Lemma \ref{Probability-of-Bounded-noise},
we further obtain the following lower bound on the probability of
correct active user detection.
\begin{thm}
[Probability of Correct Active User Detection]\label{Probability-of-Correct}The
probability of correct active UEs $\hat{\mathcal{T}}$ detection,
i.e., $\hat{\mathcal{T}}=\mathcal{T}$, in Step 2 of Algorithm 2 satisfies
\begin{eqnarray}
 &  & \Pr\left(\hat{\mathcal{T}}=\mathcal{T}\right)\geq\Pr\left(\mathcal{\mathcal{E}}_{2s,\delta}\right)\times\label{eq:Bound}\\
 &  & \left(1-\exp\left(-c_{1}M\right)-s\cdot\left(1-\exp\left(-\frac{2\left(c_{2}\lambda\right)^{2}}{P_{\min}}\right)\right)\right)\nonumber 
\end{eqnarray}
where $P_{\min}\triangleq\min_{i\in\mathcal{T}}P_{i}$, $c_{1}=\frac{\lambda^{2}-N_{c}}{4N_{c}}$,
$c_{2}$ is in Lemma \ref{Robust-Recovery-under} and $\Pr\left(\mathcal{\mathcal{E}}_{2s,\delta}\right)$
denotes the probability that event $\mathcal{\mathcal{E}}_{k,\delta}$
in (\ref{eq:event_rip}) happens with parameters $k=2s$ and $\delta<\sqrt{2}-1$.\end{thm}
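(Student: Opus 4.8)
The plan is to reduce the correct-detection event to the joint occurrence of the three hypotheses of Theorem \ref{auxiliary_lemma} and then lower-bound the probability of that intersection. Write $A$ for the RIP event $\mathcal{E}_{2s,\delta}$, $B$ for the bounded-noise event $\{\|\mathbf{n}\|_{F}\le\lambda\}$, and $C$ for the event $\mathcal{E}_{1}$ that $|\mathbf{x}(i)|>\sqrt{2}c_{2}\lambda$ for all $i\in\mathcal{T}$. By Theorem \ref{auxiliary_lemma}, whenever $A$, $B$, and $C$ all hold we have $\hat{\mathcal{T}}=\mathcal{T}$; hence $\Pr(\hat{\mathcal{T}}=\mathcal{T})\ge\Pr(A\cap B\cap C)$, and it suffices to bound this joint probability.

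First I would exploit the independence built into the model. The magnitudes entering $C$ depend only on $\mathbf{x}$, which by (\ref{eq:sparsity_model}) is drawn independently of the fading $\{\mathbf{H}_{i}\}$, the compression matrices $\{\mathbf{A}_{i}\}$, and the noise $\{\mathbf{n}_{i}\}$; thus $C$ is independent of the pair $(A,B)$, so $\Pr(C^{c}\mid A)=\Pr(C^{c})$. To estimate $\Pr(C^{c})$ I would take a union bound over the $s$ active indices and use that for $\mathbf{x}(i)\sim\mathcal{CN}(0,P_{i})$ the quantity $|\mathbf{x}(i)|^{2}$ is exponential with mean $P_{i}$, giving $\Pr(|\mathbf{x}(i)|\le\sqrt{2}c_{2}\lambda)=1-\exp(-2(c_{2}\lambda)^{2}/P_{i})\le1-\exp(-2(c_{2}\lambda)^{2}/P_{\min})$ since $P_{i}\ge P_{\min}$. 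Summing over $\mathcal{T}$ yields $\Pr(C^{c})\le s\bigl(1-\exp(-2(c_{2}\lambda)^{2}/P_{\min})\bigr)$.

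Next I would invoke Lemma \ref{Probability-of-Bounded-noise}, which (for $\lambda\ge\sqrt{2N_{c}}$) gives $\Pr(B^{c})\le\exp(-c_{1}M)$. The delicate point is that $A$ and $B$ are \emph{not} independent, since both depend on $\{\mathbf{A}_{i}\}$. I would resolve this by conditioning on $\{\mathbf{A}_{i}\}$: given the compression matrices, $A$ is a function of $\{\mathbf{H}_{i}\}$ alone while $B$ is a function of $\{\mathbf{n}_{i}\}$ alone, and since $\{\mathbf{H}_{i}\}\perp\{\mathbf{n}_{i}\}$ the events are conditionally independent. As the bound of Lemma \ref{Probability-of-Bounded-noise} holds for every realization of $\{\mathbf{A}_{i}\}$, averaging out $\{\mathbf{A}_{i}\}$ preserves it after conditioning on $A$, so $\Pr(B^{c}\mid A)\le\exp(-c_{1}M)$.

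Finally I would assemble the pieces by conditioning on $A$ and applying a Bonferroni bound inside the conditioning:
\[
\Pr(A\cap B\cap C)=\Pr(A)\,\Pr(B\cap C\mid A)\ge\Pr(A)\bigl(1-\Pr(B^{c}\mid A)-\Pr(C^{c}\mid A)\bigr).
\]
Substituting the two estimates and identifying $\Pr(A)=\Pr(\mathcal{E}_{2s,\delta})$ reproduces (\ref{eq:Bound}) exactly. I expect the main obstacle to be precisely the coupling of $A$ and $B$ through $\{\mathbf{A}_{i}\}$: factoring $\Pr(\mathcal{E}_{2s,\delta})$ cleanly out front is only legitimate once one verifies that Lemma \ref{Probability-of-Bounded-noise} survives conditioning on the RIP event, and it is the conditional-independence-given-$\{\mathbf{A}_{i}\}$ observation that justifies this step.
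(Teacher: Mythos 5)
Your proof is correct and follows essentially the same route as the paper's: condition on the RIP event $\mathcal{E}_{2s,\delta}$, invoke Theorem \ref{auxiliary_lemma} together with a union (Bonferroni) bound over the failure of $\mathcal{E}_{1}$ and of the bounded-noise event, and evaluate the two tail terms via the exponential law of $|\mathbf{x}(i)|^{2}$ and Lemma \ref{Probability-of-Bounded-noise}. If anything, your handling of the conditioning is more careful than the paper's, which bounds $\Pr(\hat{\mathcal{T}}=\mathcal{T}\mid\mathcal{E}_{2s,\delta})$ by the \emph{unconditional} probability $\Pr(\mathcal{E}_{1}\text{ and }\left\Vert \mathbf{n}\right\Vert _{F}\leq\lambda)$ without spelling out the independence/conditional-independence facts that you make explicit.
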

\begin{proof}
See Appendix \ref{sub:Proof-of-Lemma-active_bound}.
\end{proof}

Note that the result of correct active user detection in Theorem \ref{Probability-of-Correct}
is important for further characterization of the uplink C-RAN capacity
(Theorem \ref{Bound-of-Average} in Section IV). From (\ref{eq:Bound}),
in high SNR regimes, i.e., $P_{\min}\triangleq\min_{i\in\mathcal{T}}P_{i}\gg4c_{2}^{2}N_{c}$,
with $\lambda$ in Problem $\mathcal{P}_{1}$ being adaptive to $P_{\min}$
(e.g., $\lambda=\left(\frac{P_{\min}N_{c}}{4c_{2}^{2}}\right)^{\frac{1}{4}}$),
the factor after $\Pr\left(\mathcal{\mathcal{E}}_{2s,\delta}\right)$
in (\ref{eq:Bound}) approaches 1, and hence (\ref{eq:Bound}) is
simplified to be $\Pr\left(\hat{\mathcal{T}}=\mathcal{T}\right)\gtrsim\Pr\left(\mathcal{\mathcal{E}}_{2s,\delta}\right)$.
Consequently, the probability of $\hat{\mathcal{T}}=\mathcal{T}$
can be lower bounded by the probability of event $\mathcal{\mathcal{E}}_{2s,\delta}$
only. This relationship under high SNR simplifies the analysis and
potentially leads to elegant results. Note that in (\ref{eq:Bound}),
it remains unknown what $\Pr\left(\mathcal{\mathcal{E}}_{2s,\delta}\right)$
is. A more fundamental question is whether $\mathcal{\mathcal{E}}_{2s,\delta}$
can happen or not (i.e., whether the RIP condition can hold for $\Theta$).
We shall investigate this issue in the next section.

\subsection{Characterization of the RIP for $\Theta$}

In this section, we justify that the RIP condition can be satisfied
for our aggregate measurement matrix $\Theta$ (i.e., $\mathcal{\mathcal{E}}_{2s,\delta}$
can happen). Note that not all matrices can satisfy the RIP condition
and justifying the RIP of a random matrix is also highly challenging
in general \cite{candes2008introduction}. For instance, to prove
the RIP property for conventional \emph{i.i.d.} sub-Gaussian random
matrices, the tools of concentration inequalities are needed and lots
of math derivations are usually involved to characterize the tail
probabilities of sub-Gaussian vectors \cite{baraniuk2008simple}.
On the other hand, the characterization of the RIP for the random
matrix $\Theta$ in (\ref{eq:expansion}) will be \emph{more} complicated
as $\Theta$ contains a very complicated \emph{structure}. Specifically,
the randomness of $\Theta$ comes from both the block diagonal fading
matrices $\{\mathbf{H}_{i}\}$ and the distributed compression matrices
$\{\mathbf{A}_{i}\}$. From the expression of $\Theta$ in (\ref{eq:expansion}),
we have the following two observations:
\begin{itemize}
\item \textbf{Observation I}: The columns of $\Theta$ are \emph{correlated}
due to the shared random factors of $\{\mathbf{A}_{i}(r,c)\}$. 
\item \textbf{Observation II}: The rows of $\Theta$ are \emph{correlated}
due to the shared random factors of $\{\mathbf{H}_{i}\}$. 
\end{itemize}
As such, conventional techniques \cite{baraniuk2008simple} for characterizing
the RIP for \emph{i.i.d.} sub-Gaussian random matrices cannot be applied
to our scenario. In the literature, there are a few works of RIP characterization
on structured measurement matrices, such as the Random Demodulator
matrix in \cite{BeyongN2010Baraniuk}, the Toeplitz matrix in \cite{haupt2010toeplitz},
and the random Gaussian matrix with i.i.d. row vectors while having
correlated entries within each row \cite{raskutti2010restricted}.
However, these works \cite{BeyongN2010Baraniuk,haupt2010toeplitz,raskutti2010restricted}
cannot be used in our scenario due to the different structure of $\Theta$
as described above. 

\vspace{-0.5cm}

\begin{center}
\fbox{\begin{minipage}[t]{1\columnwidth}%
Challenge 3: Analyze the RIP of $\Theta$ despite the complicated
structures in the random matrix $\Theta$ in (\ref{eq:expansion}).%
\end{minipage}}
\par\end{center}

In the following, we shall establish the RIP condition for $\Theta$.
Despite of the complicated structures above, one important feature
of $\Theta$ is that under a given realization of $\{\mathbf{H}_{i}\}$,
different \emph{columns} of $\Theta$ are \emph{conditionally independent}.
On the other hand, R. Vershynin et al. shows in \cite{vershynin2010introduction}
that random sub-Gaussian matrices with \emph{independent} \emph{columns}
is likely to satisfy the RIP with high probability under certain mild
conditions. This result \cite{vershynin2010introduction} enables
us to first deploy the conditional probability theory to \emph{decompose}
the proof of RIP of $\Theta$ into several concentration inequalities.
We then manage to prove each of these concentration inequalities by
following the derivation techniques in \cite{BeyongN2010Baraniuk}
and \cite{vershynin2010introduction}. Specifically, we show that
when the number of measurements $R$ at each RRH scales in the order
of $\mathcal{O}\left(\frac{s\log^{6}KN_{c}}{M}\right)$, the aggregate
measurement matrix $\Theta$ (with special structures) can satisfy
the RIP with high probability. 
\begin{thm}
[RIP of $\Theta$]\label{RIP-of-The-M}Suppose $KN_{c}\geq M\geq C_{1}k_{0}\log\left(\frac{eK}{k_{0}}\right)\log^{2}(KN_{c})$,
$k_{0}=\min(K,k)$. If the number of measurements $R$ at each RRH
satisfies $R\geq\frac{C_{2}\delta^{-2}k\log^{6}KN_{c}}{M}$, then
\begin{equation}
\Pr\left(\mathcal{\mathcal{E}}_{k,\delta}\right)\geq1-\frac{4}{KN_{c}},\label{eq:required_R}
\end{equation}
where $\mathcal{\mathcal{E}}_{k,\delta}$ is in (\ref{eq:event_rip})
with parameters $k$ and $\delta$, $C_{1}$ and $C_{2}$ are constants
that depend on $\delta$ and $\bar{g}$ in Assumption \ref{Channel-ModelThe-channel},
and are given in Appendix \ref{sub:Proof-of-Theorem-RIP}.\end{thm}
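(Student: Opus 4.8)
The plan is to establish the event $\mathcal{E}_{k,\delta}$ in (\ref{eq:event_rip}) by controlling the Gram matrix of the active columns of $\Theta$. It suffices to show that for every support $S$ with $\#|S|\leq k$ the submatrix $\Theta_S$ obeys $\|\Theta_S^H\Theta_S-\mathbf{I}\|\leq\delta$ with high probability, and then to take a union bound over the $\binom{KN_c}{k}$ possible supports. I would run the entire argument \emph{conditionally} on the channel $\{\mathbf{H}_i\}$, because, given $\{\mathbf{H}_i\}$, the only remaining randomness in $\Theta$ resides in the phases $\theta_{irc}$ of Definition \ref{Local-Compression-Matrices}, which are independent across the $M$ RRHs and across the $N_c$ subcarriers. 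Writing the column of $\Theta$ for the $(c,k)$-th UE as $(H_{ik}^{[c]}\mathbf{a}_i(c))_{i=1}^M$, with $\mathbf{a}_i(c)$ the $c$-th column of $\mathbf{A}_i$, the $((c,k),(c',k'))$ entry of $\Theta_S^H\Theta_S$ equals $\sum_i\overline{H_{ik}^{[c]}}H_{ik'}^{[c']}\langle\mathbf{a}_i(c),\mathbf{a}_i(c')\rangle$. The key structural fact is that for $c=c'$ the inner product is the deterministic constant $\langle\mathbf{a}_i(c),\mathbf{a}_i(c)\rangle=1/M$ (from Definition \ref{Local-Compression-Matrices}), whereas for $c\neq c'$ it has zero mean and concentrates around $0$. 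This motivates splitting $\Theta_S^H\Theta_S=\mathbf{D}+\mathbf{E}$ into a \emph{within-subcarrier} part $\mathbf{D}$ and a \emph{cross-subcarrier} part $\mathbf{E}$, and bounding $\|\mathbf{D}-\mathbf{I}\|\leq\delta/2$ and $\|\mathbf{E}\|\leq\delta/2$ separately.

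The within-subcarrier part is governed entirely by the channel randomness. Its diagonal block on subcarrier $c$ is $\tfrac{1}{M}(\mathbf{H}^{[c]}_{S_c})^H\mathbf{H}^{[c]}_{S_c}$, where $\mathbf{H}^{[c]}\in\mathbb{C}^{M\times K}$ collects $\{H_{ik}^{[c]}\}$ and $S_c$ indexes the active users on subcarrier $c$. Since each subcarrier carries only $K$ users, $\#|S_c|\leq k_0=\min(K,k)$, so I would invoke the standard sub-Gaussian RIP bound \cite{baraniuk2008simple} for $\tfrac{1}{\sqrt{M}}\mathbf{H}^{[c]}$ (whose entries $H_{ik}^{[c]}=g_{ik}^{[c]}h_{ik}^{[c]}$ are independent with bounded variance profile via $\bar g$ in Assumption \ref{Channel-ModelThe-channel}) to get $\|\tfrac{1}{M}(\mathbf{H}^{[c]}_{S_c})^H\mathbf{H}^{[c]}_{S_c}-\mathbf{I}\|\leq\delta/2$ provided $M\gtrsim k_0\log(eK/k_0)$. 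A union bound over the $N_c$ subcarriers and over supports absorbs an extra $\log^2(KN_c)$ factor, which is exactly the hypothesis $M\geq C_1k_0\log(eK/k_0)\log^2(KN_c)$, and it controls $\|\mathbf{D}-\mathbf{I}\|$ uniformly with probability at least $1-2/(KN_c)$.

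The cross-subcarrier part $\mathbf{E}$ is the main obstacle. Here I would use that, conditioned on $\{\mathbf{H}_i\}$, the phase blocks $\{\theta_{irc}:r\}$ for distinct subcarriers are independent and the contributions of the $M$ RRHs are independent. Following the random-demodulator moment/chaining analysis of \cite{BeyongN2010Baraniuk}, combined with the independent-column machinery of \cite{vershynin2010introduction}, I would bound $\mathbb{E}\|\mathbf{E}\|$ through a Dudley-type entropy integral and then append a tail bound. The crucial payoff is that averaging over the $M$ \emph{independent} RRHs sharpens the concentration by a factor of $M$: the total number of independent measurements is $MR$, so a chaining bound of the form $MR\gtrsim\delta^{-2}k\log^6(KN_c)$ suffices, which is precisely $R\geq C_2\delta^{-2}k\log^6(KN_c)/M$. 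The two delicate points are (i) carrying the random channel weights $H_{ik}^{[c]}$, bounded with high probability via $\bar g$ and Gaussian tails, through the moment computation, and (ii) producing the $\log^6$ power from the chaining argument, exactly as in the random-demodulator RIP. Combining the two parts by the triangle inequality gives $\|\Theta_S^H\Theta_S-\mathbf{I}\|\leq\delta$ for each fixed $S$, and a final union bound over the $\binom{KN_c}{k}$ supports, with the failure probabilities of the within- and cross-subcarrier parts adding, delivers $\Pr(\mathcal{E}_{k,\delta})\geq 1-4/(KN_c)$. I expect essentially all of the technical weight to sit in bounding $\|\mathbf{E}\|$, since this is where the nonstandard structure of $\Theta$ (independent column-\emph{blocks} rather than independent columns, scaled by correlated fading) forces a genuine adaptation of both \cite{BeyongN2010Baraniuk} and \cite{vershynin2010introduction}.
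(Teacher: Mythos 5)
Your proposal follows essentially the same route as the paper's proof. Conditioning on the channel, your ``within-subcarrier'' block $\mathbf{D}$ is exactly the conditional expectation $\mathbb{E}\left[\Theta^{H}\Theta\mid\mathcal{H}\right]=\mathrm{blkdiag}\left(\left(\mathbf{H}^{[c]}\right)^{H}\mathbf{H}^{[c]}\right)$ (this is the paper's event $\mathcal{F}_{2}$, controlled via channel randomness), and your ``cross-subcarrier'' block $\mathbf{E}$ is the zero-mean fluctuation in the compression phases, which the paper bounds by the Rudelson--Vershynin / random-demodulator chaining of \cite{BeyongN2010Baraniuk} under the row sup-norm event $\mathcal{F}_{1}$; the resulting conditions on $M$ and $MR$ coincide with yours. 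Two cautions. First, for the $\mathbf{D}$ part you cannot invoke the i.i.d.-entries RIP of \cite{baraniuk2008simple}: the normalization $\left\Vert \mathbf{g}_{k}^{[c]}\right\Vert ^{2}=M$ makes the entries within each column of $\mathbf{H}^{[c]}$ dependent, so you need the independent-\emph{columns} sub-Gaussian result (Theorem 5.65 of \cite{vershynin2010introduction}, adapted to near-unit-norm columns), which is what the paper actually uses. Second, your closing step --- bounding each fixed support and then union bounding over the $\binom{KN_{c}}{k}$ supports --- is inconsistent with the chaining you invoke and, taken literally, would not deliver the claimed $MR\gtrsim\delta^{-2}k\log^{6}(KN_{c})$: the per-support tail of the order-two chaos in the phases is too weak to survive a union bound over $e^{k\log(KN_{c}/k)}$ supports at that measurement budget (this is precisely why \cite{BeyongN2010Baraniuk}, and the paper, carry the sup-over-supports norm $|||\cdot|||_{k}$ through the entire argument and apply the Rudelson--Vershynin lemma and the tail bound to that uniform quantity directly). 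Drop the final union bound, run the entropy/chaining argument uniformly over $k$-sparse vectors, and your outline matches the paper's proof.
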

\begin{proof}
See Appendix \ref{sub:Proof-of-Theorem-RIP}.\end{proof}
\begin{remrk}
[Insights from Theorem \ref{RIP-of-The-M}]Note that Theorem \ref{RIP-of-The-M}
gives a \emph{sufficient} condition on the number of measurements
at the RRHs to satisfy the RIP with high probability. Combining the
result in Theorem \ref{RIP-of-The-M} with Theorem \ref{Probability-of-Correct},
we derive that in high SNR regimes (i.e., $\min_{i\in\mathcal{T}}P_{i}\gg4c_{2}^{2}N_{c}$),
when the number of measurements $R$ at each RRH scales in the order%
\footnote{Note that in the RIP result in (\ref{eq:required_R}), there is an
increase in the required number of measurements compared with the
conventional result on random matrices with i.i.d. sub-Gaussian entries
(i.e., from $\mathcal{O}\left(s\log KN_{c}\right)$ to $\mathcal{O}\left(s\log^{6}KN_{c}\right)$).
This penalty might be due to the special measurement structure in
C-RAN systems (as in Observation I-II). However, the high order of
6 on the logarithmic factors might not be necessary and it is probably
parasitic as a consequence of the techniques used to derive the theorem
(similar to \cite{BeyongN2010Baraniuk}). The empirical tests in Section
V show that a moderate number of measurements at each RRH already
leads to a good signal recovery performance. %
} of $\mathcal{O}\left(\frac{s\log^{6}KN_{c}}{M}\right)$, it suffices
to achieve a high probability of correct active user detection $\hat{\mathcal{T}}=\mathcal{T}$
(i.e., $\Pr\left(\hat{\mathcal{T}}=\mathcal{T}\right)\gtrsim1-\frac{4}{KN_{c}}$).
Based on the established RIP condition in Theorem \ref{RIP-of-The-M},
we shall further obtain a lower bound on the C-RAN performance in
Section IV. We point out that the established RIP in Theorem \ref{RIP-of-The-M}
in fact justifies the \emph{feasibility} of applying CS to achieve
distributed fronthaul compression in C-RAN with exploitation of the
uplink user sparsity. 
\end{remrk}

\begin{remrk}
[Significance of the RIP for System Design]Note that in the CS literature,
the RIP plays a \emph{central} role in characterizing the CS recovery
performance. For instance, provided that the CS measurement matrix
satisfies certain RIP condition, these state-of-the-art CS recovery
algorithms, including BP \cite{candes2005decoding}, CoSaMP \cite{needell2009cosamp}
and SP \cite{dai2009subspace}, are all shown to achieve certain performance
guarantees. Unfortunately, establishing technical conditions for RIP
is highly non-trivial and depends very much on the structure of the
measurement matrix in the compressive sensing problems. In our case,
the existing techniques for proving RIP is not applicable due to the
special structure of the measurement matrix. The proposed new proving
technique in Theorem \ref{RIP-of-The-M} can also be used to establish
RIP of other C-RAN applications. 
\end{remrk}

\section{Tradeoff Analysis Between Uplink Capacity and Fronthaul Loading in
the C-RAN}

In this section, we shall quantify the average uplink capacity of
the proposed distributed compression and joint recovery scheme in
the C-RAN. From the results, we derive simple tradeoff results between
the uplink communication performance and the fronthaul loading in
C-RAN systems. 

\vspace{-0.5cm}

\begin{center}
\fbox{\begin{minipage}[t]{1\columnwidth}%
Challenge 4: Analyze the tradeoff relationship between the C-RAN performance
and the fronthaul loading.%
\end{minipage}}
\par\end{center}

Suppose the transmit SNR of the active users are the same, i.e., $P_{i}=P$,
$\forall i\in\mathcal{T}$ for simple results. With a ZF receiver,
as in Step 3 of Algorithm 2, the recovered signal $\hat{\mathbf{x}}_{\hat{\mathcal{T}}}$
from the set of UEs $\hat{\mathcal{T}}$ can be expressed by
\begin{equation}
\hat{\mathbf{x}}_{\hat{\mathcal{T}}}=(\mathbf{\Theta}_{\hat{\mathcal{T}}})^{\dagger}\mathbf{z}=\mathbf{x}_{\hat{\mathcal{T}}}+\underset{(a)}{\underbrace{(\mathbf{\Theta}_{\hat{\mathcal{T}}})^{\dagger}\mathbf{\Theta}_{\mathcal{T}\backslash\hat{\mathcal{T}}}\mathbf{x}_{\mathcal{T}\backslash\hat{\mathcal{T}}}}}+\underset{(b)}{\underbrace{(\mathbf{\Theta}_{\hat{\mathcal{T}}})^{\dagger}\mathbf{n}}},\label{eq:decoded_model}
\end{equation}
where $(a)$ can be regarded as the interference brought by the detection
error of the active UEs and $(b)$ is the aggregate noise. Based on
(\ref{eq:decoded_model}), the sum capacity $R_{sum}$ \cite{jiang2011performance}
with the ZF receiver, is
\begin{equation}
R_{sum}=\sum_{l=1}^{\#|\hat{\mathcal{T}}|}\log\left(1+\frac{P_{l}}{\alpha_{l}}\right),\label{eq:rate}
\end{equation}
 where $P_{l}=P$ if the $l$-th element of $\hat{\mathcal{T}}$ belongs
to $\mathcal{T}$ and $P_{l}=0$ otherwise, and $\alpha_{l}$ is the
$l$-th diagonal element of the interference plus noise covariance
matrix $\Psi$, 
\begin{equation}
\Psi=\mathbf{\Theta}_{\hat{\mathcal{T}}}^{\dagger}\left(P\mathbf{\Theta}_{\mathcal{T}\backslash\hat{\mathcal{T}}}\mathbf{\Theta}_{\mathcal{T}\backslash\hat{\mathcal{T}}}^{H}+\mathbf{A}\mathbf{A}^{H}\right)(\mathbf{\Theta}_{\hat{\mathcal{T}}}^{\dagger})^{H},\label{eq:covariance}
\end{equation}
where $\mathbf{A}=\textrm{diag}\left([\begin{array}{ccc}
\mathbf{A}_{1} & \cdots & \mathbf{A}_{M}\end{array}]\right)$ and $\mathbf{A}\mathbf{A}^{H}$ is the covariance matrix of the colored
noise $\mathbf{n}$ in (\ref{eq:aggregate_noise}). Note that when
the information of the active UEs is correct, i.e., $\hat{\mathcal{T}}=\mathcal{T}$,
we obtain $P_{l}=P$, $\forall l=1,...,\#|\hat{\mathcal{T}}|$ and
$\Psi$ in (\ref{eq:covariance}) can be reduced to
\begin{equation}
\Psi=(\mathbf{\Theta}_{\hat{\mathcal{T}}}^{\dagger})\left(\mathbf{A}\mathbf{A}^{H}\right)(\mathbf{\Theta}_{\hat{\mathcal{T}}}^{\dagger})^{H}.\label{eq:simplified}
\end{equation}
Based on (\ref{eq:rate})-(\ref{eq:simplified}) and by using the
probability results regarding correct active user detection in Theorem
\ref{Probability-of-Correct}, we obtain the following bound on the
\emph{average} uplink capacity $\mathbb{E}(R_{sum})$ of C-RAN systems
(with respect to the randomness of the multiaccess channel $\{H_{ik}^{[c]}:\forall i,k,c\}$
and the local compression matrices $\{\mathbf{A}_{i}:\forall i\}$). 
\begin{thm}
[Bound of Average C-RAN Capacity]\label{Bound-of-Average}Denote
the compression rate at the RRHs as $\alpha$, i.e., $\alpha\triangleq\frac{R}{N_{c}}$.
The average capacity $\mathbb{E}(R_{sum})$ satisfies the following
bound:\end{thm}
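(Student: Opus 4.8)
The plan is to obtain a lower bound on $\mathbb{E}(R_{sum})$ by retaining only the contribution of the ``good'' event $\mathcal{G}$ on which the sufficient conditions of Theorem \ref{Probability-of-Correct} hold, namely the RIP event $\mathcal{E}_{2s,\delta}$, the bounded-noise event $\|\mathbf{n}\|_{F}\leq\lambda$ of Lemma \ref{Probability-of-Bounded-noise}, and the signal-level event $\mathcal{E}_{1}$. On $\mathcal{G}$ we have correct detection $\hat{\mathcal{T}}=\mathcal{T}$ by Theorem \ref{Probability-of-Correct}, so that the covariance $\Psi$ collapses to the simplified form (\ref{eq:simplified}). Since every summand of $R_{sum}$ in (\ref{eq:rate}) is nonnegative, I would first write
\[
\mathbb{E}(R_{sum})\;\geq\;\left(\inf_{\mathcal{G}}R_{sum}\right)\Pr(\mathcal{G}),
\]
so that the probability factor is supplied directly by Theorem \ref{Probability-of-Correct} (combined with Theorem \ref{RIP-of-The-M} for $\Pr(\mathcal{E}_{2s,\delta})$), and it only remains to produce a deterministic lower bound on $R_{sum}$ valid throughout $\mathcal{G}$.

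For the conditional bound, note that on $\mathcal{G}$ we have $P_{l}=P$ for every stream, and each interference-plus-noise level is the diagonal term $\alpha_{l}=\mathbf{e}_{l}^{H}\Psi\,\mathbf{e}_{l}$ of (\ref{eq:simplified}). Setting $\mathbf{v}_{l}=(\Theta_{\mathcal{T}}^{\dagger})^{H}\mathbf{e}_{l}$, I would use the elementary estimate
\[
\alpha_{l}=\mathbf{v}_{l}^{H}(\mathbf{A}\mathbf{A}^{H})\mathbf{v}_{l}\;\leq\;\|\mathbf{A}\|^{2}\,\|\mathbf{v}_{l}\|^{2},\qquad \|\mathbf{v}_{l}\|^{2}=\big[(\Theta_{\mathcal{T}}^{H}\Theta_{\mathcal{T}})^{-1}\big]_{ll},
\]
and control the two factors separately. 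Because $\#|\mathcal{T}|=s\leq 2s$, the RIP event $\mathcal{E}_{2s,\delta}$ pins the eigenvalues of $\Theta_{\mathcal{T}}^{H}\Theta_{\mathcal{T}}$ into $[1-\delta,1+\delta]$, whence $\|\mathbf{v}_{l}\|^{2}\leq 1/(1-\delta)$; and the block-diagonal noise-shaping matrix satisfies $\|\mathbf{A}\|^{2}=\max_{i}\|\mathbf{A}_{i}\|^{2}$, where each row of $\mathbf{A}_{i}$ has squared norm exactly $N_{c}/(MR)=1/(M\alpha)$. Granting a sharp row-to-operator bound $\|\mathbf{A}_{i}\|^{2}\lesssim 1/(M\alpha)$, this yields $\alpha_{l}\leq 1/\big(M\alpha(1-\delta)\big)$ and therefore
\[
R_{sum}\big|_{\mathcal{G}}=\sum_{l=1}^{s}\log\!\left(1+\frac{P}{\alpha_{l}}\right)\;\geq\;s\,\log\!\big(1+PM\alpha(1-\delta)\big).
\]

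Combining the two displays gives a bound of the form $\mathbb{E}(R_{sum})\geq s\,\log\!\big(1+PM\alpha(1-\delta)\big)\,\Pr(\hat{\mathcal{T}}=\mathcal{T})$, which exhibits exactly the desired monotone tradeoff: the achievable rate grows with the compression rate $\alpha=R/N_{c}$, and in the high-SNR regime the probability factor approaches $1-4/(KN_{c})$ by Theorems \ref{Probability-of-Correct} and \ref{RIP-of-The-M}. The main obstacle I anticipate is the sharp control of $\|\mathbf{A}\|^{2}$: the crude Frobenius estimate only gives $\|\mathbf{A}_{i}\|^{2}\leq\|\mathbf{A}_{i}\|_{F}^{2}=N_{c}/M$, which is loose by a factor of $R$ and would wash out the clean dependence on $\alpha$. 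To recover the sharp scaling I would show that $\mathbf{A}_{i}\mathbf{A}_{i}^{H}$ concentrates around $(1/(M\alpha))\mathbf{I}_{R}$: its diagonal entries equal $1/(M\alpha)$ exactly, while its off-diagonal entries are normalized sums of $N_{c}$ independent unit-modulus phases and hence concentrate near zero, so a spectral-norm concentration argument in the same spirit as the random-phase analysis behind Theorem \ref{RIP-of-The-M} controls the deviation with high probability. That high-probability spectral bound must then be absorbed into the event $\mathcal{G}$, shrinking $\Pr(\mathcal{G})$ by a negligible amount; once this is in place, the remainder is elementary linear algebra built on the RIP and detection results already established.
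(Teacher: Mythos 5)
There are two genuine gaps. First, the theorem is two-sided and you only address the lower bound: the upper bound $\mathbb{E}(R_{sum})\leq s\log\left(1+M\alpha P\right)$ is never argued. In the paper this half is not trivial either — it upper bounds $R_{sum}$ by the genie-aided ($\hat{\mathcal{T}}=\mathcal{T}$) rate, computes $\mathbb{E}\sum_{i}\frac{1}{\alpha_{i}}=\textrm{tr}\left(\mathbb{E}(\mathbf{V}\mathbf{V}^{H})\,\mathbb{E}(\mathbf{P}^{H}\mathbf{H}_{\mathcal{T}}\mathbf{H}_{\mathcal{T}}^{H}\mathbf{P})\right)=sM\alpha$ exactly via an SVD/permutation symmetry, and then applies Jensen with the concavity of $\log(1+x)$ — so it cannot be waved away.

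Second, for the lower bound your route hinges on the deterministic estimate $\alpha_{l}\leq\left\Vert \mathbf{A}\right\Vert ^{2}\left\Vert \mathbf{v}_{l}\right\Vert ^{2}$ together with the claim $\left\Vert \mathbf{A}_{i}\right\Vert ^{2}\lesssim\frac{1}{M\alpha}$, and that claim is false with constant $1$. The matrix $\mathbf{A}_{i}$ is an $R\times N_{c}$ random-phase matrix with entry variance $\frac{1}{MR}$, so its largest singular value concentrates near $\left(\sqrt{R}+\sqrt{N_{c}}\right)\sqrt{\frac{1}{MR}}$, i.e., $\left\Vert \mathbf{A}_{i}\right\Vert ^{2}\approx\frac{(1+\sqrt{\alpha})^{2}}{M\alpha}$; the off-diagonal entries of $\mathbf{A}_{i}\mathbf{A}_{i}^{H}$ being individually small does not make their aggregate spectral effect negligible unless $\alpha\ll1$. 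Consequently your bound degrades the argument of the logarithm by a factor up to $4$ and does not recover the stated $s\log\left(1+(1-\delta)M\alpha P\right)$. The paper avoids spectral control of $\mathbf{A}$ entirely: conditioning on $\mathcal{E}_{2s,\delta}$, it uses the exact isotropy $\mathbb{E}\left(\mathbf{A}\mathbf{A}^{H}\mid\Theta,\mathcal{E}_{2s,\delta}\right)=\frac{1}{M\alpha}\mathbf{I}$ (a sign-flip symmetry of the phases) to get $\mathbb{E}\left(\sum_{i}\alpha_{i}\right)=\frac{1}{M\alpha}\textrm{tr}\left((\Theta_{\mathcal{T}}^{H}\Theta_{\mathcal{T}})^{-1}\right)\leq\frac{s}{(1-\delta)M\alpha}$, and then applies Jensen's inequality to the convex, decreasing map $x\mapsto\log\left(1+\frac{P}{x}\right)$. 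Your linear-algebra steps ($\left\Vert \mathbf{v}_{l}\right\Vert ^{2}=[(\Theta_{\mathcal{T}}^{H}\Theta_{\mathcal{T}})^{-1}]_{ll}\leq\frac{1}{1-\delta}$ under RIP, and $\mathbb{E}(R_{sum})\geq(\inf_{\mathcal{G}}R_{sum})\Pr(\mathcal{G})$) are sound, but to reach the stated constant you should replace the worst-case spectral bound by the expectation-plus-Jensen argument.
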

\begin{itemize}
\item \textbf{Upper Bound}: $\mathbb{E}(R_{sum})\leq s\log\left(1+M\alpha P\right)$. 
\item \textbf{Lower Bound}: Under high SNR condition, i.e., $P\gg4c_{2}^{2}N_{c}$
and supposing the parameter $\lambda$ in Problem $\mathcal{P}_{1}$
is given by $\lambda=\left(\frac{PN_{c}}{4c_{2}^{2}}\right)^{\frac{1}{4}}$,
$\mathbb{E}(R_{sum})$ satisfies 
\begin{equation}
\mathbb{E}(R_{sum})\geq\Pr\left(\mathcal{\mathcal{E}}_{2s,\delta}\right)\cdot s\log\left(1+(1-\delta)M\alpha P\right),\label{eq:ergodic_capacity_bound}
\end{equation}

\end{itemize}
where $\Pr\left(\mathcal{\mathcal{E}}_{2s,\delta}\right)$ denotes
the probability that event $\mathcal{\mathcal{E}}_{2s,\delta}$ in
(\ref{eq:event_rip}) happens with parameters $k=2s$ and $\delta<\sqrt{2}-1$.
\begin{proof}
See Appendix \ref{sub:Proof-of-Theorem-rate}.
\end{proof}

From Theorem \ref{RIP-of-The-M} and \ref{Bound-of-Average}, we further
have the following Corollary. 
\begin{cor}
[Tradeoff Results]\label{Closed-form-Bounds-of}Consider a high
SNR scenario, i.e., $P\gg4c_{2}^{2}N_{c}$ and suppose $KN_{c}\geq M\geq C_{1}k_{0}\log\left(\frac{eK}{k_{0}}\right)\log^{2}(KN_{c})$,
where $k_{0}=\min(K,2s)$ and $\lambda$ in Problem $\mathcal{P}_{1}$
is given by $\lambda=\left(\frac{PN_{c}}{4c_{2}^{2}}\right)^{\frac{1}{4}}$
and. If the compression ratio $\alpha\triangleq\frac{R}{N_{c}}$ at
each RRH satisfies $\alpha\geq\frac{2C_{2}\delta^{-2}s\log^{6}KN_{c}}{MN_{c}}$
with $\delta<\sqrt{2}-1$, then the average sum capacity $\mathbb{E}(R_{sum})$
satisfies
\begin{eqnarray}
\left(1-\frac{4}{KN_{c}}\right)s\log\left(1+M(1-\delta)\alpha P\right) & \leq\label{eq:average_throughput_bound}\\
\mathbb{E}(R_{sum})\leq s\log\left(1+M\alpha P\right)\nonumber 
\end{eqnarray}
where $c_{2}$ is in Lemma \ref{Robust-Recovery-under}, and $C_{1}$
and $C_{2}$ are in Theorem \ref{RIP-of-The-M}.\end{cor}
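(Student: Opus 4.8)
The plan is to obtain Corollary \ref{Closed-form-Bounds-of} as a direct consequence of combining the RIP guarantee of Theorem \ref{RIP-of-The-M} with the average-capacity bounds of Theorem \ref{Bound-of-Average}. The upper bound $\mathbb{E}(R_{sum})\leq s\log(1+M\alpha P)$ is already stated verbatim in the upper-bound part of Theorem \ref{Bound-of-Average}, so it carries over with no additional work. The only substantive task is to convert the lower bound of Theorem \ref{Bound-of-Average}, which is expressed in terms of the \emph{unknown} probability $\Pr(\mathcal{E}_{2s,\delta})$, into the explicit form in (\ref{eq:average_throughput_bound}) by pinning $\Pr(\mathcal{E}_{2s,\delta})$ down with Theorem \ref{RIP-of-The-M}.

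First I would instantiate Theorem \ref{RIP-of-The-M} at order $k=2s$, so that $k_{0}=\min(K,2s)$. With this choice the hypothesis $KN_{c}\geq M\geq C_{1}k_{0}\log(eK/k_{0})\log^{2}(KN_{c})$ of Theorem \ref{RIP-of-The-M} is exactly the standing assumption on $M$ in the corollary. It then remains only to check that the corollary's compression-ratio condition implies the measurement condition $R\geq C_{2}\delta^{-2}(2s)\log^{6}KN_{c}/M$ required by Theorem \ref{RIP-of-The-M}. Since $\alpha=R/N_{c}$, multiplying the assumed inequality $\alpha\geq 2C_{2}\delta^{-2}s\log^{6}KN_{c}/(MN_{c})$ through by $N_{c}$ gives precisely $R\geq 2C_{2}\delta^{-2}s\log^{6}KN_{c}/M=C_{2}\delta^{-2}(2s)\log^{6}KN_{c}/M$. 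Hence all hypotheses of Theorem \ref{RIP-of-The-M} hold with $k=2s$, and I conclude $\Pr(\mathcal{E}_{2s,\delta})\geq 1-4/(KN_{c})$.

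Next I would substitute this probability estimate into the lower bound of Theorem \ref{Bound-of-Average}. The high-SNR regime $P\gg 4c_{2}^{2}N_{c}$ and the specific threshold $\lambda=(PN_{c}/4c_{2}^{2})^{1/4}$ assumed in the corollary coincide with those under which the lower bound of Theorem \ref{Bound-of-Average} is derived, so that bound applies directly and yields $\mathbb{E}(R_{sum})\geq \Pr(\mathcal{E}_{2s,\delta})\cdot s\log(1+(1-\delta)M\alpha P)$. Because $\delta<\sqrt{2}-1<1$ and $M\alpha P>0$, the argument of the logarithm exceeds $1$, so the factor $s\log(1+(1-\delta)M\alpha P)$ is strictly positive; replacing $\Pr(\mathcal{E}_{2s,\delta})$ by its lower bound $1-4/(KN_{c})$ can therefore only decrease the right-hand side. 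This produces $\mathbb{E}(R_{sum})\geq (1-4/(KN_{c}))\,s\log(1+(1-\delta)M\alpha P)$, which is the left inequality of (\ref{eq:average_throughput_bound}); combined with the upper bound it establishes the corollary.

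There is no genuine analytical obstacle here: the result is essentially a bookkeeping combination of two already-proved theorems. The only points requiring care are (i) the correct tracking of the order parameter, namely using $k=2s$ throughout so that the $M$-condition and the $R$-condition line up, together with the clean algebraic conversion between the $\alpha$-form and the $R$-form of the measurement requirement; and (ii) the sign check that lets the probability lower bound pass through the multiplication, which relies solely on the positivity of the logarithmic capacity term. Once these are observed, the chain of inequalities closes immediately.
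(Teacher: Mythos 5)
Your proposal is correct and follows exactly the paper's own (sketched) argument: the upper bound is inherited verbatim from Theorem \ref{Bound-of-Average}, and the lower bound is obtained by substituting $\Pr\left(\mathcal{\mathcal{E}}_{2s,\delta}\right)\geq1-\frac{4}{KN_{c}}$ from Theorem \ref{RIP-of-The-M} (instantiated at $k=2s$) into (\ref{eq:ergodic_capacity_bound}). Your additional bookkeeping --- checking that the $\alpha$-condition converts to the $R$-condition with $k=2s$, and the positivity check before multiplying through by the probability bound --- only makes explicit what the paper leaves implicit.
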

\begin{proof}
(Sketch) The upper bound follows directly from Theorem \ref{Bound-of-Average}
and the lower bound is obtained by substituting $\Pr\left(\mathcal{\mathcal{E}}_{2s,\delta}\right)\geq1-\frac{4}{KN_{c}}$
in Theorem \ref{RIP-of-The-M} into (\ref{eq:ergodic_capacity_bound}).
\end{proof}

\begin{remrk}
[Interpretation of Corollary \ref{Closed-form-Bounds-of}]Note that
Corollary \ref{Closed-form-Bounds-of} gives a closed-form lower/upper
bounds on the average throughput $\mathbb{E}(R_{sum})$ under certain
requirements of the compression ratio $\alpha$. From (\ref{eq:average_throughput_bound}),
both the lower bound and upper bound of the average capacity are in
the order of $\mathcal{O}\left(s\log\left(M\alpha P\right)\right)$.
Therefore, when the compression ratio $\alpha\triangleq\frac{R}{N_{c}}$
at each RRH satisfies the order of $\mathcal{O}\left(\frac{s\log^{6}KN_{c}}{MN_{c}}\right)$,
it is sufficient to achieve a sum throughput of $\mathcal{O}\left(s\log\left(M\alpha P\right)\right)$
in the uplink C-RAN. Furthermore, from (\ref{eq:average_throughput_bound}),
we observe that the distributed fronthaul compression $\alpha$ causes
a capacity loss in terms of receiving SNR reduction with a factor
of the compression rate $\alpha$ on the fronthaul. This result uncovers
the simple tradeoff relationship between the communication performance
and the fronthaul loading in the uplink C-RAN. 
\end{remrk}

\section{Numerical Results}

In this section, we verify the effectiveness of the proposed signal
recovery and distributed fronthaul compression scheme via simulations.
Specifically, the following baselines will be considered for performance
benchmarks: 
\begin{itemize}
\item \textbf{Baseline 1} \emph{(MMSE Receiver \cite{jiang2011performance})}:
The BBUs jointly recover the transmitted signal $\mathbf{x}$ using
conventional MMSE Multi-user Detection \cite{jiang2011performance}.
\item \textbf{Baseline 2} \emph{(Separate MMSE Receiver)}: Each UE is associated
with the RRH with the largest large-scale fading gain. Each RRH \emph{separately}
recovers the transmitted signal for the set of the associated UEs
using MMSE Multi-user Detection \cite{jiang2011performance}.
\item \textbf{Baseline 3} \emph{(OMP-based ZF \cite{tropp2007signal})}:
Instead of using a convex relaxation approach, as in Algorithm 2,
the BBU pools choose to detect the set of active UEs using the OMP
\cite{tropp2007signal}, and then apply the ZF receiver to recover
the transmitted signals\emph{. }
\item \textbf{Baseline 4} \emph{(Genie-Aided} \emph{ZF)}: The set of active
UEs $\mathcal{T}$ is \emph{known} at the BBU pools so that the BBU
pools directly apply ZF on the set of active UEs $\mathcal{T}$. Note
that this serves as a performance upper on the proposed scheme. 
\end{itemize}

Consider a C-RAN system with $M=40$ single-antenna RRHs and a total
of $KN_{c}=640$ single-antenna mobile UEs being served on $N_{c}=32$
subcarriers (i.e., each subcarrier is allocated to $K=20$ UEs). Suppose
the RRHs and the mobile UEs are randomly and evenly distributed in
the circular region with radius 2km. The path loss between the UEs
and distributed RRHs are generated using the standard Log-distance
path loss model, with path loss exponent $2.5$. Denote the number
of active UEs as $s$, the number of measurements at each RRH as $R$,
and the transmit SNR at the active UEs as $P$. The threshold parameter
$\lambda$ in Algorithm 2 is set as $\lambda=\sqrt{2N_{c}}$. We further
consider that the transmitted signals on the fronthaul are quantized
with $b=10$ quantization bits per dimension ($b=10$ for each complex
number).

\subsection{Throughput Versus Compression Rate}

Figure \ref{fig:Q_Throughput-versus-bits} illustrates the \emph{per
active-user} throughput (i.e., $\frac{1}{s}\mathbb{E}(R_{sum})$)
of the C-RAN versus the fronthaul loading (in terms of quantization
bits per fronthaul link) under the number of active UEs $s=64$ and
transmit SNR $P=20$ dB. From this figure, we observe that the throughput
gets larger as the fronthaul loading increases. In addition, the proposed
scheme approaches the performance of the Genie-aided ZF as the fronthaul
loading increases. This is because as fronthaul loading (measurements
on the fronthaul) gets larger, the probability of satisfying the RIP
gets larger and the probability of correct active user detection also
gets larger. Consequently, the proposed scheme would approach the
Genie-aided ZF.

\subsection{Throughput Versus Transmit SNR}

Figure \ref{fig:Q_Throughput-versus-SNR} further illustrates the
\emph{per active-user} throughput of the C-RAN system versus the transmit
SNR $P$ under number of active UEs $s=64$ and per fronthaul bits
$B=60$. From this figure, we observe that the throughput of the proposed
scheme gets larger as the transmit SNR increases, and it performs
the same as the Genie-aided ZF in the high SNR regimes. This is because
the probability of correct detection of the active UEs gets larger
as SNR increases, as indicated in Theorem \ref{Probability-of-Correct}.
Note that the proposed upper bound in Thm. 4 can still act as a performance
upper bound in cases of fronthaul quantization as quantization would
lead to performance degradations.

\subsection{Throughput Versus Number of Active UEs}

Figure \ref{fig:Q_Throughput-versus-Sparsity} further illustrates
the \emph{per active-user} throughput of the C-RAN system versus the
number of active UEs under transmit SNR $P=20$ dB and per fronthaul
bits $B=60$. From this figure, we observe that the proposed scheme
achieves the same performance as the Genie-aided ZF in the regime
with a small number of active UEs. This is because nearly perfect
detection of the active UEs is achieved in small sparsity regimes.
This comparison further highlights the importance of correct active
user detection in C-RAN systems.

\begin{figure}
\begin{centering}
\includegraphics[scale=0.45]{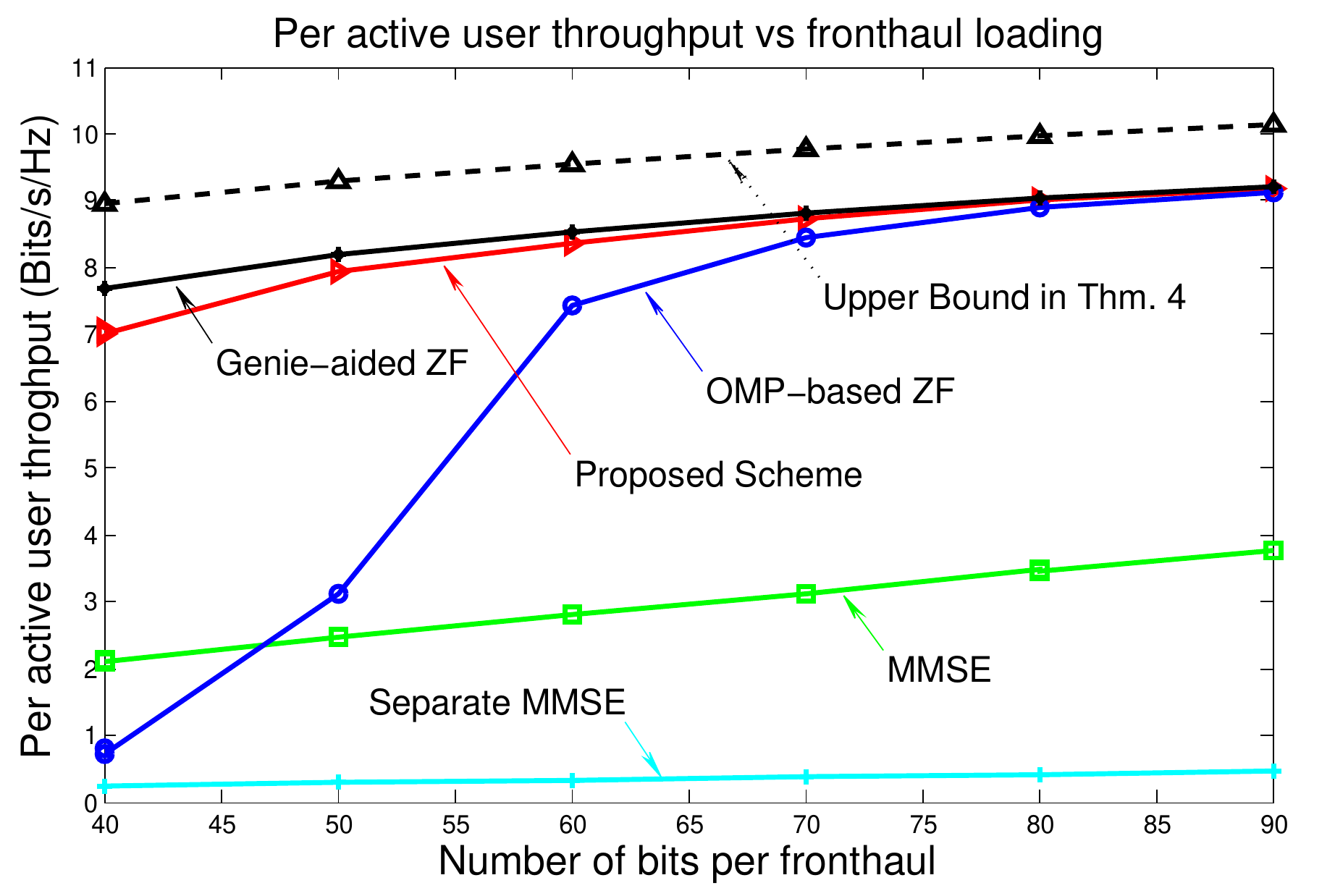}
\par\end{centering}

\protect\caption{\label{fig:Q_Throughput-versus-bits}Throughput versus the per fronthaul
loading in terms of the number of bits $B$ under transmit SNR $P=20$
dB and number of active UEs $s=64$.}
\end{figure}

\begin{figure}
\begin{centering}
\includegraphics[scale=0.45]{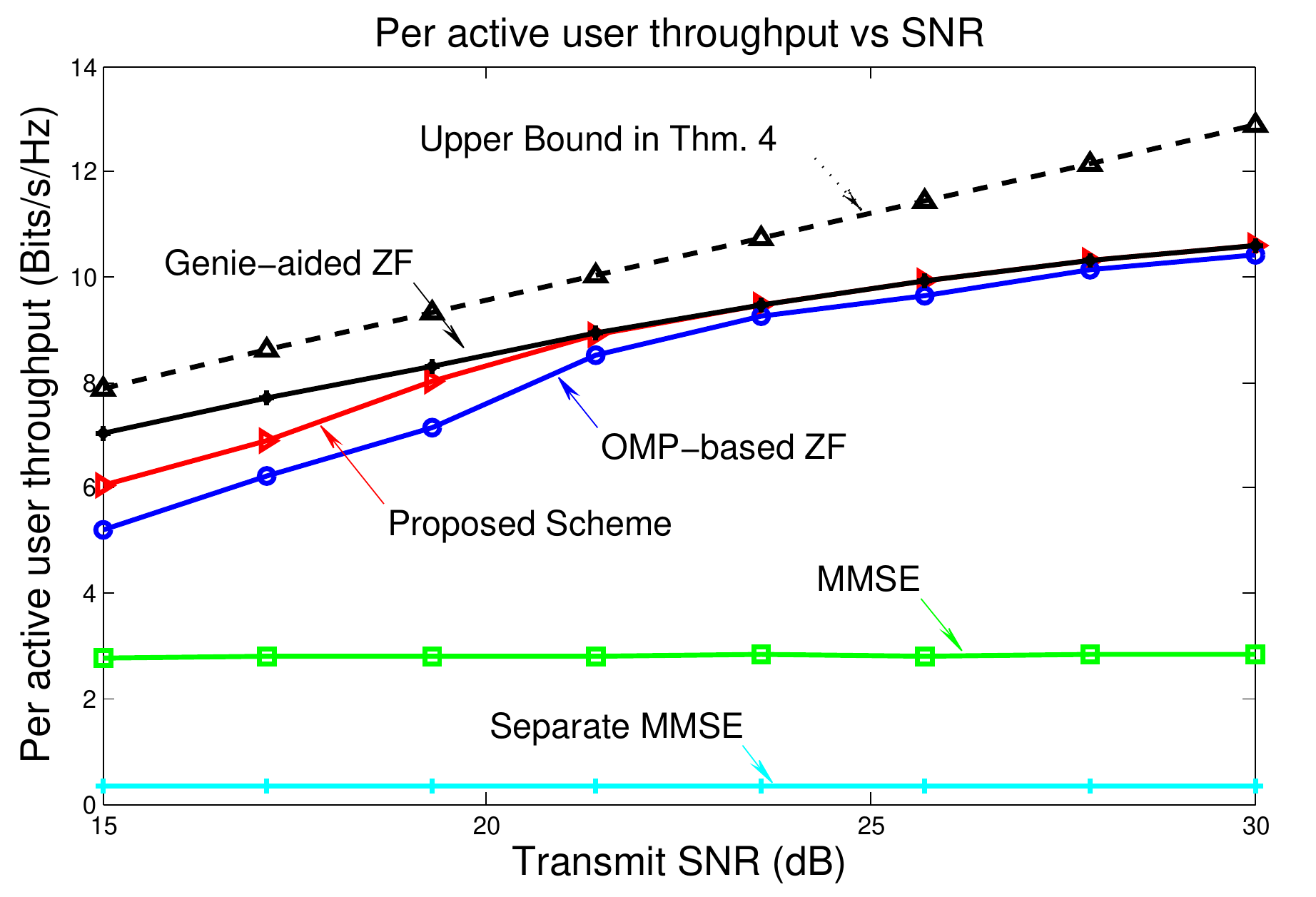}
\par\end{centering}

\protect\caption{\label{fig:Q_Throughput-versus-SNR}Throughput versus the transmit
SNR $P$ under per fronthaul quantization bits $B=60$ and number
of active UEs $s=64$.}
\end{figure}

\begin{figure}
\begin{centering}
\includegraphics[scale=0.45]{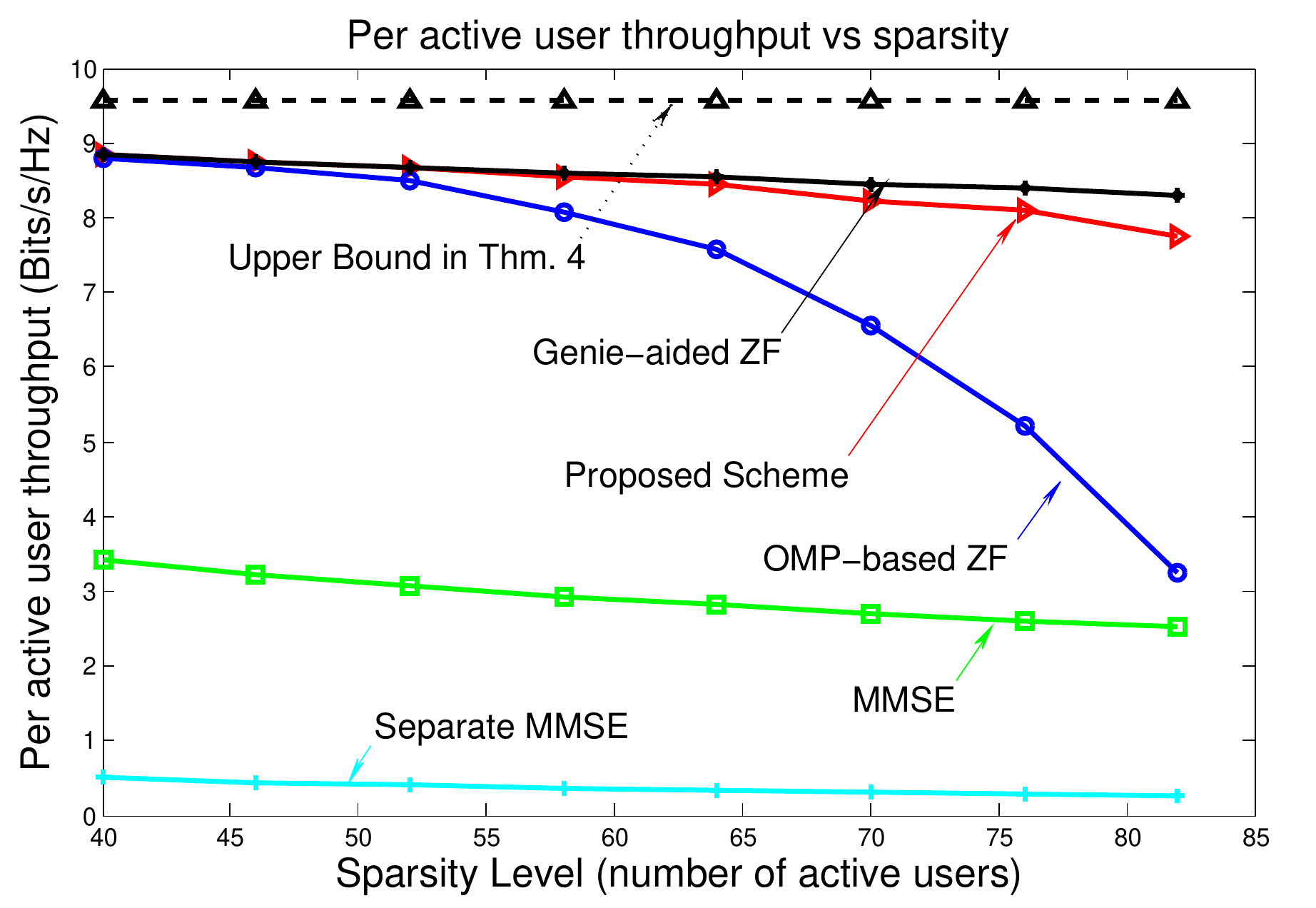}
\par\end{centering}

\protect\caption{\label{fig:Q_Throughput-versus-Sparsity}Throughput versus the number
of active UEs $s$ under per fronthaul quantization bits $B=60$ and
transmit SNR $P=20$ dB. }
\end{figure}

\section{Conclusion}

In this paper, we apply CS techniques to uplink C-RAN systems to achieve
distributed fronthaul compression with exploitation of UE signal sparsity.
We incorporate multi-access fading in C-RAN system into the CS formulation
and we conduct an end-to-end recovery of the transmitted signals from
the users. We show that the aggregate measurement matrix in the C-RAN,
which contains both the distributed compression and multiaccess fading,
can still satisfy the \emph{restricted isometry property} with high
probability. This result provides the foundation to apply CS to the
uplink C-RAN systems. Based on this, we further analyze the probability
of correct active user detection and quantify the tradeoff relationship
between the uplink capacity and the distributed fronthaul loading
in C-RAN. 

\appendix

\subsection{\label{sub:Proof-of-auxililary-lemma}Proof of Theorem \ref{auxiliary_lemma}}

We first prove the first item in Theorem \ref{auxiliary_lemma}. i)
First, the $s$ indices in $\mathcal{T}$ must be arranged at the
front in the sorted indices $\{i_{1},..,i_{KN_{c}}\}$ in Step 2 of
Algorithm 2. Otherwise, i.e., there exists a $i\in\mathcal{T}$, $j\notin\mathcal{T}$
where $|\hat{\mathbf{x}}(j)|\geq|\hat{\mathbf{x}}(i)|$, we would
obtain
\[
\left\Vert \mathbf{x}-\hat{\mathbf{x}}\right\Vert \geq\sqrt{|\mathbf{x}(i)-\hat{\mathbf{x}}(i)|^{2}+\left|0-\hat{\mathbf{x}}(j)\right|^{2}}\geq\frac{1}{\sqrt{2}}|\mathbf{x}(i)|>c_{2}\lambda
\]
which contradicts (\ref{eq:error_bound}). ii) Second, the greedy
selection of $\hat{\mathcal{T}}$ cannot stop with $\#\left|\hat{\mathcal{T}}\right|<s$.
Otherwise, i.e., there must exist an $i\in\mathcal{T}$ and $i\notin\hat{\mathcal{T}}$.
we obtain {\small{}
\begin{eqnarray*}
\left\Vert \left(\mathbf{I}-\Theta_{\hat{\mathcal{T}}}\Theta_{\hat{\mathcal{T}}}^{\dagger}\right)\mathbf{z}\right\Vert _{F} & \geq & \left\Vert \left(\mathbf{I}-\Theta_{\hat{\mathcal{T}}}\Theta_{\hat{\mathcal{T}}}^{\dagger}\right)\Theta_{\mathcal{T}}\mathbf{x}_{\mathcal{T}}\right\Vert _{F}-||\mathbf{n}||_{F}\\
 & \overset{(e_{1})}{\geq} & \sqrt{1-\delta}\left|\mathbf{x}(i)\right|-\lambda>\lambda,
\end{eqnarray*}
}which contradicts the stopping criterion in Step 2 of Algorithm 2,
where $(e_{1})$ uses the RIP property in $\mathcal{\mathcal{E}}_{2s,\delta}$.
iii) Suppose in Step 2, $\hat{\mathcal{T}}$ selects $s$ indice,
i.e., $\mathcal{T}=\hat{\mathcal{T}}$. Then $\left\Vert \left(\mathbf{I}-\Theta_{\mathcal{T}}\Theta_{\mathcal{T}}^{\dagger}\right)\mathbf{z}\right\Vert _{F}\leq||\mathbf{n}||\leq\lambda$
and Step 2 of Algorithm 2 will stop. Therefore, the first item in
Theorem \ref{auxiliary_lemma} is proved. We then prove the second
item in Theorem \ref{auxiliary_lemma}. From $\mathcal{T}=\hat{\mathcal{T}}$,
we obtain the final recovered $\hat{\mathbf{x}}$ from Algorithm 2
satisfies
\begin{align*}
\left\Vert \mathbf{x}-\hat{\mathbf{x}}\right\Vert _{F} & =\left\Vert \mathbf{x}_{\mathcal{T}}-\Theta_{\mathcal{T}}^{\dagger}\left(\Theta_{\mathcal{T}}\mathbf{x}_{\mathcal{T}}+\mathbf{n}\right)\right\Vert _{F}\\
 & =\left\Vert \Theta_{\mathcal{T}}^{\dagger}\mathbf{n}\right\Vert _{F}\overset{(e_{2})}{\leq}\frac{1}{\sqrt{1-\delta}}\left\Vert \mathbf{n}\right\Vert 
\end{align*}
where $(e_{2})$ uses the RIP property in $\mathcal{\mathcal{E}}_{2s,\delta}$.

\subsection{\label{sub:Proof-of-Lemma-Bounded}Proof of Lemma \ref{Probability-of-Bounded-noise}}

Denote the singular value decomposition (SVD) of $\mathbf{A}_{i}$
as $\mathbf{A}_{i}=\mathbf{U}_{i}\mathbf{\Sigma}_{i}\mathbf{V}_{i}^{H},$
where $\mathbf{U}_{i}\in\mathbb{C}^{R\times R}$ and $\mathbf{V}_{i}\in\mathbb{C}^{N_{c}\times R}$
are unitary matrices, i.e., $\mathbf{U}_{i}^{H}\mathbf{U}_{i}=\mathbf{I}$,
$\mathbf{V}_{i}^{H}\mathbf{V}_{i}=\mathbf{I}$ and $\mathbf{\Sigma}_{i}=\textrm{diag}\left([\beta_{i1}\cdots\beta_{iR}]\right)$
contain the singular values $\{\beta_{ij}\geq0:\forall j\}$. We first
have $\sum_{j=1}^{R}\beta_{ij}^{2}=||\mathbf{A}_{i}||_{F}^{2}=\frac{N_{c}}{M}$.
From i) $\mathbf{n}_{i}$ is standard complex Gaussian distributed,
ii) the unitary invariance property of complex Gaussian distribution,
we obtain $\mathbf{\tilde{n}}_{i}=\mathbf{V}_{i}^{H}\mathbf{n}_{i}\in\mathbb{C}^{R\times1}$
is also standard complex Gaussian distributed and is independent of
$\mathbf{V}_{i}$. From (\ref{eq:aggregate_noise}), we have 
\begin{equation}
||\mathbf{n}||_{F}^{2}=\sum_{i=1}^{M}||\mathbf{A}_{i}\mathbf{n}_{i}||_{F}^{2}=\sum_{i=1}^{M}||\mathbf{\Sigma}_{i}\mathbf{\tilde{n}}_{i}||_{F}^{2}=\frac{1}{2}\sum_{i=1}^{M}\sum_{r=1}^{R}\beta_{ij}^{2}\chi_{2}^{[ir]}\label{eq:n_expression}
\end{equation}
where $\{\chi_{2}^{[ir]}:i=1,..,M,\, j=1,..,R\}$ are i.i.d. chi-square
distributed variables with 2 degrees of freedom. We first introduce
the tools of sub-exponential variables from \cite{PeterNotes}. 
\begin{definitn}
[Sub-Exponential Variable]\label{Sub-Exponential-Variable}$X$
is \emph{sub-exponential} with parameters $(\sigma^{2},b)$ if 
\[
\ln\left(\mathbb{E}\left[\exp\left(t(X-\mathbb{E}X)\right)\right]\right)\leq\frac{t^{2}\sigma^{2}}{2},\quad\forall|t|<1/b.
\]
\hfill \IEEEQED
\end{definitn}

Based on the above definition, we further introduce the following
properties for sub-exponential variables \cite{PeterNotes}.
\begin{lemma}
[Properties of Sub-Exponential Variable]\label{Sum-of-Sub-Exponential}The
following properties holds

(i) If $X$ is \emph{sub-exponential} with parameters $(\sigma^{2},b)$,
then $X$ is \emph{also sub-exponential} with parameters $(\bar{\sigma}^{2},\bar{b})$,
$\forall\bar{\sigma}^{2}\geq\sigma^{2}$ and $\bar{b}\geq b$; 

(ii) For independent \emph{sub-exponential} variables $X_{i}$ with
parameter $(\sigma_{i}^{2},b_{i})$, $i=1,..,M$, the weighted sum
$X=\sum_{i=1}^{M}w_{i}X_{i}$ is also sub-exponential with parameters
$(\sigma^{2},b)$, where $\sigma^{2}=\sum_{i=1}^{M}w_{i}^{2}\sigma_{i}^{2}$
and $b=\max_{i}w_{i}b_{i}$. 

(iii) For \emph{sub-exponential} variables $X$ with parameters $(\sigma^{2},b)$,
\[
\Pr\left(X-\mathbb{E}X\geq t\right)\leq\begin{cases}
\exp\left(-\frac{t^{2}}{2\sigma^{2}}\right) & ,\quad0\leq t\leq\frac{\sigma^{2}}{b}\\
\exp\left(-\frac{t}{2b}\right) & ,\quad t>\frac{\sigma^{2}}{b}
\end{cases}.
\]
\hfill \IEEEQED
\end{lemma}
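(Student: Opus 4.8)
The plan is to derive all three items directly from Definition \ref{Sub-Exponential-Variable}, which controls the centered log-moment generating function $\psi_X(t)\triangleq\ln\mathbb{E}\left[\exp\left(t(X-\mathbb{E}X)\right)\right]$ by $\frac{t^{2}\sigma^{2}}{2}$ on the domain $|t|<1/b$. Items (i) and (ii) are pure bookkeeping on this MGF bound and its admissible range, while item (iii) is a single Chernoff optimization, so no concentration machinery beyond the definition is needed.

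For item (i), I would note that enlarging the parameters both shrinks the validity interval and weakens the target bound, so nothing can fail. Since $\bar{b}\geq b$ gives $1/\bar{b}\leq1/b$, the inequality $\psi_{X}(t)\leq\frac{t^{2}\sigma^{2}}{2}$ already holds on the smaller interval $|t|<1/\bar{b}$; and there $\frac{t^{2}\sigma^{2}}{2}\leq\frac{t^{2}\bar{\sigma}^{2}}{2}$ because $\bar{\sigma}^{2}\geq\sigma^{2}$, establishing the $(\bar{\sigma}^{2},\bar{b})$-bound in one line. For item (ii), the key is that independence factorizes the MGF: writing $X-\mathbb{E}X=\sum_{i}w_{i}(X_{i}-\mathbb{E}X_{i})$ and using independence gives $\psi_{X}(t)=\sum_{i}\psi_{X_{i}}(w_{i}t)$. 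Each summand satisfies $\psi_{X_{i}}(w_{i}t)\leq\frac{(w_{i}t)^{2}\sigma_{i}^{2}}{2}$ whenever $|w_{i}t|<1/b_{i}$, i.e. $|t|<1/(w_{i}b_{i})$. Intersecting these constraints yields the common range $|t|<1/\max_{i}(w_{i}b_{i})=1/b$, on which $\psi_{X}(t)\leq\frac{t^{2}}{2}\sum_{i}w_{i}^{2}\sigma_{i}^{2}=\frac{t^{2}\sigma^{2}}{2}$, which is exactly sub-exponentiality with $\sigma^{2}=\sum_{i}w_{i}^{2}\sigma_{i}^{2}$ and $b=\max_{i}w_{i}b_{i}$ (for signed weights one replaces $w_{i}$ by $|w_{i}|$; in the application of (\ref{eq:n_expression}) the weights $\beta_{ij}^{2}$ are nonnegative).

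For item (iii), I would invoke the Chernoff bound: for any $\theta\in[0,1/b)$, Markov's inequality together with the MGF bound gives
\[
\Pr\left(X-\mathbb{E}X\geq t\right)\leq\exp(-\theta t)\,\mathbb{E}\left[\exp\left(\theta(X-\mathbb{E}X)\right)\right]\leq\exp\!\left(-\theta t+\tfrac{\theta^{2}\sigma^{2}}{2}\right).
\]
The exponent $f(\theta)=-\theta t+\frac{\theta^{2}\sigma^{2}}{2}$ is minimized at $\theta^{\star}=t/\sigma^{2}$. When $0\leq t\leq\sigma^{2}/b$ we have $\theta^{\star}<1/b$, so this choice is admissible and produces $\exp(-t^{2}/(2\sigma^{2}))$. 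When $t>\sigma^{2}/b$, the unconstrained optimum falls outside the domain; since $f$ is decreasing on $[0,\theta^{\star}]$ and $1/b<\theta^{\star}$, it is decreasing on all of $[0,1/b)$, so I would push $\theta\to1/b$ to obtain $\exp(-t/b+\sigma^{2}/(2b^{2}))$ and then verify $-t/b+\sigma^{2}/(2b^{2})\leq-t/(2b)$, which is equivalent to $\sigma^{2}/b\leq t$ and therefore holds in this regime.

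The main obstacle is the constrained optimization in item (iii): the restriction $\theta<1/b$, inherited from the finite MGF domain in the definition, is precisely what cleaves the bound in two pieces, a sub-Gaussian regime for $t\leq\sigma^{2}/b$ and a sub-exponential regime for $t>\sigma^{2}/b$. The crux is recognizing that in the second regime the optimal $\theta$ is pinned at the boundary $1/b$ and checking that the resulting boundary value collapses to the clean tail $\exp(-t/(2b))$.
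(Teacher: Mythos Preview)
Your proof is correct and complete. The paper does not actually supply its own proof of this lemma: it is stated as a known fact imported from \cite{PeterNotes} and closed with a QED symbol without argument, so there is nothing to compare against beyond noting that your Chernoff-based derivation is the standard one underlying that reference. One cosmetic remark: at the boundary $t=\sigma^{2}/b$ your claimed optimizer $\theta^{\star}=t/\sigma^{2}$ equals $1/b$, which lies just outside the open domain $|t|<1/b$; this is harmless since the two tail expressions $\exp(-t^{2}/(2\sigma^{2}))$ and $\exp(-t/(2b))$ agree there (or, equivalently, a limiting $\theta\uparrow 1/b$ argument covers it), but you may want to phrase that case via the limit as you already do for $t>\sigma^{2}/b$.
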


\begin{lemma}
[Sub-exponential Property of Chi-square Variable]\label{Sub-exponential-Property-of}Suppose
$\chi_{2}$ is a chi-square random variable with 2 degrees of freedom,
then $\chi_{2}$ is also a sub-exponential variable with parameters
$(\sigma^{2}=8,b=4)$.\end{lemma}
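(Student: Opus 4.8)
The plan is to work directly from the moment generating function (MGF) of the chi-square distribution and verify the sub-exponential bound in Definition \ref{Sub-Exponential-Variable}. First I would recall that a chi-square variable $\chi_2$ with two degrees of freedom has MGF $\mathbb{E}[\exp(t\chi_2)] = (1-2t)^{-1}$ for $t < 1/2$ and mean $\mathbb{E}[\chi_2] = 2$. Hence the centered cumulant generating function is
\[
\psi(t) \triangleq \ln \mathbb{E}\left[\exp\left(t(\chi_2 - \mathbb{E}\chi_2)\right)\right] = -\ln(1-2t) - 2t, \quad t < \tfrac{1}{2}.
\]
The goal then reduces to showing $\psi(t) \leq \frac{t^2 \sigma^2}{2} = 4t^2$ on the range $|t| < 1/b = 1/4$, which is exactly the defining inequality with $\sigma^2 = 8$ and $b = 4$.

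Next I would substitute $u = 2t$ so that the target inequality becomes $-\ln(1-u) - u \leq u^2$ for $|u| < 1/2$. For $u \in [0, 1/2)$ I would use the power-series expansion $-\ln(1-u) - u = \sum_{k\geq 2} u^k/k$, bound each coefficient by $1/k \leq 1/2$, and sum the resulting geometric series to obtain $-\ln(1-u) - u \leq \frac{u^2}{2(1-u)}$. The constraint $u < 1/2$ then forces $\frac{1}{2(1-u)} \leq 1$, yielding exactly $-\ln(1-u) - u \leq u^2$. For $u \in (-1/2, 0)$ the function is even smaller: writing $v = -u > 0$ and using the elementary inequality $\ln(1+v) \geq v - v^2/2$ (provable by checking that $h(v) = \ln(1+v) - v + v^2/2$ has $h(0)=0$ and $h'(v) = v^2/(1+v) \geq 0$), I get $-\ln(1-u) - u = v - \ln(1+v) \leq v^2/2 \leq u^2$. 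Thus the bound holds across the entire interval $|u| < 1/2$.

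Translating back to $t$ establishes $\psi(t) \leq 4t^2$ for all $|t| < 1/4$, which completes the verification. The main technical point to watch is that the naive second-order Taylor approximation $\psi(t) \approx 2t^2$ is \emph{not} uniformly valid up to the boundary $|t| = 1/4$, since the cubic and higher terms of $\psi$ are no longer negligible there; the geometric-series estimate $\frac{u^2}{2(1-u)}$ is precisely what absorbs these higher-order contributions, and the fact that the admissible range $u < 1/2$ is exactly the threshold making the prefactor $\frac{1}{2(1-u)}$ drop below $1$ is what pins down the value $b = 4$. I expect this boundary matching to be the only delicate step; the rest is a routine MGF computation together with a sign split on $u$.
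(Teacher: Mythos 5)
Your proof is correct and follows essentially the same route as the paper's: both reduce the claim to the inequality $\frac{1}{1-2t}\exp(-2t)\leq\exp(4t^{2})$ for $|t|\leq\tfrac{1}{4}$ via the centered moment generating function of $\chi_{2}$. The only difference is that the paper asserts this inequality without justification, whereas you supply the verification (the geometric-series bound $-\ln(1-u)-u\leq\frac{u^{2}}{2(1-u)}\leq u^{2}$ for $0\leq u<\tfrac12$ and the elementary bound $v-\ln(1+v)\leq v^{2}/2$ for the negative range), which is a welcome completion of the argument.
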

\begin{proof}
Note that $\mathbb{E}\left(\exp\left(\left(\chi_{2}-\mathbb{E}(\chi_{2})\right)t\right)\right)=\frac{1}{1-2t}\exp(-2t)\leq\exp(4t^{2})$,
$\forall|t|\leq\frac{1}{4}$. From Definition \ref{Sub-Exponential-Variable},
the Lemma is proved. 
\end{proof}

We utilize the above tools to derive the probability bound for $||\mathbf{n}||$.
From (\ref{eq:n_expression}) and the first two properties in Lemma
\ref{Sum-of-Sub-Exponential}, $||\mathbf{n}||_{F}^{2}=\frac{1}{2}\sum_{i=1}^{M}\sum_{j=1}^{R}\beta_{ij}^{2}\chi_{2}^{[ij]}$
is also a sub-exponential variable with parameters $\sigma^{2}=2\frac{N_{c}}{M}^{2}$
and $b=\frac{2N_{c}}{M}$. From Lemma \ref{Sub-Exponential-Variable}
(iii) and $\mathbb{E}(||\mathbf{n}||_{F}^{2})=N_{c}$, we obtain when
$\lambda^{2}>2N_{c}$, i.e., $t=\lambda^{2}-N_{c}>\sigma^{2}/b=N_{c}$,
\[
\Pr\left(||\mathbf{n}||\geq\lambda\right)\leq\exp\left(-\frac{M(\lambda^{2}-N_{c})}{4N_{c}}\right).
\]

\subsection{\label{sub:Proof-of-Lemma-active_bound}Proof of Theorem \ref{Probability-of-Correct}}

From Theorem \ref{auxiliary_lemma}, we have{\small{}
\begin{eqnarray*}
 &  & \Pr\left(\hat{\mathcal{T}}=\mathcal{T}\mid\mathcal{\mathcal{E}}_{2s,\delta}\right)\geq\Pr(\mathcal{\mathcal{E}}_{1}\;\mbox{and}\;||\mathbf{n}||_{F}\leq\lambda)\\
 & \geq & 1-\Pr(||\mathbf{n}||_{F}>\lambda)-\Pr(\overline{\mathcal{\mathcal{E}}_{1}})\\
 & \geq & 1-\exp\left(-c_{1}M\right)-s\cdot\left(1-\exp\left(-\frac{2\left(c_{2}\lambda\right)^{2}}{P_{\min}}\right)\right),
\end{eqnarray*}
}where $\overline{\mathcal{\mathcal{E}}_{1}}$ denotes the complement
event of $\mathcal{\mathcal{E}}_{1}$ in Theorem \ref{auxiliary_lemma}.
From $\Pr\left(\hat{\mathcal{T}}=\mathcal{T}\right)\geq\Pr\left(\hat{\mathcal{T}}=\mathcal{T}\mid\mathcal{\mathcal{E}}_{2s,\delta}\right)\Pr(\mathcal{\mathcal{E}}_{2s,\delta})$,
Theorem \ref{Probability-of-Correct} is proved.

\subsection{\label{sub:Proof-of-Theorem-RIP}Proof of Theorem \ref{RIP-of-The-M}}

We first introduce some notations and events to facilitate the proof.
We then divide the proof of Theorem \ref{RIP-of-The-M} into three
parts (i.e., Lemma \ref{Probability-Bound-I}-\ref{Probability-Bounds-III}
below). Denote the $q$-th norm of a random variable $X$ as $\mathbb{E}^{q}(X)\triangleq\left(\mathbb{E}\left(\left|X\right|^{q}\right)\right)^{\frac{1}{q}}$
and the norm operator $|||\cdot|||_{k}$ as $|||\mathbf{A}|||_{k}=\sup_{\#|\mathcal{S}|\leq k}\left\Vert \left(\mathbf{A}\right)_{\mathcal{\mathcal{S}\times\mathcal{S}}}\right\Vert $,
where $\left(\mathbf{A}\right)_{\mathcal{S}\times\mathcal{S}}$ returns
the principal submatrix with columns and row indices in $\mathcal{S}$.
Denote 
\[
Z\triangleq|||\Theta^{H}\Theta-\mathbf{I}|||_{k}
\]
 Therefore, matrix $\Theta$ has a $k$-th RIP property with RIC $\delta$
(i.e., event $\mathcal{\mathcal{E}}_{k,\delta}$) if $Z\leq\delta$
\cite{BeyongN2010Baraniuk} and to prove {\small{}$\Pr\left(\mathcal{\mathcal{E}}_{k,\delta}\right)\geq1-\frac{4}{KN_{c}}$}
in Theorem \ref{RIP-of-The-M}, it is sufficient to prove that 
\begin{equation}
\mbox{Pr}\left(Z>\delta\right)\leq\frac{4}{KN_{c}}.\label{eq:target}
\end{equation}
Denote $\varphi_{max}\triangleq\frac{1}{\sqrt{MR}}\max_{i,k,c}\left|H_{ik}^{[c]}\right|$
and  matrix $\mathbf{H}^{[c]}$ as 
\begin{equation}
\mathbf{H}^{[c]}=\frac{1}{\sqrt{M}}\left[\begin{array}{cccc}
H_{11}^{[c]} & H_{12}^{[c]} & \cdots & H_{1K}^{[c]}\\
H_{21}^{[c]} & H_{22}^{[c]} & \cdots & H_{2K}^{[c]}\\
\vdots & \vdots & \ddots\\
H_{M1}^{[c]} & H_{M2}^{[c]} & \cdots & H_{MK}^{[c]}
\end{array}\right].\label{eq:H^=00005Bc=00005D}
\end{equation}
Denote events $\mathcal{\mathcal{F}}_{1}$ and $\mathcal{\mathcal{F}}_{2}$
as: {\small{}
\begin{equation}
\mathcal{\mathcal{F}}_{1}:\quad\varphi_{max}\leq C_{3}\bar{g}\sqrt{\frac{\log(KN_{c})}{MR}}\label{eq:F_1_event}
\end{equation}
\begin{equation}
\mathcal{\mathcal{F}}_{2}:\quad\max_{c\in\{1,..,N_{c}\}}|||\left(\mathbf{H}^{[c]}\right)^{H}\mathbf{H}^{[c]}-\mathbf{I}|||_{k^{'}}\leq\frac{C_{4}\delta}{\sqrt{\log KN_{c}}},\label{eq:F_2_event}
\end{equation}
}where $C_{3}=\sqrt{2}e^{0.75}$, $e$ is the natural logarithm base,
$C_{4}$ ($C_{4}<1$) is a constant to be given later (Appendix \ref{sub:Proof-of-Lemma-II}).
Note that $\mathcal{\mathcal{F}}_{1}$ and $\mathcal{\mathcal{F}}_{2}$
depend on the channel matrices $\mathcal{H}\triangleq\{H_{ik}^{[c]}:\forall i,k,c\}$
only. Using the conditional probability property, we obtain{\small{}
\begin{equation}
\mbox{Pr}\left(Z>\delta\right)\leq\Pr\left(\overline{\mathcal{F}_{1}}\right)+\Pr\left(\overline{\mathcal{F}_{2}}\right)+\mbox{Pr}\left(Z>\delta\mid\mathcal{\mathcal{F}}_{1}\mathcal{\mathcal{F}}_{2}\right),\label{eq:combine_equation}
\end{equation}
}where $\overline{\mathcal{\mathcal{F}}}$ denotes the complement
event of $\mathcal{F}$. Based on (\ref{eq:combine_equation}), from
the results of $\Pr\left(\overline{\mathcal{F}_{1}}\right)\leq\frac{1}{KN_{c}}$,
$\Pr\left(\overline{\mathcal{F}_{2}}\right)\leq\frac{1}{KN_{c}}$
and $\mbox{Pr}\left(Z>\delta\mid\mathcal{\mathcal{F}}_{1}\mathcal{\mathcal{F}}_{2}\right)\leq\frac{2}{KN_{c}}$
in the following Lemma \ref{Probability-Bound-I}-\ref{Probability-Bounds-III}
respectively, we can obtain $\mbox{Pr}\left(Z>\delta\right)\leq\frac{4}{KN_{c}}$.
Therefore, Theorem \ref{RIP-of-The-M} is proved. 
\begin{lemma}
[Probability Bound I]\label{Probability-Bound-I}Suppose $M\leq KN_{c}$.
We have $\Pr\left(\overline{\mathcal{F}_{1}}\right)\leq\frac{1}{KN_{c}}$. \end{lemma}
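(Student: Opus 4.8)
The plan is to reduce the claim to a union bound over the $MKN_c$ channel coefficients combined with the Gaussian tail of the small-scale fading. First I would unwind the definitions. Recalling $\varphi_{max} = \frac{1}{\sqrt{MR}}\max_{i,k,c}|H_{ik}^{[c]}|$, the complement event $\overline{\mathcal{F}_1}$ of (\ref{eq:F_1_event}) is exactly
\[
\max_{i,k,c}\left|H_{ik}^{[c]}\right| > C_3\,\bar{g}\sqrt{\log(KN_c)},
\]
so the $\frac{1}{\sqrt{MR}}$ normalization cancels on both sides and the statement ceases to depend on $R$.

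Next I would exploit the product structure $H_{ik}^{[c]} = g_{ik}^{[c]}h_{ik}^{[c]}$ from Assumption \ref{Channel-ModelThe-channel}. Since the large-scale gain satisfies $g_{ik}^{[c]}\in[0,\bar{g}]$ almost surely, we have $|H_{ik}^{[c]}| \leq \bar{g}\,|h_{ik}^{[c]}|$ for every $i,k,c$; this replaces the random $g$ by its deterministic upper bound $\bar{g}$ and leaves only the standard complex Gaussian factor $h_{ik}^{[c]}$ to control. A union bound over all $MKN_c$ indices then gives
\[
\Pr\left(\overline{\mathcal{F}_1}\right) \leq \sum_{i,k,c}\Pr\left(\left|h_{ik}^{[c]}\right| > C_3\sqrt{\log(KN_c)}\right).
\]

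The quantitative input is the magnitude tail of $h_{ik}^{[c]}$: since $h_{ik}^{[c]}$ is standard complex Gaussian with unit variance, $|h_{ik}^{[c]}|^2$ is exponentially distributed with mean one, whence $\Pr(|h_{ik}^{[c]}| > x) = \exp(-x^2)$. Substituting $x = C_3\sqrt{\log(KN_c)}$ produces $(KN_c)^{-C_3^2}$ per term, so that $\Pr(\overline{\mathcal{F}_1}) \leq MKN_c\,(KN_c)^{-C_3^2}$. Finally I would invoke the hypothesis $M\leq KN_c$ to bound the number of terms by $(KN_c)^2$, giving $\Pr(\overline{\mathcal{F}_1}) \leq (KN_c)^{2-C_3^2}$; with the prescribed $C_3 = \sqrt{2}\,e^{0.75}$ we have $C_3^2 = 2e^{3/2} > 3$, hence $2 - C_3^2 \leq -1$ and the target bound $\frac{1}{KN_c}$ follows.

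I would not expect any genuine obstacle here, since this is a routine concentration and union-bound argument. The only points requiring care are (i) justifying that the product $g\cdot h$ can be dominated by replacing $g$ with its deterministic ceiling $\bar{g}$, so that only the Gaussian tail enters the estimate, and (ii) verifying that the specific constant $C_3 = \sqrt{2}\,e^{0.75}$ is large enough to absorb both the factor $M\leq KN_c$ and the $KN_c$ terms in the union bound while still retaining a surplus exponent that delivers the $\frac{1}{KN_c}$ decay.
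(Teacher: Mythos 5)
Your proof is correct, but it takes a different route from the paper's. The paper follows the moment-method template of the random demodulator analysis: it bounds the $q$-th moment of the maximum via $\mathbb{E}^{q}(\varphi_{max})\leq(MKN_{c})^{1/q}\max_{i,k,c}\mathbb{E}^{q}(|g_{ik}^{[c]}h_{ik}^{[c]}|)\leq\frac{\bar{g}(MKN_{c})^{1/q}}{\sqrt{MR}}\Gamma(\tfrac{q}{2}+1)^{1/q}$, chooses $q=4\log(KN_{c})$ so that $(MKN_{c})^{1/q}\leq e^{0.5}$ and $\Gamma(\tfrac{q}{2}+1)^{1/q}\leq\sqrt{q/2}$, and then applies Markov's inequality to the $q$-th power, the extra factor $e^{0.25}$ in $C_{3}=\sqrt{2}e^{0.75}$ being exactly what makes $(e^{-0.25})^{q}=\frac{1}{KN_{c}}$. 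You instead strip off the $\frac{1}{\sqrt{MR}}$ normalization, dominate $g_{ik}^{[c]}$ by its ceiling $\bar{g}$, and run a plain union bound against the exact tail $\Pr(|h_{ik}^{[c]}|>x)=e^{-x^{2}}$ of the unit-variance complex Gaussian; with $C_{3}^{2}=2e^{3/2}\approx 8.96$ and $M\leq KN_{c}$ this yields $(KN_{c})^{2-C_{3}^{2}}\leq\frac{1}{KN_{c}}$, in fact with a large margin (roughly $(KN_{c})^{-6.96}$). Your argument is more elementary and quantitatively stronger here because the small-scale fading has an exactly exponential squared-magnitude; the paper's moment method is the more portable tool (it needs only moment growth of the entries, not an explicit tail, and is the form reused elsewhere in the RIP analysis), and it explains where the otherwise mysterious constant $C_{3}=\sqrt{2}e^{0.75}$ comes from, whereas in your derivation that specific value is merely "large enough."
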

\begin{proof}
From \cite{BeyongN2010Baraniuk} (equation (14)), for any $q$, {\small{}
\begin{eqnarray*}
\mathbb{E}^{q}\left(\varphi_{max}\right) & \leq & (MKN_{c})^{\nicefrac{1}{q}}\max_{i,k,c}\mathbb{E}^{q}(|g_{ik}^{[c]}h_{ik}^{[c]}|)\\
 & \leq & \frac{\bar{g}(MKN_{c})^{\nicefrac{1}{q}}}{\sqrt{MR}}\Gamma\left(\frac{q}{2}+1\right)^{\frac{1}{q}}
\end{eqnarray*}
} Selecting $q=4\log(KN_{c})$, we obtain $(MKN_{c})^{\nicefrac{1}{q}}\leq(KN_{c})^{\nicefrac{2}{q}}\leq(KN_{c})^{\frac{1}{2\log(KN_{c})}}=e^{0.5}$
(and $\Gamma\left(\frac{q}{2}+1\right)^{\frac{1}{q}}\leq\sqrt{\frac{q}{2}}=\sqrt{2\log(KN_{c})}$,
hence, 
\[
\mathbb{E}^{q}\left(\varphi_{max}\right)\leq e^{0.5}\sqrt{2}\bar{g}\sqrt{\frac{\log(KN_{c})}{MR}}.
\]
Using the Markov's inequality, we obtain{\footnotesize{}
\[
\Pr\left(\varphi_{max}>\sqrt{2}e^{0.75}\bar{g}\sqrt{\frac{\log(KN_{c})}{MR}}\right)\leq\left[\frac{\mathbb{E}^{q}\left(\varphi_{max}\right)}{e^{0.25}\mathbb{E}^{q}\left(\varphi_{max}\right)}\right]^{q}=\frac{1}{KN_{c}}.
\]
}{\small{}and hence Lemma \ref{Probability-Bound-I} is proved. }{\small \par}
\end{proof}

\begin{lemma}
[Probability Bounds II]\label{Probability-Bounds-II}Denote $k^{'}=\min(K,k)$.
If $M\geq C_{1}\delta^{-2}k^{'}\log(KN_{c})\log(2KN_{c})\cdot\log\left(\frac{eK}{k^{'}}\right)$,
then 
\begin{equation}
\Pr\left(\overline{\mathcal{F}_{2}}\right)\leq\frac{1}{KN_{c}},\label{eq:target_lemma6}
\end{equation}
where $C_{1}$ is a constant that depends on $C_{4}$ and $L=\sqrt{2}\bar{g}\sup_{p\geq1}2^{\frac{1}{p}}p^{-\frac{1}{2}}\Gamma\left(\frac{p}{2}+1\right)^{\frac{1}{p}}$
($L$ is a bounded constant).\end{lemma}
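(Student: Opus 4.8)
The plan is to recognize that the event $\mathcal{F}_2$ in \eqref{eq:F_2_event} is, for each subcarrier $c$, exactly a restricted isometry statement for the rescaled channel matrix $\mathbf{H}^{[c]}$: writing $\epsilon\triangleq\frac{C_4\delta}{\sqrt{\log KN_c}}$, the bound $|||(\mathbf{H}^{[c]})^H\mathbf{H}^{[c]}-\mathbf{I}|||_{k'}\leq\epsilon$ says precisely that $\mathbf{H}^{[c]}$ obeys a $k'$-th order RIP with RIC $\epsilon$ in the sense of Definition \ref{RIPA-matrix}, because every principal $k'\times k'$ submatrix of $(\mathbf{H}^{[c]})^H\mathbf{H}^{[c]}-\mathbf{I}$ is the Gram deviation of the corresponding $k'$ columns of $\mathbf{H}^{[c]}$. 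Since $\overline{\mathcal{F}_2}=\bigcup_{c=1}^{N_c}\{|||(\mathbf{H}^{[c]})^H\mathbf{H}^{[c]}-\mathbf{I}|||_{k'}>\epsilon\}$, a union bound over the $N_c$ subcarriers reduces the task to showing, for a single $c$, that $\mathbf{H}^{[c]}$ fails the $k'$-RIP with RIC $\epsilon$ with probability at most $\frac{1}{KN_c^2}$.

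The key structural facts I would exploit are that the $K$ columns of $\mathbf{H}^{[c]}$ are \emph{independent} across $k$ (from Assumption \ref{Channel-ModelThe-channel}, the vectors $\mathbf{g}_k^{[c]}$ are independent over $k$ and the small-scale fading is i.i.d.), and that each column is \emph{isotropic}: a direct moment computation gives $\mathbb{E}[H_{ik}^{[c]}\overline{H_{jk}^{[c]}}]=\delta_{ij}$, using $\mathbb{E}[(g_{ik}^{[c]})^2]=1$ together with the independence and zero mean of the $h$-factors, whence $\mathbb{E}[(\mathbf{H}^{[c]})^H\mathbf{H}^{[c]}]=\mathbf{I}$ (confirming the target is $\mathbf{I}$). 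Moreover each entry $H_{ik}^{[c]}=g_{ik}^{[c]}h_{ik}^{[c]}$ is sub-Gaussian: with $|g_{ik}^{[c]}|\leq\bar g$ and the complex-Gaussian moments $\mathbb{E}|h_{ik}^{[c]}|^p=\Gamma(\frac p2+1)$, one obtains $\mathbb{E}^p(|H_{ik}^{[c]}|)\leq\bar g\,\Gamma(\frac p2+1)^{1/p}$, so that (up to the absolute factors $\sqrt2\,2^{1/p}$) the $\psi_2$-norm $\sup_{p\geq1}p^{-1/2}\mathbb{E}^p(|H_{ik}^{[c]}|)$ is bounded by the constant $L$ in the statement. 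Hence $\mathbf{H}^{[c]}$ is a matrix with independent, isotropic, sub-Gaussian columns of $\psi_2$-norm at most $L$.

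With these facts in hand, I would invoke the RIP result for matrices with independent sub-Gaussian columns from \cite{vershynin2010introduction}, established by the standard route: an $\epsilon$-net over the unit sphere of each fixed $k'$-dimensional coordinate subspace; a Bernstein-type concentration of $\|\mathbf{H}^{[c]}\mathbf{v}\|^2=\frac1M\sum_{i}|\sum_{k\in S}H_{ik}^{[c]}v_k|^2$ around its mean $1$ for every fixed $k'$-sparse unit vector $\mathbf{v}$ (the summands are independent sub-exponential, mirroring the chi-square machinery of Lemma \ref{Sub-exponential-Property-of}); and a union bound over the nets and over the $\binom{K}{k'}\leq(eK/k')^{k'}$ supports. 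This yields, for a single $c$, a failure probability of order $2\exp(-c_0L^{-2}M\epsilon^2)$ provided the net precondition $M\geq C(L)\epsilon^{-2}k'\log(\frac{eK}{k'})$ holds; the moment/Markov device of \cite{BeyongN2010Baraniuk} used in Lemma \ref{Probability-Bound-I} is an equivalent alternative for producing the same tail.

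Finally I would substitute $\epsilon=\frac{C_4\delta}{\sqrt{\log KN_c}}$, so $\epsilon^{-2}=\frac{\log KN_c}{C_4^2\delta^2}$, and verify that the stated hypothesis $M\geq C_1\delta^{-2}k'\log(KN_c)\log(2KN_c)\log(\frac{eK}{k'})$ simultaneously (i) satisfies the net precondition $M\geq C(L)\epsilon^{-2}k'\log(\frac{eK}{k'})$ and (ii) forces $c_0L^{-2}M\epsilon^2\geq\log(2KN_c^2)$, so that the per-subcarrier failure probability is at most $\frac{1}{KN_c^2}$ and the union bound over $c$ gives $\Pr(\overline{\mathcal{F}_2})\leq\frac{1}{KN_c}$; the constant $C_1$ is then fixed in terms of $C_4$, $L$ (equivalently $\bar g$), and the absolute net/Bernstein constants. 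I expect the main obstacle to be threefold: controlling the \emph{restricted} norm $|||\cdot|||_{k'}$ uniformly over all $\binom{K}{k'}$ supports rather than a single submatrix, which is what forces the $\log(\frac{eK}{k'})$ factor; tracking the sub-exponential (rather than bounded) tails carefully enough that the small target RIC $\epsilon\sim1/\sqrt{\log KN_c}$ still produces an exponent large enough to beat the combined support/net/subcarrier union bounds; and bundling the two resulting requirements into the single clean product form of the stated condition on $M$.
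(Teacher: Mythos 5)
Your high-level skeleton matches the paper's: union bound over the $N_{c}$ subcarriers, reduction to a per-subcarrier RIP statement for $\mathbf{H}^{[c]}$ with RIC $\epsilon=\frac{C_{4}\delta}{\sqrt{\log KN_{c}}}$, the observation that the columns of $\mathbf{H}^{[c]}$ are independent, isotropic and sub-Gaussian with $\psi_2$-norm controlled by $L$, and the final substitution of $\epsilon$ into the sample-size condition. However, there is a genuine gap at the central concentration step. You describe the "standard route" as a per-vector Bernstein bound on $\left\Vert \mathbf{H}^{[c]}\mathbf{v}\right\Vert ^{2}=\frac{1}{M}\sum_{i}|\sum_{k\in S}H_{ik}^{[c]}v_{k}|^{2}$ with "independent sub-exponential" summands over the row index $i$. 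But the rows of $\mathbf{H}^{[c]}$ are \emph{not} independent: for a fixed column $k$, the large-scale gains $\{g_{ik}^{[c]}\}_{i}$ are coupled by the deterministic normalization $\sum_{i}(g_{ik}^{[c]})^{2}=M$ in Assumption \ref{Channel-ModelThe-channel}. The summands over $i$ are only \emph{conditionally} independent given the large-scale fading, and your argument never performs that conditioning (it can be repaired this way, since conditionally the mean is exactly $1$ and the weights are bounded by $\bar{g}^{2}/M$, but the step as written is not justified).

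Moreover, the citation you lean on does not deliver what you need off the shelf: Vershynin's RIP theorem for matrices with independent sub-Gaussian \emph{columns} (Thm 5.65 of \cite{vershynin2010introduction}) assumes the columns have \emph{exactly} unit norm, which $\mathbf{h}_{k}$ does not ($\left\Vert \mathbf{h}_{k}\right\Vert ^{2}$ is random, with mean $1$). This is precisely why the paper's proof introduces the auxiliary event $\mathcal{F}_{3}$ (controlling $\max_{k}\left|\left\Vert \mathbf{h}_{k}\right\Vert ^{2}-1\right|$ via the sub-exponential chi-square machinery of Lemmas \ref{Sum-of-Sub-Exponential}--\ref{Sub-exponential-Property-of}) and then proves a \emph{generalized} version of Vershynin's theorem (Lemma \ref{Sub-gaussian-property}) valid for near-unit-norm columns, splitting the failure probability as $\Pr(\overline{\mathcal{F}_{3}})+\Pr(\cdot\mid\mathcal{F}_{3})$. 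Your proposal conflates the independent-rows route (which needs the conditioning you omitted) with the independent-columns route (which needs the column-norm control you omitted); either repair closes the gap, but one of them is required.
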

\begin{proof}
See Appendix \ref{sub:Proof-of-Lemma-one}. 
\end{proof}

\begin{lemma}
[Probability Bounds III]\label{Probability-Bounds-III}If $MR\geq C_{2}\delta^{-2}k\log^{6}KN_{c}$,
then 
\begin{equation}
\mbox{Pr}\left(Z>\delta\mid\mathcal{\mathcal{F}}_{1}\mathcal{\mathcal{F}}_{2}\right)\leq\frac{2}{KN_{c}},\label{eq:target_lemma7}
\end{equation}
where $C_{2}$ is a constant that depends on $\bar{g}$.\end{lemma}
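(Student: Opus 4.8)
The plan is to argue conditionally on the channel realization $\mathcal{H}=\{H_{ik}^{[c]}:\forall i,k,c\}$, i.e.\ on the event $\mathcal{F}_1\cap\mathcal{F}_2$, so that the only remaining randomness sits in the i.i.d.\ phases $\{\theta_{irc}\}$ of the compression matrices $\{\mathbf{A}_i\}$. The key is to track how the column and row correlations of $\Theta$ (Observations I and II) propagate into $\Theta^{H}\Theta$. A generic entry of $\Theta^{H}\Theta$ indexed by the pairs $(c,k)$ and $(c',k')$ equals $\sum_{i}\overline{H_{ik}^{[c]}}H_{ik'}^{[c']}\tfrac{1}{MR}\sum_{r=1}^{R}e^{j(\theta_{irc'}-\theta_{irc})}$, which exhibits a sharp dichotomy: when $c=c'$ the inner phase sum collapses to $R/(MR)=1/M$ and the corresponding diagonal block is the \emph{deterministic} Gram matrix $(\mathbf{H}^{[c]})^{H}\mathbf{H}^{[c]}$; when $c\neq c'$ the phases are independent with zero mean, so the off-diagonal blocks are genuinely random with zero conditional mean. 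Equivalently, columns of $\Theta$ belonging to distinct subcarriers are conditionally independent, which is exactly the independent-column structure exploited in \cite{vershynin2010introduction}.

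Accordingly, I would split $\Theta^{H}\Theta-\mathbf{I}=(D-\mathbf{I})+E$, where $D=\mathrm{blkdiag}((\mathbf{H}^{[1]})^{H}\mathbf{H}^{[1]},\dots,(\mathbf{H}^{[N_c]})^{H}\mathbf{H}^{[N_c]})$ is deterministic given $\mathcal{H}$ and $E$ is the Hermitian, zero-mean, off-block-diagonal random remainder. For any support $\mathcal{S}$ with $\#|\mathcal{S}|\leq k$, the triangle inequality yields $\|(\Theta^{H}\Theta-\mathbf{I})_{\mathcal{S}\times\mathcal{S}}\|\leq\|(D-\mathbf{I})_{\mathcal{S}\times\mathcal{S}}\|+\|E_{\mathcal{S}\times\mathcal{S}}\|$. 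Because $D-\mathbf{I}$ is block diagonal, its restricted norm is the maximum over subcarriers of $\|((\mathbf{H}^{[c]})^{H}\mathbf{H}^{[c]}-\mathbf{I})_{\mathcal{S}_c\times\mathcal{S}_c}\|$ with $\#|\mathcal{S}_c|\leq\min(K,k)=k'$; hence, by the definition of $|||\cdot|||_{k'}$ and event $\mathcal{F}_2$ (choosing the free constant $C_4\leq\tfrac12$), the deterministic part is at most $\delta/2$ uniformly in $\mathcal{S}$. It therefore suffices to establish the probabilistic bound $\Pr(\sup_{\#|\mathcal{S}|\leq k}\|E_{\mathcal{S}\times\mathcal{S}}\|>\delta/2\mid\mathcal{F}_1\mathcal{F}_2)\leq 2/(KN_c)$, which concentrates the whole difficulty into the random off-diagonal chaos $E$.

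To control the sparse operator norm of $E$ I would follow the two-step template of the random-demodulator RIP proof in \cite{BeyongN2010Baraniuk}, fused with the independent-column apparatus of \cite{vershynin2010introduction}. Step one bounds the expectation $\mathbb{E}[\sup_{\#|\mathcal{S}|\leq k}\|E_{\mathcal{S}\times\mathcal{S}}\|]$: after symmetrization, the across-subcarrier independence permits a Rudelson-type decoupling and a Dudley-chaining estimate, in which the uniform entrywise control supplied by $\mathcal{F}_1$ (note that $\varphi_{max}$ is precisely the largest magnitude of any entry of $\Theta$, bounded by $C_3\bar g\sqrt{\log(KN_c)/(MR)}$) plays the role of the coherence parameter. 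This stage is what inflates the logarithmic factors and forces the requirement $MR\geq C_2\delta^{-2}k\log^{6}(KN_c)$ with $C_2=C_2(\bar g)$, keeping the expected norm below $\delta/4$. Step two lifts this expectation estimate to the tail $2/(KN_c)$ by bounding a sufficiently high moment of the supremum and applying Markov's inequality, again as in \cite{BeyongN2010Baraniuk}, with the extra moment order absorbed into the same $\log^{6}$ factor and into $C_2$.

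The hard part will be Step one. Since the columns sharing a common subcarrier are \emph{correlated}, $E_{\mathcal{S}\times\mathcal{S}}$ is not a sum of independent rank-one matrices, so the off-the-shelf independent-column RIP bounds cannot be quoted verbatim; I must instead exploit the weaker \emph{group} independence across subcarriers and combine it with the entrywise bound from $\mathcal{F}_1$ to run the chaining argument over the restricted index family. Carefully tracking the resulting entropy integrals is what pins down both the exponent $6$ on the logarithm and the constant $C_2$, and this is where essentially all of the technical effort lies.
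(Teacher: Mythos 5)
Your opening decomposition is essentially the paper's: conditioning on $\mathcal{H}$ (so that only the phases $\{\theta_{irc}\}$ remain random), observing that the diagonal blocks of $\Theta^{H}\Theta$ collapse to the deterministic Gram matrices $(\mathbf{H}^{[c]})^{H}\mathbf{H}^{[c]}$ controlled by $\mathcal{F}_{2}$, and isolating a zero-mean remainder $E$ to be controlled by $\mathcal{F}_{1}$ plus concentration. Up to that point you are on track.

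The gap is in how you propose to handle $E$. You assert that ``$E_{\mathcal{S}\times\mathcal{S}}$ is not a sum of independent rank-one matrices'' because columns within a subcarrier are correlated, and you therefore plan a decoupling-plus-chaining argument over \emph{column groups}. This premise is wrong, and it is precisely the observation you are missing that makes the lemma tractable. Write $\Theta^{H}\Theta=\sum_{r=1}^{MR}\mathbf{z}_{r}^{H}\mathbf{z}_{r}$ where $\mathbf{z}_{r}$ are the \emph{rows} of $\Theta$: the $(i,r)$-th row has entries $H_{ik}^{[c]}\tfrac{1}{\sqrt{MR}}e^{j\theta_{irc}}$, so distinct rows involve disjoint sets of the i.i.d.\ phases and are conditionally independent given $\mathcal{H}$. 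Hence $E=\sum_{r}\bigl(\mathbf{z}_{r}^{H}\mathbf{z}_{r}-\mathbb{E}[\mathbf{z}_{r}^{H}\mathbf{z}_{r}]\bigr)$ \emph{is} a sum of independent centered rank-one perturbations, and $\mathbb{E}\sum_{r}\mathbf{z}_{r}^{H}\mathbf{z}_{r}$ is exactly your block-diagonal $D$. This is what the paper exploits: symmetrize, apply the Rudelson--Vershynin estimate $\mathbb{E}|||\sum_{r}\xi_{r}\mathbf{z}_{r}^{H}\mathbf{z}_{r}|||_{k}\leq C_{7}B\sqrt{k}\log^{2}(KN_{c})\bigl(|||\sum_{r}\mathbf{z}_{r}^{H}\mathbf{z}_{r}|||_{k}\bigr)^{1/2}$ with $B=\|\mathbf{z}_{r}\|_{\infty}\leq\varphi_{\max}$ from $\mathcal{F}_{1}$, use Cauchy--Schwarz to obtain a self-bounding inequality for $\mathbb{E}(Z\mid\mathcal{H},\mathcal{F}_{1}\mathcal{F}_{2})$, and then invoke the tail bound for sums of independent, $|||\cdot|||_{k}$-bounded random matrices (Proposition 19 of the random-demodulator paper) with $u=\sqrt{\log(KN_{c})}$, $t=\log(KN_{c})$ to get the $2/(KN_{c})$ tail. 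Your substitute plan --- a second-order chaos decoupling over subcarrier column groups followed by Dudley chaining and a high-moment Markov bound --- is not obviously wrong, but it attacks a genuinely harder problem than necessary, is built on the mistaken structural claim above, and is exactly the part you leave unexecuted (``this is where essentially all of the technical effort lies''). As written, the proposal therefore does not constitute a proof of the lemma: the decisive mechanism is neither identified nor carried out.
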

\begin{proof}
See Appendix \ref{sub:Proof-of-Lemma-II}. 
\end{proof}

\subsection{\label{sub:Proof-of-Lemma-one}Proof of Lemma \ref{Probability-Bounds-II}}

Using the probability union bound, to prove (\ref{eq:target_lemma6})
in Lemma \ref{Probability-Bounds-II}, it is sufficient to prove that
{\small{}
\begin{equation}
\Pr\left(|||\left(\mathbf{H}^{[c]}\right)^{H}\mathbf{H}^{[c]}-\mathbf{I}|||_{k^{'}}\geq\frac{C_{4}\delta}{\sqrt{\log KN_{c}}}\right)\leq\frac{1}{KN_{c}^{2}}\label{eq:target_1}
\end{equation}
}holds for each index $c\in\{1,..,N_{c}\}$. We omit the subscript
of $c$ for conciseness and prove (\ref{eq:target_1}) in the following.
Let $\mathbf{h}_{k}$ denote the $k$-th column of $\mathbf{H}$,
$\forall k=1,..,K$. Denote $\mathcal{F}_{3}$ as the following event:
\[
\mathcal{F}_{3}:\quad\max_{k\in\{1,..,K\}}\left|||\mathbf{h}_{k}||^{2}-1\right|\leq\frac{C_{4}\delta}{4\sqrt{\log KN_{c}}},
\]
Using the conditional probability property, the following two equations
would be sufficient to derive (\ref{eq:target_1}). {\small{}
\begin{equation}
\Pr\left(\overline{\mathcal{F}_{3}}\right)\leq\frac{1}{2KN_{c}^{2}}.\label{eq:result_o}
\end{equation}
\begin{equation}
\Pr\left(|||\mathbf{H}^{H}\mathbf{H}-\mathbf{I}|||_{k^{'}}\geq\frac{C_{4}\delta}{\sqrt{\log KN_{c}}}\mid\mathcal{F}_{3}\right)\leq\frac{1}{2KN_{c}^{2}}.\label{eq:result_k}
\end{equation}
}{\small \par}

\subsubsection{Proof of equation (\ref{eq:result_o})}

Denote $\mathbf{h}_{k}=\frac{1}{\sqrt{M}}[\begin{array}{cccc}
g_{1k}h_{1k}, & g_{2k}h_{2k}, & \cdots & ,g_{Mk}h_{Mk}\end{array}]^{T}$. Then $||\mathbf{h}_{k}||^{2}=\frac{1}{M}\sum_{i=1}^{M}|g_{ik}h_{ik}|^{2}=\frac{1}{2M}\sum_{i}\left|g_{ik}\right|^{2}\chi_{2}^{[i]}$,
where $\{\chi_{2}^{[i]}:\forall i\}$ are i.i.d. chi-square random
variables with 2 degrees of freedom, and $\mathbb{E}\left(||\mathbf{h}_{k}||^{2}\right)=1$.
From Lemma \ref{Sub-exponential-Property-of} and Lemma \ref{Sum-of-Sub-Exponential}
(i-ii), $||\mathbf{h}_{k}||^{2}=\frac{1}{2M}\sum_{i}\left|g_{ik}\right|^{2}\chi_{2}^{[i]}$
is sub-exponential with parameters $(\sigma^{2}=\frac{2\bar{g}^{2}}{M},b=\frac{2g^{2}}{M})$.
From Lemma \ref{Sum-of-Sub-Exponential} (iii), we obtain{\small{}
\[
\Pr\left(||\mathbf{h}_{k}||^{2}-1\geq\frac{C_{4}\delta}{4\sqrt{\log KN_{c}}}\right)\leq\exp\left(-\frac{C_{4}\delta^{2}M}{64\bar{g}^{2}\log KN_{c}}\right).
\]
}Repeating the above derivation for $-||\mathbf{h}_{k}||^{2}$ instead
of $||\mathbf{h}_{k}||^{2}$, we obtain the same bound for $\Pr\left(||\mathbf{h}_{k}||^{2}\vphantom{-\mathbb{E}||\mathbf{h}_{k}||^{2}\leq-\frac{C_{4}\delta}{4\sqrt{\log KN_{c}}}}\right.$
$\left.-1\leq-\frac{C_{4}\delta}{4\sqrt{\log KN_{c}}}\right)$. Taking
a union bound over different $k\in\{1,..,K\}$ and choosing $M$ as
$M\geq C_{5}\delta^{-2}\log(KN_{c})\log(2KN_{c})$ with the constant
$C_{5}=128\bar{g}^{2}$, we derive equation (\ref{eq:result_o}).

\subsubsection{Proof of equation (\ref{eq:result_k})}

We further introduce the following tools on sub-Gaussian variables. 
\begin{definitn}
[SubGaussian Variable \cite{vershynin2010introduction}]A random
variable $X$ is called \emph{sub-Gaussian} if there exists a constant
$C$ such that 
\begin{equation}
p^{-\frac{1}{2}}\left(\mathbb{E}|X|^{p}\right)^{\frac{1}{p}}\leq C<\infty,\quad\forall p.\label{eq:sub-gaussian}
\end{equation}
Furthermore, the \emph{sub-Gaussian norm} of $X$, denoted as $||X||_{\psi}$,
is defined to be the smallest $L$ such that (\ref{eq:sub-gaussian})
holds, i.e., $||X||_{\psi}\triangleq\sup_{p\geq1}p^{-\frac{1}{2}}\left(\mathbb{E}|X|^{p}\right)^{\frac{1}{p}}$. 
\end{definitn}

\begin{definitn}
[SubGaussian Vector \cite{vershynin2010introduction}]A vector $\mathbf{a}\in\mathbb{C}^{m\times1}$
is called a sub-Gaussian vector if $(\mathbf{a}^{H}\mathbf{x})$ is
a sub-Gaussian variable for all $||\mathbf{x}||=1$, $\mathbf{x}\in\mathbb{C}^{m\times1}$.
Furthermore, the sub-Gaussian norm of $\mathbf{a}$ is defined as
$||\mathbf{a}||_{\psi}=\sup_{||\mathbf{x}||=1}||\mathbf{a}^{H}\mathbf{x}||_{\psi}$.
\hfill \IEEEQED
\end{definitn}

In \cite{vershynin2010introduction}, a restricted isometry property
(Thm 5.65) is given for matrices with SubGaussian column vectors,
where each column vector has a \emph{unit norm}. By tracking the proof,
we can obtain a more general form in which each column vector is only
required to be \emph{near} unit-normed. 
\begin{lemma}
[SubGaussian property \cite{vershynin2010introduction}]\label{Sub-gaussian-property}Let
$\mathbf{A}$ be an $m\times n$ random matrix where $\mathbf{A}(1)$,
$\mathbf{A}(2)$, ..., $\mathbf{A}(n)\in\mathbb{C}^{m\times1}$ are
its column vectors. Suppose i) $\max_{l\in\{1,..,n\}}\left|\frac{1}{m}\left\Vert \mathbf{A}(l)\right\Vert ^{2}-1\right|\leq\varepsilon$,
ii) $\mathbb{E}\left(\mathbf{A}(l)\mathbf{A}(l)^{H}\right)=\mathbf{I}_{m\times m},$
$\forall l$, and iii) the column vectors $\left\{ \mathbf{A}(l):\forall l\right\} $
are \emph{independent} sub-Gaussian vectors. If $m\geq C_{6}^{[1]}(\delta-2\varepsilon)^{-2}k^{'}\log\left(\frac{en}{k^{'}}\right)$,
then $|||\frac{1}{m}\mathbf{A}^{H}\mathbf{A}-\mathbf{I}|||_{k^{'}}>\delta$
happens with probability less than
\[
2\exp\left(-C_{6}^{[2]}(\delta-2\varepsilon)^{2}m\right)
\]
where $0\leq\epsilon\leq\frac{1}{2}\delta$, $C_{6}^{[1]}$ and $C_{6}^{[2]}$
are constants that depend on $L$ for any $L\geq\max_{l}\left\Vert \mathbf{A}(l)\right\Vert _{\psi}$.\end{lemma}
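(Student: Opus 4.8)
The plan is to obtain Lemma \ref{Sub-gaussian-property} as a \emph{near-unit-norm} refinement of the restricted isometry bound for matrices with independent sub-Gaussian columns (Theorem 5.65 of \cite{vershynin2010introduction}). That result assumes the columns are \emph{exactly} normalized, i.e. $\frac{1}{m}\|\mathbf{A}(l)\|^{2}=1$, whereas here we only have condition (i), $\max_{l}|\frac{1}{m}\|\mathbf{A}(l)\|^{2}-1|\le\varepsilon$. I would therefore retrace that proof and isolate the single place where the exact normalization is used, then replace it by the $\varepsilon$-slack. The guiding observation is that the isotropy condition (ii), $\mathbb{E}(\mathbf{A}(l)\mathbf{A}(l)^{H})=\mathbf{I}$, already pins the \emph{mean} of $\frac{1}{m}\|\mathbf{A}_{\mathcal{S}}\mathbf{x}\|^{2}$ to $1$ exactly; hence relaxing the column norms perturbs only a deterministic diagonal term and should surface as an effective deviation $\delta-2\varepsilon$ in the final bound.

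First I would reduce the operator-norm event to a sparse quadratic-form event. Since $\frac{1}{m}\mathbf{A}^{H}\mathbf{A}-\mathbf{I}$ is Hermitian, the event $|||\frac{1}{m}\mathbf{A}^{H}\mathbf{A}-\mathbf{I}|||_{k^{'}}>\delta$ implies the existence of a support $\mathcal{S}$ with $\#|\mathcal{S}|\le k^{'}$ and a unit vector $\mathbf{x}$ supported on $\mathcal{S}$ with $|\frac{1}{m}\|\mathbf{A}_{\mathcal{S}}\mathbf{x}\|^{2}-1|>\delta$. Writing $\mathbf{A}_{\mathcal{S}}\mathbf{x}=\sum_{l\in\mathcal{S}}\mathbf{x}(l)\mathbf{A}(l)$ and using $\sum_{l}|\mathbf{x}(l)|^{2}=1$, I would split
\[
\frac{1}{m}\|\mathbf{A}_{\mathcal{S}}\mathbf{x}\|^{2}-1=\sum_{l}|\mathbf{x}(l)|^{2}\Big(\tfrac{1}{m}\|\mathbf{A}(l)\|^{2}-1\Big)+\underbrace{\frac{1}{m}\sum_{l\neq l^{'}}\overline{\mathbf{x}(l)}\,\mathbf{x}(l^{'})\,\mathbf{A}(l)^{H}\mathbf{A}(l^{'})}_{=:Q(\mathbf{x})}.
\]
Condition (i) bounds the diagonal sum deterministically by $\varepsilon$, so it suffices to control the off-diagonal part $Q(\mathbf{x})$ to within $\delta-2\varepsilon$. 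The columns being independent and centered makes $Q(\mathbf{x})$ mean-zero, which is exactly the structure the concentration step needs.

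Next I would run the net-plus-concentration argument. Over a fixed $\mathcal{S}$ I would take a $\tfrac14$-net $\mathcal{N}$ of the unit sphere in $\mathbb{C}^{k^{'}}$, of cardinality $e^{O(k^{'})}$, and apply the net-approximation inequality to reduce the supremum over $\mathbf{x}$ to a maximum over $\mathcal{N}$. For each fixed $\mathbf{x}$, condition (iii) makes $\mathbf{A}_{\mathcal{S}}\mathbf{x}=\sum_{l}\mathbf{x}(l)\mathbf{A}(l)$ a sum of independent sub-Gaussian vectors with sub-Gaussian norm $\lesssim L$, and a Bernstein-type sub-exponential concentration yields, in the relevant small-deviation regime,
\[
\Pr\big(|Q(\mathbf{x})|>\delta-2\varepsilon\big)\le 2\exp\!\big(-c\,m\,(\delta-2\varepsilon)^{2}/L^{4}\big),
\]
with constants absorbed into $c$. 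A union bound over the $e^{O(k^{'})}$ net points and the $\binom{n}{k^{'}}\le(en/k^{'})^{k^{'}}$ supports then controls the whole event; the hypothesis $m\ge C_{6}^{[1]}(\delta-2\varepsilon)^{-2}k^{'}\log(en/k^{'})$ is precisely what lets the concentration exponent absorb the combinatorial factor $\exp\!\big(O(k^{'}\log(en/k^{'}))\big)$, leaving the claimed $2\exp(-C_{6}^{[2]}(\delta-2\varepsilon)^{2}m)$ with $C_{6}^{[1]},C_{6}^{[2]}$ depending only on $L$.

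The hard part will be the constant bookkeeping around $\varepsilon$: I must verify that the exact-norm assumption in \cite{vershynin2010introduction} enters \emph{only} through the diagonal term above, and that the factor lost in the net approximation combines with the $\varepsilon$ diagonal slack to produce exactly the effective deviation $\delta-2\varepsilon$ (rather than $\delta-\varepsilon$ or $\delta-c\varepsilon$). A secondary technical point is the concentration step itself: because the columns are independent sub-Gaussian \emph{vectors} whose entries need not be independent, the deviation of $\frac{1}{m}\|\mathbf{A}_{\mathcal{S}}\mathbf{x}\|^{2}$ cannot be reduced to a sum of independent scalar squares, and must instead be handled through the sub-Gaussian/sub-exponential vector machinery of \cite{vershynin2010introduction}, with the sub-Gaussian norm $L$ propagating into both constants.
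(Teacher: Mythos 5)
Your proposal is correct and takes essentially the same route as the paper: the paper's own proof is a short sketch that retraces Vershynin's Theorem 5.65 (via Theorem 5.58), replacing the exact column normalization in equations (5.40)--(5.41) of \cite{vershynin2010introduction} with the two-sided bound $1-\varepsilon\leq\frac{1}{m}\left\Vert \mathbf{A}(l)\right\Vert ^{2}\leq1+\varepsilon$, which is precisely your diagonal/off-diagonal split with the $\varepsilon$-slack absorbed into the deviation budget $\delta-2\varepsilon$. Your write-up merely makes explicit the net argument, the sub-exponential concentration of the off-diagonal chaos, and the union bound over $\binom{n}{k'}$ supports that the paper delegates wholesale to \cite{vershynin2010introduction}.
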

\begin{proof}
(Sketch) The result can be easily obtained by following the proof
of Thm 5.65, which is based on Thm 5. 58 in \cite{vershynin2010introduction}.
In deriving the probability bound for $|||\frac{1}{m}\mathbf{A}^{H}\mathbf{A}-\mathbf{I}|||_{k^{'}}>\delta$,
instead of substituting $\frac{1}{m}\left\Vert \mathbf{A}(l)\right\Vert ^{2}=1$
into equation (5.40) and (5.41) of \cite{vershynin2010introduction},
we shall incorporate a dynamic $1-\varepsilon\leq\frac{1}{m}\left\Vert \mathbf{A}(l)\right\Vert ^{2}\leq1+\varepsilon$
into equation (5.41) of \cite{vershynin2010introduction}. Using the
subsequent results of equation (5.41) in \cite{vershynin2010introduction}
and proposition 5.66 of \cite{vershynin2010introduction}, Lemma \ref{Sub-gaussian-property}
is obtained. 
\end{proof}

Note that when $\varepsilon=0$, Lemma \ref{Sub-gaussian-property}
is reduced to Thm 5.65 of \cite{vershynin2010introduction}. First
note that $\mathbb{E}\left(M\mathbf{h}_{k}\mathbf{h}_{k}\right)=\mathbf{I}$
and given $\mathcal{F}_{3}$, $\{\mathbf{h}_{k}:k=1,..,K\}$ are independent
sub-Gaussian vectors with the sub-Gaussian norm bounded by {\small{}
\[
L=\max_{l\in\{1,..,K\}}\left\Vert \sqrt{M}\mathbf{h}_{l}\right\Vert _{\psi}\leq\bar{g}\sup_{p\geq1}\sqrt{2}p^{-\frac{1}{2}}\Gamma\left(\frac{p}{2}+1\right)^{\frac{1}{p}}
\]
}where $\sup_{p\geq1}\sqrt{2}p^{-\frac{1}{2}}\Gamma\left(\frac{p}{2}+1\right)^{\frac{1}{p}}$
is an bounded constant \cite{vershynin2010introduction}\@. Therefore,
given $\mathcal{F}_{3}$, $\sqrt{M}\mathbf{H}$ satisfies the three
conditions of $\mathbf{A}$ in Lemma \ref{Sub-gaussian-property}.
Applying Lemma \ref{Sub-gaussian-property} with $\delta$ replaced
by $\frac{C_{4}\delta}{\sqrt{\log KN_{c}}}$ and $\varepsilon$ replaced
by $\frac{C_{4}\delta}{4\sqrt{\log KN_{c}}}$, we obtain that when
$M\geq4C_{6}^{[1]}C_{4}^{-2}\delta^{-2}\log(KN_{c})k^{'}\log\left(\frac{eK}{k^{'}}\right)$,{\footnotesize{}
\begin{equation}
\Pr\left(|||\mathbf{H}^{H}\mathbf{H}-\mathbf{I}|||_{k^{'}}\geq\frac{C_{4}\delta}{\sqrt{\log KN_{c}}}\mid\mathcal{F}_{3}\right)\leq2\exp\left(-\frac{C_{6}^{[2]}(C_{4}\delta)^{2}}{4\log KN_{c}}M\right).\label{eq:upper_bound}
\end{equation}
}From equation (\ref{eq:upper_bound}), $M\geq C_{1}\delta^{-2}k^{'}\log\left(KN_{c}\right)\log\left(2KN_{c}\right)\cdot\log\left(\frac{eK}{k^{'}}\right)$
where $C_{1}=\max\left(4C_{6}^{[1]}C_{4}^{-2},\;\frac{8}{C_{6}^{[2]}}C_{4}^{-2},\; C_{5}\right)$,
we derive equation (\ref{eq:result_k}).

\subsection{\label{sub:Proof-of-Lemma-II}Proof of Lemma \ref{Probability-Bounds-III}}

Denote the $r$-th row vector of $\Theta$ as $\mathbf{z}_{r}\in\mathbb{C}^{1\times KN_{c}}$.
We have 
\begin{equation}
Z\triangleq|||\Theta^{H}\Theta-\mathbf{I}|||_{k}=|||\sum_{r=1}^{MR}\mathbf{z}_{r}^{H}\mathbf{z}_{r}-\mathbf{I}|||_{k}.\label{eq:target_expression}
\end{equation}
To prove $\mbox{Pr}\left(Z>\delta\mid\mathcal{\mathcal{F}}_{1}\mathcal{\mathcal{F}}_{2}\right)\leq\frac{2}{KN_{c}}$
in Lemma \ref{Probability-Bounds-III}, it suffices to prove that
for any given channel realization $\mathcal{H}\triangleq\{H_{ik}^{[c]}:\forall i,k,c\}$
where $\mathcal{\mathcal{F}}_{1}\mathcal{\mathcal{F}}_{2}$ are satisfied,
the following equation is satisfied:
\begin{equation}
\mbox{Pr}\left(Z>\delta\mid\mathcal{H},\mathcal{F}_{1}\mathcal{F}_{2}\right)\leq\frac{2}{KN_{c}}.\label{eq:target_lemma_new}
\end{equation}
We then prove (\ref{eq:target_lemma_new}) below and the main proof
flow follows from \cite{BeyongN2010Baraniuk}. First of all, we introduce
the following tools (Lemma \ref{Rudelson-Vershynin}-\ref{Tail-ProbabilityLet})
from \cite{BeyongN2010Baraniuk}. 
\begin{lemma}
[Rudelson-Vershynin \cite{BeyongN2010Baraniuk}]\label{Rudelson-Vershynin}Suppose
that $\{\mathbf{z}_{r}\}$ is a sequence of $MR$ vectors in $\mathbb{C}^{KN_{c}}$,
where $MR\leq KN_{c}$, and $||\mathbf{z}_{r}||_{\infty}\leq B$ $\forall r\in\{1,..,MR\}$.
Let $\{\xi_{r}\in\{-1,+1\}\}$ be an independent Rademacher series.
Then {\small{}
\[
\mathbb{E}|||\sum_{r}\xi_{r}\mathbf{z}_{r}^{H}\mathbf{z}_{r}|||_{k}\leq\beta\left(|||\sum_{r}\mathbf{z}_{r}^{H}\mathbf{z}_{r}|||_{k}\right)^{\frac{1}{2}},
\]
}where $\beta\leq C_{7}B\sqrt{k}\log^{2}\left(KN_{c}\right)$ and
$C_{7}$ is an absolute constant. 
\end{lemma}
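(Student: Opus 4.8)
The plan is to read the left-hand side as the expected supremum of a symmetric Rademacher process and to bound it through Dudley's entropy integral, which is the route underlying the Rudelson--Vershynin estimate. Introduce the set of $k$-sparse unit vectors $U_{k}\triangleq\{\mathbf{u}\in\mathbb{C}^{KN_{c}\times1}:\|\mathbf{u}\|_{2}=1,\ \|\mathbf{u}\|_{0}\leq k\}$. Since $\sum_{r}\xi_{r}\mathbf{z}_{r}^{H}\mathbf{z}_{r}$ is Hermitian, the definition of $|||\cdot|||_{k}$ gives
\[
|||\sum_{r}\xi_{r}\mathbf{z}_{r}^{H}\mathbf{z}_{r}|||_{k}=\sup_{\mathbf{u}\in U_{k}}\Big|\sum_{r}\xi_{r}\,|\mathbf{z}_{r}\mathbf{u}|^{2}\Big|.
\]
Conditioned on $\{\mathbf{z}_{r}\}$, the right-hand side is the supremum of a mean-zero Rademacher chaos indexed by $U_{k}$, which is sub-Gaussian with respect to the pseudometric $\rho(\mathbf{u},\mathbf{v})^{2}=\sum_{r}\big(|\mathbf{z}_{r}\mathbf{u}|^{2}-|\mathbf{z}_{r}\mathbf{v}|^{2}\big)^{2}$. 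First I would invoke Dudley's inequality, which bounds this expected supremum by $C\int_{0}^{\infty}\sqrt{\log N(U_{k},\rho,\varepsilon)}\,d\varepsilon$.

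The crucial second step factorizes the chaos metric so as to extract the desired $\mathcal{R}^{1/2}$ factor, where $\mathcal{R}\triangleq|||\sum_{r}\mathbf{z}_{r}^{H}\mathbf{z}_{r}|||_{k}=\sup_{\mathbf{u}\in U_{k}}\sum_{r}|\mathbf{z}_{r}\mathbf{u}|^{2}$ is exactly the quantity appearing under the square root in the statement. Using the identity $|\mathbf{z}_{r}\mathbf{u}|^{2}-|\mathbf{z}_{r}\mathbf{v}|^{2}=(\mathbf{z}_{r}\mathbf{u})\overline{\mathbf{z}_{r}(\mathbf{u}-\mathbf{v})}+(\mathbf{z}_{r}(\mathbf{u}-\mathbf{v}))\overline{\mathbf{z}_{r}\mathbf{v}}$ together with $\sum_{r}(|\mathbf{z}_{r}\mathbf{u}|+|\mathbf{z}_{r}\mathbf{v}|)^{2}\leq4\mathcal{R}$, one obtains $\rho(\mathbf{u},\mathbf{v})\leq2\sqrt{\mathcal{R}}\,d(\mathbf{u},\mathbf{v})$ with the simpler linear metric $d(\mathbf{u},\mathbf{v})\triangleq\max_{r}|\mathbf{z}_{r}(\mathbf{u}-\mathbf{v})|$. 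Rescaling the covering numbers then pulls $2\sqrt{\mathcal{R}}$ out of the entropy integral and reduces the problem to estimating $\int_{0}^{\infty}\sqrt{\log N(U_{k},d,\varepsilon)}\,d\varepsilon$.

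To bound the latter I would exploit the $\ell_{\infty}$ hypothesis: because $\mathbf{u}-\mathbf{v}$ is supported on at most $2k$ coordinates, $|\mathbf{z}_{r}(\mathbf{u}-\mathbf{v})|\leq B\sqrt{2k}\,\|\mathbf{u}-\mathbf{v}\|_{2}$, so the diameter of $U_{k}$ under $d$ is $\lesssim B\sqrt{k}$. The covering of $U_{k}$ splits into two scales: choosing the support costs $\log\binom{KN_{c}}{k}=O(k\log(KN_{c}))$, while covering the sphere inside a fixed support is volumetric. For the coarse-scale balls I would use a Maurey-type empirical-method argument, which converts the $\ell_{\infty}$ bound $B$ into a covering estimate carrying one logarithmic factor in $KN_{c}$; balancing this against the volumetric bound at the correct cross-over radius and integrating yields $\int_{0}^{\infty}\sqrt{\log N(U_{k},d,\varepsilon)}\,d\varepsilon=O(B\sqrt{k}\log^{2}(KN_{c}))$. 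Collecting the factors gives $\beta\leq C_{7}B\sqrt{k}\log^{2}(KN_{c})$ for an absolute constant $C_{7}$.

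The hard part will be the entropy-integral estimate and pinning down the power of the logarithm. The metric $\rho$ is that of a degree-two chaos, so it is not itself a norm, and the factorization extracting $\sqrt{\mathcal{R}}$ must be arranged so that the residual linear metric $d$ is genuinely controlled by $B$ and the sparsity $k$ alone; this is the delicate coupling. Matching the Maurey and volumetric covering regimes at the right scale is what produces the $\log^{2}(KN_{c})$ factor, and, as already noted in the discussion of Theorem~\ref{RIP-of-The-M}, this logarithmic power is almost certainly loose and parasitic to the chaining method rather than intrinsic. Since the statement is quoted verbatim from \cite{BeyongN2010Baraniuk}, I would ultimately defer to their argument for the absolute constant, but the symmetrization-plus-Dudley skeleton above is the derivation I would reconstruct.
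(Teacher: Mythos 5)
The paper does not prove this lemma at all: it is imported verbatim from \cite{BeyongN2010Baraniuk} (where it is itself an adaptation of Rudelson--Vershynin), so there is no in-paper argument to compare yours against. Your reconstruction is the canonical proof of that result, and the skeleton --- rewrite the $|||\cdot|||_{k}$ norm as a supremum over $k$-sparse unit vectors, apply Dudley's entropy integral, factor the metric as $\rho(\mathbf{u},\mathbf{v})\leq 2\sqrt{\mathcal{R}}\,d(\mathbf{u},\mathbf{v})$ with $d(\mathbf{u},\mathbf{v})=\max_{r}|\mathbf{z}_{r}(\mathbf{u}-\mathbf{v})|$ to extract the $\mathcal{R}^{1/2}$ factor, then bound the residual entropy integral by combining a volumetric count (with the $\binom{KN_{c}}{k}$ support choices) at small scales and a Maurey empirical-method count at large scales --- is exactly the route taken in the cited source. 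Two small points. First, conditioned on $\{\mathbf{z}_{r}\}$ the quantity $\sum_{r}\xi_{r}|\mathbf{z}_{r}\mathbf{u}|^{2}$ is \emph{linear} in the Rademacher signs (each $\xi_{r}$ multiplies a fixed scalar), so it is a plain Rademacher average, not a chaos; sub-Gaussianity with respect to $\rho$ is then immediate from Hoeffding's inequality, which actually makes your ``delicate coupling'' worry moot --- the only genuinely delicate step is the entropy integral itself. Second, that entropy estimate, where the $\sqrt{k}$, the $B$, and the power of the logarithm are all actually earned, is left at the level of ``I would use a Maurey-type argument''; that is the entire technical content of the lemma, so as a standalone proof this is a correct plan rather than a complete derivation. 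Given that the paper itself defers to the reference for exactly the same content, this is an acceptable level of detail here.
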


\begin{lemma}
[Tail Probability (\cite{BeyongN2010Baraniuk}, Proposition 19)]\label{Tail-ProbabilityLet}Let
$\{\mathbf{Y}_{r}\}$ be a sequence of independent random matrices
and $|||\mathbf{Y}_{r}|||_{k}\leq B$, $\forall r$. Let $Y=|||\sum_{r}\mathbf{Y}_{r}|||_{k}$.
Then 
\[
\mbox{Pr}\left\{ Y>C_{8}[u\mathbb{E}Y+tB]\right\} \leq e^{-u^{2}}+e^{-t}
\]
for all $u,$ $t\geq1$, where $C_{8}$ is an absolute constant.\hfill \IEEEQED
\end{lemma}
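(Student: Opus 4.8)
The plan is to read the statement as a deviation inequality for the norm of a sum of independent, bounded, Banach-space-valued summands, and to extract the two-regime tail $e^{-u^{2}}+e^{-t}$ from Talagrand's concentration inequality for suprema of empirical processes. The first move is to observe that $|||\cdot|||_{k}$ is a genuine norm, so the matrices form a normed space and $Y$ is the norm of the sum $\sum_{r}\mathbf{Y}_{r}$ of independent elements, each of norm at most $B$. I would then linearize: since $\|\mathbf{B}_{\mathcal{S}\times\mathcal{S}}\|=\sup\{|\mathbf{u}^{H}\mathbf{B}\mathbf{v}|:\|\mathbf{u}\|=\|\mathbf{v}\|=1,\ \mathrm{supp}\,\mathbf{u},\mathrm{supp}\,\mathbf{v}\subseteq\mathcal{S}\}$, one can write $|||\mathbf{B}|||_{k}=\sup_{f\in\mathcal{F}}f(\mathbf{B})$ for the symmetric family $\mathcal{F}=\{\mathbf{B}\mapsto\mathbf{u}^{H}\mathbf{B}\mathbf{v}\}$ indexed by unit vectors supported on a common index set of size at most $k$. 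By linearity $f(\sum_{r}\mathbf{Y}_{r})=\sum_{r}f(\mathbf{Y}_{r})$, so $Y=\sup_{f\in\mathcal{F}}\sum_{r}f(\mathbf{Y}_{r})$ is exactly the supremum of an empirical process built from independent summands with bounded increments $|f(\mathbf{Y}_{r})|\le|||\mathbf{Y}_{r}|||_{k}\le B$.

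With $Y$ in this form I would apply Talagrand's concentration inequality for suprema of independent bounded processes (in Bousquet's sharp form), which gives
\[
\Pr\left(Y\ge\mathbb{E}Y+s\right)\le\exp\left(-\frac{s^{2}}{2\left(\sigma^{2}+2B\,\mathbb{E}Y\right)+\tfrac{2}{3}Bs}\right),
\]
where $\sigma^{2}=\sup_{f\in\mathcal{F}}\sum_{r}\mathbb{E}\,|f(\mathbf{Y}_{r})|^{2}$ is the weak variance. The one genuinely substantive estimate is to control $\sigma^{2}$ by the mean: using $|f(\mathbf{Y}_{r})|\le B$ gives $\sigma^{2}\le B\sup_{f}\sum_{r}\mathbb{E}|f(\mathbf{Y}_{r})|\le B\,\mathbb{E}\sup_{f}\sum_{r}|f(\mathbf{Y}_{r})|$, and a symmetrization step followed by the Rademacher contraction principle applied to the $1$-Lipschitz map $x\mapsto|x|$ bounds the last expectation by $C\,\mathbb{E}Y$. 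Hence $\sigma^{2}\le C B\,\mathbb{E}Y$ and the denominator above collapses to $C'(B\,\mathbb{E}Y+Bs)$.

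Finally I would read off the two regimes and match them to $u$ and $t$. For moderate deviations $s\lesssim\mathbb{E}Y$ the bound is sub-Gaussian, $\exp(-s^{2}/C'B\,\mathbb{E}Y)$, with scale $\sqrt{B\,\mathbb{E}Y}$; for large deviations it is sub-exponential, $\exp(-s/C'B)$, with scale $B$. Taking the deviation level to be $C_{8}(u\,\mathbb{E}Y+tB)$, the component $C_{8}tB$ forces $s/B\gtrsim t$ and contributes $e^{-t}$, while the component $C_{8}u\,\mathbb{E}Y$ sits a factor $\sqrt{\mathbb{E}Y/B}\ge1$ above the sub-Gaussian scale $\sqrt{B\,\mathbb{E}Y}$ and hence contributes at most $e^{-u^{2}}$ (with room to spare). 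Summing the two regime bounds and absorbing all numerical factors into $C_{8}$ yields $\Pr\{Y>C_{8}(u\,\mathbb{E}Y+tB)\}\le e^{-u^{2}}+e^{-t}$. The main obstacle I anticipate is the weak-variance estimate $\sigma^{2}\lesssim B\,\mathbb{E}Y$, since the quantity $\sup_{f}\sum_{r}\mathbb{E}|f(\mathbf{Y}_{r})|$ is non-linear in the summands and must be brought back to $\mathbb{E}Y$ by contraction and symmetrization; the remaining difficulty is purely the bookkeeping that merges the sub-Gaussian and sub-exponential regimes into the single clean form $u\,\mathbb{E}Y+tB$, for which the hypotheses $u,t\ge1$ are used to absorb the cross terms.
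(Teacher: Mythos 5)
First, a point of reference: the paper does not prove this lemma at all. It is imported verbatim as Proposition~19 of \cite{BeyongN2010Baraniuk} and used as a black box (note the immediate QED with no proof in the appendix), so there is no in-paper argument to compare yours against. Note also that the statement as reproduced here silently drops a hypothesis present in the original, namely that the $\mathbf{Y}_{r}$ be symmetric (or at least centered); your Talagrand/Bousquet route needs this too (Bousquet's inequality is for centered increments), and you should either assume it or center first via $Y\leq|||\sum_{r}(\mathbf{Y}_{r}-\mathbb{E}\mathbf{Y}_{r})|||_{k}+\mathbb{E}Y$. In the paper's application the $\mathbf{Y}_{r}$ are explicitly symmetrized, so this is harmless there.

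Your architecture (realize $Y$ as the supremum of an empirical process over the symmetric class $f:\mathbf{B}\mapsto\mathbf{u}^{H}\mathbf{B}\mathbf{v}$ and apply Bousquet) is viable, but the step you yourself single out as the substantive one is false. The claimed estimate $\sigma^{2}\leq B\sup_{f}\sum_{r}\mathbb{E}|f(\mathbf{Y}_{r})|\leq CB\,\mathbb{E}Y$ fails: take $\mathbf{Y}_{r}=\xi_{r}B\,\mathbf{e}_{1}\mathbf{e}_{1}^{H}$ with $\xi_{r}$ i.i.d.\ Rademacher and $f(\mathbf{B})=\mathbf{e}_{1}^{H}\mathbf{B}\mathbf{e}_{1}$; then $\sum_{r}\mathbb{E}|f(\mathbf{Y}_{r})|=nB$ and $\sigma^{2}=nB^{2}$, while $\mathbb{E}Y=B\,\mathbb{E}|\sum_{r}\xi_{r}|\asymp B\sqrt{n}$, so $\sigma^{2}\asymp\sqrt{n}\cdot B\,\mathbb{E}Y$. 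The symmetrization-plus-contraction device you invoke controls the Rademacher-signed quantity $\mathbb{E}\sup_{f}|\sum_{r}\epsilon_{r}|f(\mathbf{Y}_{r})||$, not the unsigned $\sup_{f}\sum_{r}\mathbb{E}|f(\mathbf{Y}_{r})|$, and the example shows the two differ by a factor $\sqrt{n}$ that no contraction argument can remove. The correct weak-variance bound is obtained differently: for centered independent summands, $\sum_{r}\mathbb{E}f(\mathbf{Y}_{r})^{2}=\mathbb{E}\,f\bigl(\sum_{r}\mathbf{Y}_{r}\bigr)^{2}\leq\mathbb{E}Y^{2}$, and the Hoffmann--J{\o}rgensen inequality for sums of independent bounded symmetric Banach-valued variables gives $\mathbb{E}Y^{2}\leq C\bigl((\mathbb{E}Y)^{2}+B^{2}\bigr)$, i.e.\ $\sigma\leq C(\mathbb{E}Y+B)$. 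With this, the Bousquet exponent at deviation $s\asymp u\,\mathbb{E}Y+tB$ is at least $c\,s^{2}/\bigl((\mathbb{E}Y)^{2}+B^{2}+Bs\bigr)\geq c'\min(u^{2},t)$, which yields the claim; this also repairs your unjustified assertion that $\sqrt{\mathbb{E}Y/B}\geq1$, which presumes $\mathbb{E}Y\geq B$. So the approach is salvageable, but the variance step as written is a genuine gap.
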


Denote $\mathbb{E}_{|\mathcal{H},\mathcal{F}_{1}\mathcal{F}_{2}}(Z)\triangleq\mathbb{E}(Z\mid\mathcal{H},\mathcal{F}_{1}\mathcal{F}_{2})$
for simplicity. Based on Lemma \ref{Rudelson-Vershynin}-\ref{Tail-ProbabilityLet},
we obtain the following two equations 
\begin{equation}
\mathbb{E}_{|\mathcal{H},\mathcal{F}_{1}\mathcal{F}_{2}}(Z)\leq\frac{C_{9}\delta}{\sqrt{\log KN_{c}}},\label{eq:first_1}
\end{equation}
\begin{equation}
\mbox{Pr}\left(Z>C_{10}\delta\mid\mathcal{H},\mathcal{F}_{1}\mathcal{F}_{2}\right)\leq\frac{2}{KN_{c}}\label{eq:first_2}
\end{equation}
where $C_{9}=\left(\sqrt{\frac{16C_{3}^{4}C_{7}^{4}\bar{g}^{4}}{C_{2}^{2}}}+\sqrt{\frac{4C_{3}^{2}C_{7}^{2}\bar{g}^{2}}{C_{2}}}+2C_{4}\right)$,
$C_{10}=\left(2C_{9}C_{8}+\left(\frac{2C_{3}^{2}\bar{g}^{2}}{C_{2}}\right)C_{8}+2C_{9}\right)$.
Note that (i) equation (\ref{eq:first_2}) can derive our target equation
(\ref{eq:first_2}) by properly selecting the constants $C_{4}$ and
$C_{2}$ (e.g., $C_{4}=\frac{1}{16(C_{8}+1)}$, $C_{2}=\max\left(64(2C_{8}+2)^{2}C_{7}^{2}C_{3}^{2}\bar{g}^{2},\;8C_{8}C_{3}^{2}\bar{g}^{2}\right)$
so that $C_{10}\leq1$; (ii) equation (\ref{eq:first_2}) is derived
based on equation (\ref{eq:first_1}){\small{}. Next we focus on proving
equation (\ref{eq:first_1}) and (\ref{eq:first_2}) in the following. }{\small \par}

\subsubsection{Proof of equation (\ref{eq:first_1})}

Using the triangular inequality, we obtain{\small{}
\begin{align}
\mathbb{E}_{|\mathcal{H},\mathcal{F}_{1}\mathcal{F}_{2}}(Z) & \leq\underset{\triangleq E_{1}}{\underbrace{|||\mathbb{E}_{|\mathcal{H},\mathcal{F}_{1}\mathcal{F}_{2}}\left(\sum_{r=1}^{MJ}\mathbf{z}_{r}^{H}\mathbf{z}_{r}\right)-\mathbf{I}|||_{k}}}+\label{eq:mat}\\
 & \underset{\triangleq E_{2}}{\underbrace{\mathbb{E}_{|\mathcal{H},\mathcal{F}_{1}\mathcal{F}_{2}}|||\sum_{r=1}^{MJ}\left(\mathbf{z}_{r}^{H}\mathbf{z}_{r}-\mathbb{E}_{|\mathcal{H},\mathcal{F}_{1}\mathcal{F}_{2}}\left(\mathbf{z}_{r}^{H}\mathbf{z}_{r}\right)\right)|||_{k}}}\nonumber 
\end{align}
}From event $\mathcal{F}_{2}$ given in (\ref{eq:F_2_event}), and
the fact that {\small{}
\[
\mathbb{E}_{|\mathcal{H},\mathcal{F}_{1}\mathcal{F}_{2}}\left(\sum_{r=1}^{MJ}\mathbf{z}_{r}^{H}\mathbf{z}_{r}\right)=\left[\begin{array}{ccc}
\ddots & \mathbf{0}\\
\mathbf{0} & \left(\mathbf{H}^{[c]}\right)^{H}\mathbf{H}^{[c]} & \mathbf{0}\\
 & \mathbf{0} & \ddots
\end{array}\right]_{c\in\{1,2,...,N_{c}\}}
\]
}We obtain $E_{1}\leq\frac{C_{4}\delta}{\sqrt{\log KN_{c}}}$. Let
$\{\xi_{r}\in\{-1,+1\}\}$ be an independent Rademacher series \cite{BeyongN2010Baraniuk}.
We obtain, 
\begin{eqnarray}
E_{2} & \overset{(a_{1})}{\leq} & 2\mathbb{E}_{|\mathcal{H},\mathcal{F}_{1}\mathcal{F}_{2}}|||\sum_{r=1}^{MJ}\xi_{r}\mathbf{z}_{r}^{H}\mathbf{z}_{r}|||_{k}\label{eq:rudeson}\\
 & \overset{(a_{2})}{\leq} & 2\mathbb{E}_{|\mathcal{H},\mathcal{F}_{1}\mathcal{F}_{2}}\left(\beta|||\sum_{r=1}^{MJ}\mathbf{z}_{r}^{H}\mathbf{z}_{r}|||^{\frac{1}{2}}\right)\nonumber \\
 & \overset{(a_{3})}{\leq} & 2\left[\mathbb{E}_{|\mathcal{H},\mathcal{F}_{1}\mathcal{F}_{2}}\left(\beta^{2}\right)\right]^{\nicefrac{1}{2}}\left(\mathbb{E}_{|\mathcal{H},\mathcal{F}_{1}\mathcal{F}_{2}}(Z)+1\right)^{\nicefrac{1}{2}}\nonumber 
\end{eqnarray}
where $(a_{1})$ uses the symmetrization property (i.e., Lemma 5.46
of \cite{vershynin2010introduction}), $(a_{2})$ uses the Rudelson-Vershynin
Lemma and $(a_{3})$ comes from the Cauch-Schwarz inequality. When
$\mathcal{F}_{1}\mathcal{F}_{2}$ happens, we obtain $||\mathbf{z}_{r}||_{\infty}\leq C_{3}\bar{g}\sqrt{\frac{\log(KN_{c})}{MR}}$,
$\forall r$ and hence {\small{}
\[
\left[\mathbb{E}_{|\mathcal{H},\mathcal{F}_{1}\mathcal{F}_{2}}\left(\beta^{2}\right)\right]^{\nicefrac{1}{2}}\leq C_{7}C_{3}\bar{g}\sqrt{\frac{k\log^{5}(KN_{c})}{MR}}\leq\frac{C_{7}C_{3}\bar{g}\delta}{\sqrt{C_{2}\log KN_{c}}}
\]
} in (\ref{eq:rudeson}). Suppose $C_{2}\geq C_{7}^{2}C_{3}^{2}\bar{g}^{2}$
so that $\frac{C_{7}C_{3}\bar{g}\delta}{\sqrt{C_{2}\log KN_{c}}}\leq\frac{\delta}{\sqrt{\log KN_{c}}}$.
From (\ref{eq:mat}) and (\ref{eq:rudeson}), we obtain {\small{}
\[
\mathbb{E}_{|\mathcal{H},\mathcal{F}_{1}\mathcal{F}_{2}}(Z)\leq\frac{C_{4}\delta}{\sqrt{\log KN_{c}}}+\frac{2C_{7}C_{3}\bar{g}\delta}{\sqrt{C_{2}\log KN_{c}}}\left(\mathbb{E}_{|\mathcal{H},\mathcal{F}_{1}\mathcal{F}_{2}}(Z)+1\right)^{\nicefrac{1}{2}}
\]
}which further derives our target equation (\ref{eq:first_1}).

\subsubsection{Proof of equation (\ref{eq:first_2})}

We prove (\ref{eq:first_2}) based on (\ref{eq:first_1}). First,
define the symmetrized random variable
\[
\mathbf{Y}_{r}=\varepsilon_{r}\left(\mathbf{z}_{r}^{H}\mathbf{z}_{r}-\mathbf{\tilde{z}}_{r}^{H}\mathbf{\tilde{z}}_{r}\right),\; Y=|||\sum_{r=1}^{MJ}\varepsilon_{r}\left(\mathbf{z}_{r}^{H}\mathbf{z}_{r}-\mathbf{\tilde{z}}_{r}^{H}\mathbf{\tilde{z}}_{r}\right)|||_{k}
\]
where $\{\varepsilon_{r}:r=1,...,MR\}$ are an i.i.d. Radermancher
sequence and $\mathbf{\tilde{z}}_{r}$ is an independent copy of $\mathbf{z}_{r}$.
Therefore, 
\[
\mathbb{E}_{|\mathcal{H},\mathcal{F}_{1}\mathcal{F}_{2}}Y\leq2\mathbb{E}_{|\mathcal{H},\mathcal{F}_{1}\mathcal{F}_{2}}\left(Z\right)\leq\frac{2C_{9}\delta}{\sqrt{\log(KN_{c})}}.
\]

From the symmetrization property (i.e., equation (6.1) in \cite{vakhania1987probability}),
we obtain {\small{}
\begin{equation}
\mbox{Pr}\left(Z>2\mathbb{E}(Z)+u\mid\mathcal{H},\mathcal{F}_{1}\mathcal{F}_{2}\right)\leq2\mathbb{P}\left(Y>u\mid\mathcal{H},\mathcal{F}_{1}\mathcal{F}_{2}\right).\label{eq:property1}
\end{equation}
}On the other hand, given $\mathcal{F}_{1}\mathcal{F}_{2}$, we have,

\begin{align*}
 & \max_{r}|||\mathbf{Y}_{r}|||_{k}=\max_{r}|||\mathbf{z}_{r}^{H}\mathbf{z}_{r}-\mathbf{\tilde{z}}_{r}^{H}\mathbf{\tilde{z}}_{r}|||_{k}\leq2\max_{r}|||\mathbf{z}_{r}^{H}\mathbf{z}_{r}|||_{k}\\
 & \leq2k\varphi_{max}^{2}\leq2C_{3}^{2}\bar{g}^{2}\frac{k\log(KN_{c})}{MR}\leq\left(\frac{2C_{3}^{2}\bar{g}^{2}}{C_{2}}\right)\frac{\delta}{\log(KN_{c})}
\end{align*}
Note that under a given realization $\mathcal{H}$, different rows
vectors $\{\mathbf{z}_{r}:\forall r\}$ are independent from (\ref{eq:expansion})
and hence $\{\mathbf{Y}_{r}:\forall r\}$ are also independent. Therefore,
the tail probability property in Lemma \ref{Tail-ProbabilityLet}
can be applied. Applying Lemma \ref{Tail-ProbabilityLet} with $u=\sqrt{\log(KN_{c})}$
and $t=\log(KN_{c})$, using equation (\ref{eq:property1}), we obtain{\small{}
\begin{eqnarray}
 &  & \mbox{Pr}\left(Z>\left(2C_{9}C_{8}+\left(\frac{2C_{3}^{2}\bar{g}^{2}}{C_{2}}\right)C_{8}+2C_{9}\right)\delta\mid\mathcal{H},\mathcal{F}_{1}\mathcal{F}_{2}\right)\nonumber \\
 & \leq & 2\mbox{Pr}\left(Y>\left(2C_{9}C_{8}+\left(\frac{2C_{3}^{2}\bar{g}^{2}}{C_{2}}\right)\right)\delta\mid\mathcal{H},\mathcal{F}_{1}\mathcal{F}_{2}\right)\nonumber \\
 & \leq & 2\mbox{Pr}\left(Y>C_{3}\left\{ u\mathbb{E}_{|\mathcal{H},\mathcal{F}_{1}\mathcal{F}_{2}}Y+t\frac{\delta}{\log(KN_{c})}\right\} \mid\mathcal{H},\mathcal{F}_{1}\mathcal{F}_{2}\right)\nonumber \\
 & \leq & 2e^{-\log(KN_{c})}=\frac{2}{KN_{c}}.\label{eq:equation}
\end{eqnarray}
}and hence equation (\ref{eq:first_2}) is obtained.

\subsection{\label{sub:Proof-of-Theorem-rate}Proof of Theorem \ref{Bound-of-Average}}

First, we can always upper bound the average data rate by the capacity
obtained in the ideal case in which the signal support of $\mathbf{x}$
is correct, i.e., $\hat{\mathcal{T}}=\mathcal{T}$. Therefore,{\small{}
\begin{equation}
\mathbb{E}(R_{sum})\leq\mathbb{E}\left(\sum_{i=1}^{s}\log\left(1+\frac{P}{\alpha_{i}}\right)\right).\label{eq:upper}
\end{equation}
}where $\{\alpha_{i}:i=1,...,s\}$ are the diagonal elements of the
Hermitian matrix $\Psi=(\mathbf{\Theta}_{\mathcal{T}})^{\dagger}\left(\mathbf{A}\mathbf{A}^{H}\right)\left((\mathbf{\Theta}_{\mathcal{T}})^{\dagger}\right)^{H}$.
Denote the SVD of $\mathbf{A}_{i}$ as $\mathbf{A}_{i}=\mathbf{U}_{i}\mathbf{\Sigma}_{i}\mathbf{V}_{i}^{H}$,
as in Appendix \ref{sub:Proof-of-Lemma-Bounded}, where $\mathbf{U}_{i}\in\mathbb{C}^{R\times R}$
and $\mathbf{V}_{i}\in\mathbb{C}^{N_{c}\times R}$ are unitary matrices
and $\mathbf{\Sigma}_{i}\in\mathbb{C}^{R\times R}$ is the diagonal
matrix with singular values. Denote {\small{}
\[
\mathbf{H}=\left[\begin{array}{ccc}
\mathbf{H}_{1}^{T} & \cdots & \mathbf{H}_{M}^{T}\end{array}\right]^{T},\quad\mathbf{V}=\textrm{diag}\left(\left[\begin{array}{ccc}
\mathbf{V}_{1} & \cdots & \mathbf{V}_{M}\end{array}\right]\right).
\]
}We obtain $\Psi=\left(\mathbf{H}_{\mathcal{T}}^{H}\mathbf{V}\mathbf{V}^{H}\mathbf{H}_{\mathcal{T}}\right)^{-1}$,
and hence $\sum\frac{1}{\alpha_{i}}=\textrm{tr}\left(\mathbf{H}_{\mathcal{T}}^{H}\mathbf{V}\mathbf{V}^{H}\mathbf{H}_{\mathcal{T}}\right)$.
From the generation method of $\mathbf{A}_{i}$ in Definition \ref{Local-Compression-Matrices},
$\mathbf{V}_{i}^{H}\mathbf{P}_{i}$ has the same distribution as $\mathbf{V}_{i}$
and is independent of $\mathbf{P}_{i}$ for any permutation matrix
$\mathbf{P}_{i}\in\mathbb{C}^{N_{c}\times N_{c}}$. Denote $\mathbf{P}=\textrm{diag}\left(\left[\begin{array}{ccc}
\mathbf{P}_{1} & \cdots & \mathbf{P}_{M}\end{array}\right]\right)$. We obtain {\small{}
\begin{equation}
\mathbb{E}\sum_{i=1}^{s}\frac{1}{\alpha_{i}}=\textrm{tr}\left(\mathbb{E}\left(\mathbf{V}\mathbf{V}^{H}\right)\mathbb{E}\left(\mathbf{P}^{H}\mathbf{H}_{\mathcal{T}}\mathbf{H}_{\mathcal{T}}^{H}\mathbf{P}\right)\right)=\frac{sMR}{N_{c}}=sM\alpha.\label{eq:equality}
\end{equation}
}Based on (\ref{eq:upper}) and (\ref{eq:equality}), from the Jensen's
inequality and the concavity of function $h(x)=\log\left(1+x\right)$,
we obtain
\[
\mathbb{E}\left(\sum_{i=1}^{s}\log\left(1+P\left(\frac{1}{\alpha_{i}}\right)\right)\right)\leq s\log\left(1+M\alpha P\right),
\]
and hence the upper bound is obtained. Next, we prove the lower bound.
From Theorem \ref{Probability-of-Correct}, when $P\gg4c_{2}^{2}N_{c}$
and $\lambda=\left(\frac{PN_{c}}{4c_{2}^{2}}\right)^{\frac{1}{4}}$,
$\Pr(\hat{\mathcal{T}}=\mathcal{T})\geq\Pr\left(\mathcal{\mathcal{E}}_{2s,\delta}\right)$
and we obtain{\small{}
\begin{align}
\mathbb{E}(R_{sum}) & \geq\Pr\left(\mathcal{\mathcal{E}}_{2s,\delta}\right)\cdot\mathbb{E}\left[\sum_{i=1}^{s}\log\left(1+\frac{P}{\alpha_{i}}\right)\mid\mathcal{\mathcal{E}}_{2s,\delta}\right]\label{eq:equation1}
\end{align}
}where $\{\alpha_{i}:i=1,...,s\}$ are the diagonal elements of the
Hermitian matrix {\small{}$\Psi=(\mathbf{\Theta}_{\mathcal{T}})^{\dagger}\left(\mathbf{A}\mathbf{A}^{H}\right)\left((\mathbf{\Theta}_{\mathcal{T}})^{\dagger}\right)^{H}$}.
Based on (\ref{eq:equation1}), to prove the lower bound in Theorem
\ref{Bound-of-Average}, it is sufficient to prove that {\small{}
\begin{equation}
\mathbb{E}\left[\sum_{i=1}^{s}\log\left(1+\frac{P}{\alpha_{i}}\right)\mid\Theta,\mathcal{\mathcal{E}}_{2s,\delta}\right]\geq s\log\left(1+(1-\delta)M\alpha P\right).\label{eq:equation2_target}
\end{equation}
}We then prove (\ref{eq:equation2_target}) in the following. Note
that given $\Theta$ and $\mathcal{\mathcal{E}}_{2s,\delta}$, the
conditional distribution of $\mathbf{P}\mathbf{A}$ is the same as
$\mathbf{A}$ for any diagonal matrix $\mathbf{P}\in\mathbb{C}^{MR\times MR}$
where the diagonal elements of $\mathbf{P}$ are randomly drawn from
\{-1, 1\} with equal probability. Therefore, we obtain{\small{}
\[
\mathbb{E}\left(\mathbf{A}\mathbf{A}^{H}\mid\Theta,\mathcal{\mathcal{E}}_{2s,\delta}\right)=\mathbb{E}\left(\mathbf{P}\mathbf{A}\mathbf{A}^{H}\mathbf{P}\mid\Theta,\mathcal{\mathcal{E}}_{2s,\delta}\right)=\frac{1}{M\alpha}\mathbf{I}.
\]
}Based on this equation, we further obtain{\small{}
\begin{align}
 & \mathbb{E}\left(\sum_{i=1}^{s}\alpha_{i}\mid\Theta,\mathcal{\mathcal{E}}_{2s,\delta}\right)=\mathbb{E}\left(\textrm{tr}(\Psi)\mid\Theta,\mathcal{\mathcal{E}}_{2s,\delta}\right)\nonumber \\
= & \textrm{tr}\left((\mathbf{\Theta}_{\mathcal{T}})^{\dagger}\mathbb{E}\left(\mathbf{A}\mathbf{A}^{H}\mid\Theta,\mathcal{\mathcal{E}}_{2s,\delta}\right)\left((\mathbf{\Theta}_{\mathcal{T}})^{\dagger}\right)^{H}\right)\nonumber \\
= & \frac{1}{M\alpha}\textrm{tr}\left((\mathbf{\Theta}_{\mathcal{T}}^{H}\mathbf{\Theta}_{\mathcal{T}})^{-1}\right)\overset{(d)}{\leq}\frac{s}{(1-\delta)M\alpha}\label{eq:new_sum}
\end{align}
}where $(d)$ comes from the fact that the $s$ eigenvalues of the
Hermitian matrix $\mathbf{\Theta}_{\mathcal{T}}^{H}\mathbf{\Theta}_{\mathcal{T}}$
are bounded in $[1-\delta,1+\delta]$ from the RIP condition (i.e.,
$\mathcal{\mathcal{E}}_{2s,\delta}$) of $\mathbf{\Theta}$. Based
on (\ref{eq:new_sum}), from the Jensen's inequality and the convexity
of function $h(x)=\log(1+\frac{1}{x})$, $\forall x>0$, we derive
(\ref{eq:equation2_target}). Therefore, the lower bound of $\mathbb{E}(R_{sum})$
is proved. 

\bibliographystyle{IEEEtran}
\bibliography{CRAN_REF}

\end{document}